\date{\today}
\newcommand{\bx}{\mathbf{x}} 
\newcommand{\by}{\mathbf{y}}
\newcommand{\bz}{\mathbf{z}}
\newcommand{\bp}{\mathbf{p}}
\newcommand{\nz}{\mathbb{N}}
\newcommand{\cz}{\mathbb{C}}
\newcommand{\rz}{\mathbb{R}}
\newcommand{\N}{\mathbb{N}}
\newcommand{\R}{\mathbb{R}}
\newcommand{\Tr}{\operatorname{Tr}}
\newcommand{\supp}{{\operatorname{supp}}}
\newcommand{\Span}{{\operatorname{span}}}
\newtheorem{lemma}{Lemma}
\newtheorem{theorem}{Theorem}
\newtheorem{proposition}{Proposition}
\newtheorem{remark}{Remark}
\newtheorem*{acknowledgement}{Acknowledgement}
\newenvironment{pf*}[1]{\par\medskip\noindent\textit{#1}\,:}{\hspace*{\fill}\qed\medskip\par\noindent}   
\title{Hartree-Fock theory for pseudorelativistic atoms}
\author[A. Dall'Acqua, T. \O stergaard S\o rensen, and E. Stockmeyer]{Anna
  Dall'Acqua, Thomas \O stergaard S\o rensen, and Edgardo Stockmeyer}
\thanks{\copyright\ 2007 by the
       authors. This article may be reproduced in its entirety for
       non-commercial purposes.}
\address[Anna Dall'Acqua]
{Zentrum Mathematik der
Technischen Universit\"at M\"unchen,
Boltzmannstrasse 3
D-85748 Garching, Germany.}
\email{dallacqu@ma.tum.de}
\address[Thomas \O stergaard S\o rensen]
{Department of Mathematical Sciences,
           Aalborg University,
           Fredrik Bajers Vej 7G,
           DK-9220 Aalborg East, Denmark.}
\email{sorensen@math.aau.dk}
\address[Edgardo Stockmeyer]
{Mathematisches Institut,
Universit\"at M\"unchen,
Theresien\-stra\ss e 39,
D-80333 Munich, Germany.}
\email{stock@math.lmu.de}
\begin{document}

 \begin{abstract}  
  We study the Hartree-Fock model for pseudorelativistic atoms, that
  is, atoms where the kinetic energy of the electrons is given by the 
   pseudo\-relativistic operator $\sqrt{(|\bp|c)^2+(mc^2)^2}-mc^2$. We prove the
   existence of a Hartree-Fock minimizer, and prove regularity away
   from the nucleus and pointwise exponential decay of the
   corresponding orbitals.
 \end{abstract}

\maketitle

\section{Introduction and results}

We consider a model for an atom with $N$ electrons and nuclear charge
$Z$, where the kinetic energy of the electrons is described by
the expression $\sqrt{(|\bp|c)^2+(mc^2)^2}-mc^2$. This model takes into
account some (kinematic) relativistic effects; in units where
\(\hbar=e=m=1\), the Hamiltonian becomes
\begin{align}\label{Hamiltonian}
  H=H_{\rm rel}(N,Z,\alpha)&=\sum_{j=1}^{N}\Big\{ \sqrt{-\alpha ^{-2}\Delta _{j}+\alpha ^{-4}}
  -\alpha ^{-2}-\frac{Z}{|\bx_{j}|}\Big\}
  +\sum_{1\leq i<j\leq N}\frac{1}{|\bx_{i}-\bx_{j}|}  \nonumber \\
  &=\sum_{j=1}^{N}\alpha ^{-1}\Big\{T(-{\rm i}\nabla_{j})-V({\bx}_{j})
  \Big\}
  +\sum_{1\leq i<j\leq N}\frac{1}{|\bx_{i}-\bx_{j}|}\,,  
\end{align}
with
\(T({\bp})=E({\bp})-\alpha^{-1}=\sqrt{|{\bp}|^2+\alpha^{-2}}-\alpha^{-1}\)  
and \(V({\bx})=Z\alpha/|{\bx}|\). Here, $\alpha $ is Sommerfeld's fine
structure constant; physically, \(\alpha\simeq1/137.036\).

The operator $H$ acts on a dense subspace of the
$N$-particle Hilbert space $\mathcal{H}_{F}=\wedge 
_{i=1}^{N}L^{2}(\mathbb{R}^{3};\mathbb{C}^{q})$ of antisymmetric
functions, where $q$ is the number of 
spin states. It is bounded from below on this subspace (more details below).

The \textit{(quantum) ground state energy} is the infimum of the
spectrum of $H$ considered as an operator acting on
$\mathcal{H}_{F}$: 
\begin{equation*}
  E^{{\rm QM}}(N,Z,\alpha ):=\inf \sigma _{\mathcal{H}_{F}}(H)
  =\inf\{\,\mathfrak{q}(\Psi,\Psi)\,|\, \Psi\in \mathcal{Q}(H),
  \langle\Psi,\Psi\rangle=1\} \,, 
\end{equation*}
where \(\mathfrak{q}\) is the quadratic form defined by \(H\), and
\(\mathcal{Q}\) the corresponding form domain (see below); \(\langle\
,\ \rangle\) is the scalar product in \(\mathcal{H}_{F}\subset
L^2(\rz^{3N};\cz^{q^N})\).  

In the Hartree-Fock approximation, instead of minimizing the
functional \(\mathfrak{q}\) in the entire $N$-particle space
\(\mathcal{H}_{F}\), one 
restricts to wavefunctions \(\Psi\) which are pure wedge products,
also called Slater determinants: 
\begin{align}\label{slater}
  \Psi (\bx_{1},\sigma_{1};\bx_{2},\sigma _{2};
  \dots;\bx_{N},\sigma_{N})
  =\frac{1}{\sqrt{N!}}\,\det(u_{i}(\bx_{j},\sigma_{j}))_{i,j=1}^{N}\,, 
\end{align}
with $\{u_{i}\}_{i=1}^N$ orthonormal in
$L^{2}(\mathbb{R}^{3};\mathbb{C}^{q})$ (called {\it orbitals}). Notice 
that this way, \(\Psi\in\mathcal{H}_{F}\) and
$\|\Psi\|_{L^{2}(\mathbb{R}^{3N};\mathbb{C}^{q^N})}=1$.   

The {\it Hartree-Fock ground state energy} is the infimum of the
quadratic form \(\mathfrak{q}\) defined by \(H\) over such Slater
determinants:  
\begin{align}\label{eq:HF-energy}
  E^{{\rm HF}}(N,Z,\alpha):=
  \inf \{\,\mathfrak{q}(\Psi,\Psi) \,|\, \Psi 
  \text{ Slater determinant} \}\,.
\end{align}

For the non-relativistic Hamiltonian, 
\begin{align}\label{eq:non-rel}
  H_{\rm cl}(N,Z)=\sum_{j=1}^{N}\Big\{-\frac{1}{2}\Delta_j
  -\frac{Z}{|\bx_{j}|}\Big\} 
  +\sum_{1\leq i<j\leq N}\frac{1}{|\bx_{i}-\bx_{j}|}\,,
\end{align}
the mathematical theory of this approximation has been much studied,
the groundbreaking work being that of Lieb and
Simon~\cite{LiebSimonHF}; see also \cite{Lions} for work on excited
states. For a comprehensive 
discussion of Hartree-Fock (and other) approximations in quantum
chemistry, and an extensive literature list, we refer to \cite{LeBrisLions}. 

The aim of the present paper is to study the  Hartree-Fock
approximation for the pseudorelativistic operator \(H\) in
\eqref{Hamiltonian}. 

We turn to the precise description of the problem. 
The one-particle operator \(h_0=T(-{\rm
  i}\nabla)-V({\bf x})\) is bounded from below (by
\(\alpha^{-1}[(1-(\pi Z\alpha/2)^2)^{1/2}-1]\))
if and only if 
$Z\alpha \leq 2/\pi $ (see \cite{Herbst}, \cite[5.33 p.\ 307]{Kato}, and
\cite{Weder}; we shall have nothing further to say on the critical case
\(Z\alpha=2/\pi\)). More precisely, if \(Z\alpha<1/2\), then \(V\) is
a small {\it operator} pertubation of \(T\). In fact \cite[Theorem 2.1
c)]{Herbst}, \(\big\||\bx|^{-1}(T(-{\rm 
  i}\nabla)+1)^{-1}\big\|_{\mathcal{B}(L^2(\rz^3))}=2\). As a
consequence, \(h_0\) is selfadjoint with
\(\mathcal{D}(h_0)=H^1(\rz^3;\cz^q)\) when \(Z\alpha<1/2\).
It is essentially
selfadjoint on \(C_0^{\infty}(\rz^3;\cz^q)\) when \(Z\alpha\leq 1/2\).

If, on the other hand, \(1/2\le
Z\alpha<2/\pi\), then \(V\) is only a
small {\it form} pertubation of \(T\): Indeed \cite[5.33
p.\ 307]{Kato}, 
\begin{align}
  \label{eq:Kato}
  \int_{\rz^3}\frac{|f(\bx)|^2}{|\bx|}\,d\bx
  \le \frac{\pi}{2}\int_{\rz^3}|\bp||\hat{f}(\bp)|^2\,d\bp\  \text{
    for }\ 
   f\in H^{1/2}(\rz^3)\,,
\end{align}
where \(\hat{f}\) denotes the Fourier transform of \(f\). Hence, the
quadratic form $\mathfrak{v}$ given by
\begin{equation}\label{b}
  \mathfrak{v}[u,v]:=(V^{1/2}u, V^{1/2} v)\ \text{ for }\ u,v \in
  H^{1/2}(\mathbb{R}^3;\mathbb{C}^q)\,
\end{equation}
(multiplication by \(V^{1/2}\) in each component)
is well defined (for all values of \(Z\alpha\)). Here,  $(\ ,\ )$
denotes the scalar product in 
$L^2(\mathbb{R}^3;\mathbb{C}^q)$. 
Let $\mathfrak{e}$ be the
quadratic form with domain 
$H^{1/2}(\mathbb{R}^3;\mathbb{C}^q)$ given by  
\begin{equation}\label{a}
  \mathfrak{e}[u,v]:=(E(\bp)^{1/2}u,E(\bp)^{1/2}v)\  \text{
    for }\ u,v \in H^{1/2}(\mathbb{R}^3;\mathbb{C}^q)\,. 
\end{equation}
By abuse of notation, we write \(E(\bp)\) for the (strictly positive)
operator \(E(-{\rm i}\nabla)=\sqrt{-\Delta+\alpha^{-2}}\).
Then, using \eqref{eq:Kato} and that \(|\bp|\le E(\bp)\), 
\begin{align}\label{Kato's ineq}
  \mathfrak{v}[u,u] <
  \mathfrak{e}[u,u] \ \text{ for }\ u \in
  H^{1/2}(\mathbb{R}^3;\mathbb{C}^q) 
  \ \text{ if }\ Z\alpha<2/\pi\,.
\end{align}
Hence, by the KLMN 
theorem \cite[Theorem X.17]{RS2}, there exists a unique self-adjoint 
operator $h_0$ whose quadratic form domain is $H^{1/2}(\mathbb{R}^3;
\mathbb{C}^q)$ such that (with
\(\mathfrak{t}=\mathfrak{e}-\alpha^{-1}\)) 
\begin{align}\label{1p}
  (u,h_0v)=\mathfrak{t}[u,v]-\mathfrak{v}[u,v] 
  \ \text{ for }\  u,v \in
  H^{1/2}(\mathbb{R}^3; \mathbb{C}^q)\,, 
\end{align}
and $h_0$ is bounded below by \(-\,\alpha^{-1}\). Moreover, if
$Z\alpha<2/\pi $ then the spectrum of \(h_0\) is discrete in $[-\alpha 
^{-1},0)$ and absolutely continuous in $[0,\infty )$ \cite[Theorems
2.2 and 2.3]{Herbst}.

As for the \(N\)-particle operator in \eqref{Hamiltonian}, when
\(Z\alpha<2/\pi\), \eqref{eq:Kato} implies that the quadratic form
\begin{align*}
  \mathfrak{q}&(\Psi,\Phi)
  =\sum_{j=1}^{N}
  \Big\{\langle\,E(\bp_j)^{1/2}\Psi,E(\bp_j)^{1/2}\Phi\,\rangle
  -\alpha^{-1}\langle\Psi,\Phi\rangle
  -\langle \,V(\bx_j)^{1/2}\Psi,V(\bx_j)^{1/2}\Phi\,\rangle\Big\}
  \nonumber
  \\&+\sum_{1\le i<j\le N}
  \langle\,|\bx_i-\bx_j|^{-1/2}\Psi,
  |\bx_i-\bx_j|^{-1/2}\Phi\rangle\ ,\quad
  \Psi,\Phi\in
  \bigwedge_{i=1}^{N}H^{1/2}(\rz^3;\cz^q)\,,
\end{align*}
is well-defined, closed, and bounded from below. The operator \(H\)
can then be defined as the corresponding (unique) self-adjoint
operator. It satisfies 
\begin{align*}
  \bigwedge_{i=1}^{N}H^{1}(\rz^3;\cz^q)
  \subset
  \mathcal{D}(H)\subset\mathcal{Q}(H)=\bigwedge_{i=1}^{N}H^{1/2}(\rz^3;\cz^q)\,, 
  \\
   \mathfrak{q}(\Psi,\Phi)=\langle \Psi,H\Phi\rangle\,, \quad \Phi\in
  \mathcal{D}(H)\,,\quad \Psi\in \mathcal{Q}(H)\,.
\end{align*}
For
\(Z\alpha<1/2\),
\(\mathcal{D}(H)=\wedge_{i=1}^{N}H^{1}(\rz^3;\cz^q)\). 
All this follows from (the statements and proofs of)
\cite[Theorem~X.17]{RS2} and \cite[Theorem~VIII.15]{RS1}.
See \cite{LiYau88} for further references on \(H\).
We shall not have anything further to say on \(H\) in this paper,
however, but will only study the Hartree-Fock problem mentioned
above. We now discuss this in more detail.

It is convenient to use
the one-to-one correspondence between Slater determinants and
projections onto finite dimensional subspaces of $L^2(\mathbb{R}^3;
\mathbb{C}^q)$. Indeed, if $\Psi$ is given by \eqref{slater} with
\(\{u_i\}_{i=1}^{N}\subset H^{1/2}(\rz^3;\cz^q)\), orthonormal in
\(L^{2}(\mathbb{R}^{3};\mathbb{C}^{q})\), and
$\gamma $ is the projection onto the subspace spanned by
$u_{1},\ldots,u_{N}$, then the kernel of \(\gamma\) is given by
\begin{equation}\label{kerFirst}
  \gamma(\bx,\sigma;\by,\tau)
  =\sum_{j=1}^{N}u_{j}(\bx,\sigma)\overline{u_{j}(\by,\tau)}\,.
\end{equation}
Let $\rho_{\gamma}\in L^1(\rz^3)$ denote the
$1$-particle density associated to $\gamma$ given by  
\begin{align*}
  \rho_{\gamma}(\bx)=\sum_{\sigma=1}^{q}\gamma(\bx,\sigma;\bx,\sigma)
  =\sum_{\sigma=1}^{q}\sum_{j=1}^{N}|u_{j}(\bx,\sigma)|^{2}\,.  
\end{align*}
Then the energy expectation of \(\Psi\) depends only on
$\gamma$, more 
precisely, 
\begin{equation*}
  \mathfrak{q}(\Psi,\Psi)=
  \langle \Psi ,H\Psi \rangle =\mathcal{E}^{\rm HF}(\gamma )\,,
\end{equation*}
where $\mathcal{E}^{\rm HF}$ is the Hartree-Fock energy functional
defined by 
\begin{align}\label{eq:HF-functional}
  \mathcal{E}^{\rm HF}&(\gamma)=\alpha^{-1}
  \big\{\Tr[
  E(\bp)\gamma]-\alpha^{-1}\Tr[\gamma]-\Tr[ 
  V\gamma]\big\} + \mathcal{D} 
  (\gamma) - \mathcal{E}x(\gamma)\,.
\end{align}
Here, 
 \begin{equation*}\label{kintrA}
   \Tr[E(\bp)\gamma]:=\sum_{j=1}^N\mathfrak{e}[u_j,u_j]\ , \quad 
  \Tr[V\gamma]:=\sum_{j=1}^N\mathfrak{v}[u_j,u_j]
  =Z\alpha\int_{\rz^3}\frac{\rho_\gamma(\bx)}{|\bx|}\,d\bx\,,
\end{equation*}
$\mathcal{D}(\gamma)$ is the {\it direct} Coulomb energy,
\begin{equation}\label{def:DirCoul}
  \mathcal{D}(\gamma )=\frac{1}{2}\int_{\mathbb{R}^{3}}\int_{\mathbb{R}^{3}}
  \frac{\rho_{\gamma}(\mathbf{x})\rho_{\gamma}(\by)}{\vert 
  \mathbf{x}-\by\vert}\,d\mathbf{x}\,d\by\,,
\end{equation}
and $\mathcal{E}x(\gamma) $ is the {\it exchange} Coulomb energy,
\begin{equation*}
  \mathcal{E}x(\gamma) =\frac{1}{2}\sum_{\sigma,\tau=1}^{q}\int_{\mathbb{R}^{3}}\int_{
   \mathbb{R}^{3}}\frac{\vert
  \gamma(\bx,\sigma;\by,\tau)\vert ^{2}}{\vert\mathbf{x}
  -\by\vert}\,d\mathbf{x}\,d\by\,.
\end{equation*}
This way,
\begin{align}\label{eq:HF-energy2}
  E^{{\rm HF}}&(N,Z,\alpha)=
  \inf \{\,\mathcal{E}^{\rm HF}(\gamma) \,|\,
  \gamma\in\mathcal{P}\,\}\,,
  \\
  \nonumber
  \mathcal{P}&=\{ \gamma: L^{2}(\mathbb{R}^{3};\mathbb{C}^{q})\rightarrow
  L^{2}(\mathbb{R}^{3};\mathbb{C}^{q})\,|\, \gamma \
  \, \text{projection onto } \Span\{u_1,\ldots,u_N\}, 
  \\ 
  &\qquad\qquad\qquad\qquad\qquad\qquad\qquad\qquad\qquad u_i\in
  H^{1/2}(\rz^3;\cz^q), (u_i,u_j)=\delta_{i,j}\}\,.
  \nonumber
\end{align}
(Notice that if one of the orbitals \(u_i\) of \(\gamma\) is not in
\(H^{1/2}(\rz^3;\cz^q)\), then \(\mathcal{E}^{\rm
  HF}(\gamma)=+\infty\) (since \(Z\alpha<2/\pi\)).)

We now extend the definition of the Hartree-Fock energy
functional \(\mathcal{E}^{\rm HF}\), in order to turn the minimization
problem \eqref{eq:HF-energy2} (that is, \eqref{eq:HF-energy}) into a
convex problem. 

A {\it density matrix} $\gamma: L^{2}(\mathbb{R}^{3};\mathbb{C}^{q})\rightarrow
L^{2}(\mathbb{R}^{3};\mathbb{C}^{q})$ is a self-adjoint trace class operator that
satisfies the operator inequality $0\leq \gamma \leq \mbox{\textrm{Id}}$.
A density matrix $\gamma$ has the integral kernel 
\begin{equation}\label{ker}
  \gamma(\bx,\sigma;\by,\tau)
  =\sum_{j}\lambda 
  _{j}u_{j}(\bx,\sigma)\overline{u_{j}(\by,\tau)}\,,
\end{equation} 
where $\lambda _{j},u_{j}$ are the eigenvalues and corresponding
eigenfunctions of $\gamma $. We choose the $u_j$'s to be orthonormal
in $L^2(\mathbb{R}^3;\mathbb{C}^q)$. As before, let $\rho_{\gamma}\in
L^1(\rz^3)$ denote the 
$1$-particle density associated to $\gamma$ given by  
\begin{equation}\label{def:rho}
  \rho _{\gamma }(\mathbf{x}) =\sum_{\sigma =1}^{q}\sum_{j}\lambda
  _{j}\left\vert u_{j}(\mathbf{x},\sigma )\right\vert ^{2}\,.
\end{equation}
Define
\begin{align}\label{A} 
  \mathcal{A}:=\big\{ \gamma \text{ density matrix }\big|\,\Tr\big[
  E(\textbf{p}) \gamma\big]<+\infty \,\big\} \,,  
\end{align} 
where, by definition, for \(\gamma\) written as in \eqref{ker}, 
\begin{equation}\label{kintr}
  \Tr[E(\bp)\gamma]:=\sum_{j}\lambda _{j} \mathfrak{e}[u_j,u_j]\,.
\end{equation}
Notice that if $\gamma \in \mathcal{A}$ then
all the terms in $\mathcal{E}^{\rm HF}(\gamma )$ (see
\eqref{eq:HF-functional}) are finite. Indeed, for $\gamma \in
\mathcal{A}$ and written as in \eqref{ker}, 
\begin{equation}\label{eq:potntialQuad}
  \Tr[V\gamma]:
  =\sum_{j}\lambda _{j} \mathfrak{v}[u_j,u_j]
  =Z\alpha\int_{\rz^3}\frac{\rho_\gamma(\bx)}{|\bx|}\,d\bx\,
\end{equation}
is finite, due to \eqref{Kato's ineq}. In particular, 
\begin{align}\label{eq:prop_u'i}
  u_j\in
  H^{1/2}(\rz^3;\cz^q)\subset L^3(\rz^3;\cz^q)\,,
\end{align}
the last inclusion by Sobolev's inequality \cite[Theorem
8.4]{LiebLoss}.

On the other hand, if $\gamma \in
\mathcal{A}$ then  
\begin{align}\label{eq:Daub}
  \rho_{\gamma} \in L^1(\mathbb{R}^3)
  \cap L^{4/3}(\mathbb{R}^3)\,.
\end{align} 
This follows from Daubechies' inequality, see \cite[pp.\
519--520]{D}. By H\"older's inquality, $\rho_{\gamma }\in
L^{6/5}(\rz^3)$. The Hardy-Littlewood-Sobolev inequality \cite[Theorem
4.3]{LiebLoss} then implies that 
$\mathcal{D}(\gamma)$ (see \eqref{def:DirCoul}) is finite. Finally, 
$\mathcal{E}x(\gamma)\leq\mathcal{D}(\gamma)$, since
\begin{align*}
   \mathcal{D}(\gamma)&-\mathcal{E}x(\gamma)\\
   &=\frac12\sum_{i,j}\lambda_i\lambda_j\sum_{\sigma,\tau=1}^q
   \int_{\rz^3}\int_{\rz^3}
   \frac{|u_i(\bx,\sigma)u_j(\by,\tau)
   -u_j(\bx,\sigma)u_i(\by,\tau)|^2}
   {|\bx-\by|}\,d\bx d\by\ge 0\,.
\end{align*}

Therefore, \(\mathcal{E}^{\rm HF}\) defined by
\eqref{eq:HF-functional} extends to \(\gamma\in\mathcal{A}\). 
This way, with \(h_0\) defined as in \eqref{1p},
\begin{equation*}
  \Tr[h_0 \gamma]
  = \Tr[E(\textbf{p})\gamma]
 -\alpha^{-1}\Tr[\gamma]-\Tr[V\gamma]\,,
\end{equation*}
and so
\begin{align}\label{def:eHFconvex}
    \mathcal{E}^{\rm HF}(\gamma)=
   \alpha^{-1}\Tr[h_0\gamma]+\mathcal{D} 
  (\gamma)-\mathcal{E}x(\gamma)\,,\  \gamma\in\mathcal{A}\,.
\end{align}

Consider
$\gamma\in\mathcal{A}$ and define, with \(\rho_\gamma\) as in
\eqref{def:rho}, 
\begin{equation}\label{Rgamma}
  R_{\gamma}(\textbf{x}):=\int_{\mathbb{R}^3}
  \frac{\rho_{\gamma}(\textbf{y})}{|\textbf{x}-\textbf{y}|}\,d
  \textbf{y}\,.
\end{equation}
We have that
\begin{align}\label{eq:propR-gamma}
  R_{\gamma}\in L^{\infty}(\rz^3)\cap L^3(\rz^3)\,. 
\end{align}
This follows from \eqref{Kato's ineq} (for \(L^\infty\)),
and \eqref{eq:Daub}
and the weak Young inequality \cite[p.\ 107]{LiebLoss}
(for \(L^3\)). Next, define the operator $K_{\gamma}$ with integral kernel
\begin{equation}\label{def:K-gamma}
  K_{\gamma}(\bx,\sigma; \by,\tau):=\frac{\gamma(\bx,\sigma;
  \by,\tau)}{|\bx-\by|}\,.
\end{equation}
The operator \(K_\gamma\) is Hilbert-Schmidt; we prove this fact in
Lemma~\ref{spectrum} below.

Note that, using \eqref{ker} and the Cauchy-Schwarz inequality,
\((u,R_{\gamma}u)\ge (u,K_\gamma u)\) (multiplication by
\(R_{\gamma}\) is in each component). Denote by
$\mathfrak{b}_{\gamma}$ the (non-negative) quadratic form given by 
\begin{equation*}
  \mathfrak{b}_{\gamma}[u,v]:=\alpha (u, R_{\gamma}v)-\alpha (u,K_{\gamma}v)
  \ \text{ for }\ u, v \in H^{1/2}(\mathbb{R}^3;\mathbb{C}^q)\,.
\end{equation*}
Then, using \((u,K_\gamma u)\geq0\) and \eqref{Kato's ineq}, 
\begin{align*}
  0\leq\mathfrak{b}_{\gamma}[u,u]\leq\alpha (u,R_{\gamma}u)  
  =\alpha\sum_{\sigma=1}^q\int_{\rz^3}\int_{\rz^3}
 \frac{\rho_{\gamma}(\by)|u(\bx,\sigma)|^2}{|\bx-\by|}
  \,d\bx\,d\by
  \leq  \alpha \frac{2}{\pi} \Tr[\gamma]\,\mathfrak{e}[u,u]\,.
\end{align*}
Therefore (by the statements and proofs of
\cite[Theorem~X.17]{RS2} and \cite[Theorem~VIII.15]{RS1}), there
exists a unique self-adjoint 
operator $h_{\gamma}$ (called the {\it Hartree-Fock operator
  associated to 
\(\gamma\)}), which is bounded below (by \({}-\alpha^{-1}\)), with
quadratic form domain $H^{1/2}(\mathbb{R}^3; \mathbb{C}^q)$ and such
that 
\begin{equation}\label{hgamma}
  (u, h_{\gamma} v)= \mathfrak{t}[u,v]-\mathfrak{v}[u,v]+ 
  \mathfrak{b}_{\gamma}[u,v]\ \text{ for }\ u,v\in
  H^{1/2}(\mathbb{R}^3;\mathbb{C}^q)\,. 
\end{equation}
The operator \(h_\gamma\) has infinitely many eigenvalues in
\([-\alpha^{-1},0)\) (when \(N<Z\)), and \(\sigma_{\rm
  ess}(h_{\gamma})=[0,\infty)\); 
both of these facts will be proved in Lemma~\ref{spectrum} below.

The main result of this paper is the following theorem.
\begin{theorem}\label{HF2}
Let $Z\alpha <2/\pi $, and let $N\ge2$ be a positive integer such
that $N<Z+1$. 

Then there exists an $N$-dimensional projection 
$\gamma^{\rm HF}=\gamma^{\rm HF}(N,Z,\alpha)$ minimizing the Hartree-Fock energy
functional \(\mathcal{E}^{\rm HF}\) given by \eqref{eq:HF-functional}, that is,
\(E^{\rm HF}(N,Z,\alpha)\) 
in \eqref{eq:HF-energy2}
(and therefore, in \eqref{eq:HF-energy})
is attained. In fact,
\begin{align}\label{eq:allSame}  \nonumber
  \mathcal{E}^{\rm HF}(\gamma^{\rm HF})=E^{\rm HF}(N,Z,\alpha)
 &=\inf\big\{\mathcal{E}^{\rm HF}(\gamma
  )\,\big|\,\gamma \in \mathcal{A}, \gamma^{2}=\gamma, 
 \Tr[\gamma]=N\big\}
 \\&=\inf\big\{\mathcal{E}^{\rm HF}(\gamma)\,\big|\,\gamma \in
  \mathcal{A},~ \Tr[\gamma]=N\}
  \nonumber
  \\&=\inf\big\{\mathcal{E}^{\rm HF}(\gamma)\,\big|\,\gamma \in
  \mathcal{A},~ \Tr[\gamma]\le N\}\,.
\end{align}

Moreover, one can write 
\begin{align}\label{eq:formMinimizer}
  \gamma ^{\rm HF}(\bx,\sigma;\by,\tau)
  =\sum_{i=1}^{N}\varphi_{i}(\bx,\sigma)
  \overline{\varphi_{i}(\by,\tau)}\,, 
\end{align}
with \(\varphi_i\in H^{1/2}(\rz^3;\cz^q),i=1,\ldots,N\), ortnonormal,
such that the {\rm Hartree-Fock orbitals} \(\{\varphi_i\}_{i=1}^N\)
satisfy: 
\begin{itemize}
\item[(i)] With
 $h_{\gamma^{\rm HF}}$ as defined in  
 \eqref{hgamma},
 \begin{align}\label{eq:HF-equations}
   h_{\gamma^{\rm HF}}\varphi_{i}=\varepsilon _{i}\varphi_{i}\ ,\ i=1,\ldots,N\,, 
 \end{align}
 with $0>\varepsilon_N\geq\ldots\geq\varepsilon_{1}>{}-\alpha^{-1}$
the \(N\) lowest eigenvalues of \(h_{\gamma^{\rm HF}}\). 
 \item[(ii)] For \(i=1,\ldots,N\),
   \begin{align}
     \label{eq:regularity}
      \varphi_i\in C^{\infty}(\rz^3\setminus\{0\};\cz^q)\,.
   \end{align}
 \item[(iii)] For all \(R>0\) and
   \(\beta<\nu_{\varepsilon_N}:=\sqrt{-\varepsilon_N(2\alpha^{-1}+\varepsilon_N)}\), 
   there exists \(C=C(R,\beta)>0\) such that for 
   \(i=1,\ldots,N\), 
   \begin{align}
     \label{eq:decay}
       |\varphi_i(\bx)|\le C\,e^{-\beta|\bx|}\quad \text{
         for } \quad |\bx|\geq R\,.
   \end{align}
\end{itemize}
\end{theorem}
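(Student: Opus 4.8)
\textit{Proof proposal.} We argue by the direct method, followed by an analysis of the Euler--Lagrange equations, in the spirit of Lieb and Simon \cite{LiebSimonHF}.

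\textbf{Existence and the identities \eqref{eq:allSame}.} The largest of the four sets in \eqref{eq:allSame}, namely $\widetilde{\mathcal{P}}:=\{\gamma\in\mathcal{A}:\Tr[\gamma]\le N\}$, is convex, and the plan is to minimise $\mathcal{E}^{\rm HF}$ over it. The key a priori estimate is coercivity: since $Z\alpha<2/\pi$, \eqref{eq:Kato} gives $\Tr[V\gamma]\le\tfrac{\pi}{2}Z\alpha\,\Tr[E(\bp)\gamma]$, so with $\mathcal{D}(\gamma)-\mathcal{E}x(\gamma)\ge 0$ and $\Tr[\gamma]\le N$ one gets $\mathcal{E}^{\rm HF}(\gamma)\ge\alpha^{-1}(1-\tfrac{\pi}{2}Z\alpha)\Tr[E(\bp)\gamma]-\alpha^{-2}N$. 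Hence along a minimising sequence $(\gamma_n)\subset\widetilde{\mathcal{P}}$ both $\Tr[\gamma_n]$ and $\Tr[E(\bp)\gamma_n]$ stay bounded, and, passing to a subsequence, $\gamma_n\rightharpoonup\gamma$ and $E(\bp)^{1/2}\gamma_nE(\bp)^{1/2}\rightharpoonup E(\bp)^{1/2}\gamma E(\bp)^{1/2}$ weak-$*$ in the trace class, with $\gamma\in\mathcal{A}$, $\Tr[\gamma]\le N$, and weak lower semicontinuity of the kinetic term. The terms $-\Tr[V\gamma]$ and $\mathcal{D}(\gamma)-\mathcal{E}x(\gamma)$ pass to the limit provided no $\rho_{\gamma_n}$-mass escapes to infinity; excluding this is the heart of the matter and the place where $N<Z+1$ is used. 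One proves the strict binding inequality $E^{\rm HF}(N,Z,\alpha)<E^{\rm HF}(N-1,Z,\alpha)$ --- starting from a near-minimiser for $N-1$ electrons and adding a slightly spread-out $N$-th orbital far from the others, for which the attractive remainder $-Z\alpha/|\bx|+\alpha R_{\gamma}$ behaves asymptotically like $-(Z-(N-1))\alpha/|\bx|$, still attractive since $N-1<Z$ --- and then, by a concentration--compactness analysis of $(\rho_{\gamma_n})$, rules out vanishing and dichotomy, forcing $\Tr[\gamma]=N$. Thus $\gamma$ minimises $\mathcal{E}^{\rm HF}$ over $\widetilde{\mathcal{P}}$. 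The equalities in \eqref{eq:allSame} then follow from Lieb's variational principle: monotonicity of the minimal energy in the particle number (filling a negative eigenstate of $h_\gamma$ lowers the energy) identifies the infima over $\Tr[\gamma]\le N$ and over $\Tr[\gamma]=N$, while the standard argument of moving an eigenvalue $\lambda_j\in(0,1)$ of a competitor towards $0$ or $1$ --- which changes $\mathcal{E}^{\rm HF}$ linearly, by $\alpha^{-1}$ times an eigenvalue of $h_\gamma$, hence cannot increase it --- shows the infimum over $\Tr[\gamma]=N$ is attained among rank-$N$ projections, so $\gamma$ may be taken to be such a projection, $\gamma=\gamma^{\rm HF}$.

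\textbf{The Hartree--Fock equations (i).} Differentiating the minimality of $\gamma^{\rm HF}$ against admissible self-adjoint perturbations respecting $0\le\gamma\le\mathrm{Id}$ and the trace constraint gives the Euler--Lagrange condition. Unitary perturbations $\gamma_t=e^{-itA}\gamma^{\rm HF}e^{itA}$ first yield $[\gamma^{\rm HF},h_{\gamma^{\rm HF}}]=0$, so $\gamma^{\rm HF}$ is diagonalised by eigenfunctions of $h_{\gamma^{\rm HF}}$, and rank-one perturbations moving occupation between an occupied and an unoccupied (or less occupied) eigenstate force, by the sign of the energy derivative $\alpha^{-1}(\varepsilon_j-\varepsilon_k)$, that $\gamma^{\rm HF}$ be the spectral projection onto the $N$ lowest eigenvalues of $h_{\gamma^{\rm HF}}$; this is \eqref{eq:HF-equations} with the Aufbau statement. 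That $\varepsilon_N<0$ holds because, by Lemma~\ref{spectrum}, $h_{\gamma^{\rm HF}}$ has at least $N$ eigenvalues below $0$, so its $N$ lowest are negative; that $\varepsilon_1>-\alpha^{-1}$ holds because $-\alpha^{-1}$ cannot be an eigenvalue: for $u\neq 0$ one has $(u,h_{\gamma^{\rm HF}}u)\ge\mathfrak{e}[u,u]-\alpha^{-1}\|u\|^2-\mathfrak{v}[u,u]>-\alpha^{-1}\|u\|^2$ by \eqref{Kato's ineq}.

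\textbf{Regularity away from the nucleus (ii).} Rewrite \eqref{eq:HF-equations} as $E(\bp)\varphi_i=\bigl(\alpha^{-1}+\varepsilon_i+V-\alpha R_{\gamma^{\rm HF}}\bigr)\varphi_i+\alpha K_{\gamma^{\rm HF}}\varphi_i=:F_i$, i.e. $\varphi_i=E(\bp)^{-1}F_i$. Since $E(\bp)=\sqrt{-\Delta+\alpha^{-2}}$ is elliptic of order $1$, $E(\bp)^{-1}$ maps $H^s$ into $H^{s+1}$ and is \emph{pseudolocal} (its kernel is smooth off the diagonal), so with cut-offs $\chi\prec\widetilde\chi$ supported away from $0$ one has $\chi\varphi_i=\chi E(\bp)^{-1}\widetilde\chi F_i+(\text{a }C^\infty\text{ function})$; local Sobolev regularity of $\varphi_i$ on $\rz^3\setminus\{0\}$ can thus be bootstrapped, using that away from $0$ one has $V\in C^\infty$, that $R_{\gamma^{\rm HF}}=|\cdot|^{-1}*\rho_{\gamma^{\rm HF}}$ with $\rho_{\gamma^{\rm HF}}=\sum_{\sigma,j}|\varphi_j(\cdot,\sigma)|^2$, and that $K_{\gamma^{\rm HF}}\varphi_i=\sum_j\varphi_j\,\bigl(|\cdot|^{-1}*(\overline{\varphi_j}\varphi_i)\bigr)$, both of which gain regularity from any regularity of the $\varphi_j$ via the smoothing of convolution with $|\cdot|^{-1}$. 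Starting from $\varphi_i\in H^{1/2}(\rz^3;\cz^q)$ each iteration gains (almost) a derivative, so $\varphi_i\in C^\infty(\rz^3\setminus\{0\};\cz^q)$. The nonlocality of $E(\bp)$ is the only genuine difficulty here, and it is absorbed into the pseudolocality estimates just indicated.

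\textbf{Exponential decay (iii).} Fix $i$ and put $\lambda:=\alpha^{-1}+\varepsilon_i\in(0,\alpha^{-1})$. From $E(\bp)^2=-\Delta+\alpha^{-2}$ one has $(E(\bp)-\lambda)^{-1}=(E(\bp)+\lambda)(-\Delta+\alpha^{-2}-\lambda^2)^{-1}$, and $-\Delta+\mu^2$ with $\mu^2=\alpha^{-2}-\lambda^2>0$ has the Yukawa kernel $e^{-\mu|\bx|}/(4\pi|\bx|)$; hence the kernel of $(E(\bp)-\lambda)^{-1}$ is bounded by $C\,e^{-\mu|\bx-\by|}$ times a local singularity, with $\mu=\sqrt{\alpha^{-2}-\lambda^2}=\sqrt{-\varepsilon_i(2\alpha^{-1}+\varepsilon_i)}=\nu_{\varepsilon_i}$. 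Since \eqref{eq:HF-equations} reads $\varphi_i=(E(\bp)-\lambda)^{-1}W_i\varphi_i$ with $W_i:=V-\alpha R_{\gamma^{\rm HF}}+\alpha K_{\gamma^{\rm HF}}\to 0$ at infinity ($V,R_{\gamma^{\rm HF}}=O(|\bx|^{-1})$, the $K_{\gamma^{\rm HF}}$-part faster), a Combes--Thomas-type bootstrap --- splitting the convolution into $|\by|<|\bx|/2$, where the kernel is $\le C e^{-\mu|\bx|/2}$, and $|\by|>|\bx|/2$, where $W_i\varphi_i$ is small --- upgrades an initial polynomial decay of $\varphi_i$ to $|\varphi_i(\bx)|\le C_\beta e^{-\beta|\bx|}$ for every $\beta<\nu_{\varepsilon_i}$, the endpoint being unreachable because of the long-range Coulomb tails of $V$ and $R_{\gamma^{\rm HF}}$. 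The bootstrap is run for $i=1,\dots,N$ simultaneously (all orbitals enter $\rho_{\gamma^{\rm HF}}$ and $K_{\gamma^{\rm HF}}$), and since $\varepsilon_1\le\cdots\le\varepsilon_N$ while $\varepsilon\mapsto\nu_\varepsilon$ is decreasing on $(-\alpha^{-1},0)$, one has $\nu_{\varepsilon_N}\le\nu_{\varepsilon_i}$ for all $i$, so \eqref{eq:decay} holds for all $i$, $R>0$ and $\beta<\nu_{\varepsilon_N}$. The decisive step in the whole argument is the compactness of the minimising sequence in the existence part, which rests on the strict binding inequality and hence on $N<Z+1$; everything else is standard elliptic and semiclassical analysis, the recurring technical point being the nonlocality of $E(\bp)$, controlled via pseudolocality and via the explicit exponential bound on the kernel of $(E(\bp)-\lambda)^{-1}$.
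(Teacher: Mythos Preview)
Your outline captures the overall architecture (direct method, Euler--Lagrange, bootstrap, kernel bounds), but there are two genuine gaps, and both occur precisely at the points where the pseudorelativistic setting differs from the non-relativistic one.

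\textbf{Existence.} You write that ``$-\Tr[V\gamma]$ and $\mathcal{D}(\gamma)-\mathcal{E}x(\gamma)$ pass to the limit provided no $\rho_{\gamma_n}$-mass escapes to infinity'', and then propose to rule out loss of mass by concentration--compactness and a binding inequality. The problem is that in this model the Coulomb potential has the \emph{same} scaling as the kinetic energy $E(\bp)$, so $V$ is \emph{not} relatively compact with respect to $E(\bp)$. Tightness of $\rho_{\gamma_n}$ controls only the behaviour at infinity; it does not by itself give $\Tr[V\gamma_n]\to\Tr[V\gamma_{(\infty)}]$ because of the critical singularity at $\bx=0$. The paper avoids this entirely: it never proves tightness of the minimising sequence. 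Instead it splits $\Tr[h_0\gamma_n]$ using the spectral projections $\Lambda_\pm(\alpha)$ of $h_0$ and shows $\Lambda_-(\alpha)h_0\Lambda_-(\alpha)$ is Hilbert--Schmidt (Lemma~\ref{lem:HS}), so the negative part \emph{converges} under weak Hilbert--Schmidt convergence of $E(\bp)^{1/2}\gamma_nE(\bp)^{1/2}$, while the positive part is handled by Fatou. The limit $\gamma_{(\infty)}$ is a minimiser with $\Tr[\gamma_{(\infty)}]\le N$, and only \emph{afterwards} is the trace pushed up to $N$ via Lemma~\ref{spectrum}. Your route could perhaps be salvaged, but as written it skips exactly the step the paper singles out as the main difficulty.

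\textbf{The sign of $\varepsilon_N$.} You justify $\varepsilon_N<0$ by invoking Lemma~\ref{spectrum} to say $h_{\gamma^{\rm HF}}$ has at least $N$ eigenvalues below $0$. But Lemma~\ref{spectrum} requires $\Tr[\gamma]<Z$, and for the minimiser $\Tr[\gamma^{\rm HF}]=N$, which need not be $<Z$ (the theorem allows $Z\le N<Z+1$). The paper treats this case separately: it applies Lemma~\ref{spectrum} to the $(N-1)$-problem (where $N-1<Z$), obtains the strict inequality \eqref{inf-N-1&N}, and then argues by contradiction that $\varepsilon_N=0$ would force $\mathcal{E}^{\rm HF}(\gamma^{\rm HF})$ to equal the $(N-1)$-infimum.

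\textbf{Regularity and decay.} Your bootstrap for (ii) is essentially the paper's argument; note however that for $Z\alpha\in[1/2,2/\pi)$ the equation $E(\bp)\varphi_i=F_i$ is not valid as an operator identity (only as a form identity), and the paper is careful to carry the bootstrap through at the form level before writing any pointwise equation. For (iii) your Combes--Thomas-style kernel bootstrap is a reasonable alternative to the paper's method, which first proves $L^2$-exponential decay via O'Connor's analytic-dilation argument (Proposition~\ref{lem:L2-bound}) and only then bootstraps to pointwise decay using the explicit kernel in Lemma~\ref{invt}; but your ``initial polynomial decay'' is not established anywhere, and again the pointwise equation you start from is unavailable for $Z\alpha\ge 1/2$ until one localises away from the origin as in \eqref{eq:locPoint}.
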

\begin{remark}
 \(\, \)
\begin{enumerate}
\item[\rm (i)] In fact, we prove that \eqref{eq:regularity} holds for
  {\emph{any}} eigenfunction \(\varphi\) of 
  \(h_{\gamma^{\rm HF}}\), and  \eqref{eq:decay} for those
  corresponding to {\it negative} eigenvalues \(\varepsilon\).
 More precisely, if \(h_{\gamma^{\rm HF}}\varphi=\varepsilon\varphi\)
 for some \(\varepsilon\in[\varepsilon_N,0)\), then \eqref{eq:decay}
 holds for \(\varphi\) for all
 \(\beta<\nu_{\varepsilon}:=\sqrt{-\varepsilon(2\alpha^{-1}+\varepsilon)}\) 
 for some \(C=C(R,\beta)>0\).  
\item[\rm (ii)] Note that, in general, eigenfunctions of
  \(h_{\gamma^{\rm HF}}\) can be unbounded at \(\bx=0\); therefore
  \eqref{eq:regularity} and \eqref{eq:decay} can only be expected to
  hold away from the origin.
\item[\rm (iii)] Both the regularity and the exponential decay above are
  similar to the results in the non-relativistic case (i.e., for the operator in
  \eqref{eq:non-rel}; see \cite{LiebSimonHF}). However, the proof of
  Theorem~\ref{HF2} is considerably more complicated due to, on one
  hand, the non-locality of the kinetic energy operator \(E(\bp)\),
  and, on the other hand, the fact that the Hartree-Fock operator
  \(h_{\gamma^{\rm HF}}\) is only given as a {\emph{form}} sum for
  \(Z\alpha\in[1/2,2/\pi)\). 
\item[\rm (iv)] We show the existence of the Hartree-Fock minimizer by
  solving the minimization problem on the set of density
  matrices. This method was introduced in \cite{ReSol}. The same
  method was used in \cite{BFHS} in the Dirac-Fock case.
\item[\rm (v)] As mentioned earlier, we have to assume that
  \(Z\alpha<2/\pi\); the reason is that our proof that
  \(\Tr[E(\bp)\gamma_n]\) is uniformly bounded for a minimizing
  sequence \(\{\gamma_n\}_{n\in\nz}\) does not work in the critical
case \(Z\alpha=2/\pi\). 
\item[\rm (vi)] For simplicity of notation, we give the proof of
  Theorem~\ref{HF2} only in the spinless case. It will be obvious that
  the proof also works in the general case.
\item[\rm (vii)] As will be clear from the proofs, the
  statements of Theorem~\ref{HF2} (appropriately modified) also hold
  for molecules. More explicitely, for a molecule with \(K\) nuclei of
  charges \(Z_1,\ldots,Z_K\), fixed at \(R_1,\ldots,R_K\in\rz^3\),
  replace \(\mathfrak{v}\) in \eqref{b} by
  \begin{align}\label{V-molecule}
    \mathfrak{v}[u,v]:=\sum_{k=1}^{K}(V_k^{1/2}u, V_k^{1/2} v)\ \text{ for }\ u,v \in
  H^{1/2}(\mathbb{R}^3;\mathbb{C}^q)\,,
  \end{align}
  with \(V_k(\bx)=Z_k\alpha/|\bx-R_k|, Z_k\alpha<2/\pi\). Then, for
  \(N<1+\sum_{k=1}^KZ_k\),  there exists a Hartree-Fock
  minimizer, and the corresponding Hartree-Fock orbitals have the
  regularity and decay properties 
  as stated in  Theorem~\ref{HF2}, away from each nucleus.

\end{enumerate}
\end{remark}
\section{Proof of Theorem~\ref{HF2}}
\label{section:proof}
\subsection{Existence of the Hartree-Fock minimizer} 
The proof of the
existence of an \(N\)-dimensional projection \(\gamma^{\rm HF}\)
minimizing \(\mathcal{E}^{\rm HF}\), the
equalities in \eqref{eq:allSame}, and that the corresponding
Hartree-Fock orbitals \(\{\varphi_i\}_{i=1}^N\) solve the Hartree-Fock
equations \eqref{eq:HF-equations}, will be a consequence of the
following two lemmas.
\begin{lemma}\label{existence}
Let $Z\alpha <2/\pi $ and $N\in\nz$. Then
\begin{align*}
  E^{\rm HF}_{\leq}(N,Z,\alpha):=\inf\big\{\mathcal{E}^{\rm
    HF}(\gamma)\,\big|\,\gamma\in 
  \mathcal{A}, \Tr[\gamma]\leq N\big\}
\end{align*}
is attained.
\end{lemma}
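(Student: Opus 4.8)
The plan is to prove Lemma~\ref{existence} by the direct method in the calculus of variations, adapted to the convex set of density matrices as in \cite{ReSol}. First I would take a minimizing sequence $\{\gamma_n\}_{n\in\nz}\subset\mathcal{A}$ with $\Tr[\gamma_n]\le N$ and $\mathcal{E}^{\rm HF}(\gamma_n)\to E^{\rm HF}_{\le}(N,Z,\alpha)$. The first task is to show that this sequence is bounded in a suitable sense: namely that $\Tr[E(\bp)\gamma_n]$ is uniformly bounded. This is the step flagged in Remark~(v) as the reason for the hypothesis $Z\alpha<2/\pi$. Using \eqref{def:eHFconvex}, $\mathcal{E}^{\rm HF}(\gamma_n)\ge\alpha^{-1}\Tr[h_0\gamma_n]$ (since $\mathcal{D}-\mathcal{E}x\ge0$), so I need a lower bound on $\Tr[h_0\gamma_n]$ that controls $\Tr[E(\bp)\gamma_n]$. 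Writing $\Tr[h_0\gamma_n]=\Tr[E(\bp)\gamma_n]-\alpha^{-1}\Tr[\gamma_n]-\Tr[V\gamma_n]$ and applying \eqref{Kato's ineq} with a bit of room to spare — i.e. $\Tr[V\gamma_n]\le\theta\,\Tr[E(\bp)\gamma_n]$ for some $\theta=\theta(Z\alpha)<1$, which is exactly where strict inequality in Kato's inequality and $Z\alpha<2/\pi$ enter — gives $\mathcal{E}^{\rm HF}(\gamma_n)\ge\alpha^{-1}(1-\theta)\Tr[E(\bp)\gamma_n]-\alpha^{-2}N$, hence $\Tr[E(\bp)\gamma_n]\le C(N,Z,\alpha)$ uniformly.

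Next I would extract a weakly convergent subsequence. Since $\Tr[\gamma_n]\le N$ and $0\le\gamma_n\le\mathrm{Id}$, the $\gamma_n$ are bounded in trace norm, so along a subsequence $\gamma_n\rightharpoonup\gamma$ in the weak-$*$ topology of trace class operators (equivalently, the kernels and one-particle densities converge in appropriate weak senses); the bound on $\Tr[E(\bp)\gamma_n]$ and Fatou-type arguments give $\gamma\in\mathcal{A}$ with $\Tr[E(\bp)\gamma]\le\liminf\Tr[E(\bp)\gamma_n]$ and $0\le\gamma\le\mathrm{Id}$, $\Tr[\gamma]\le N$ (the latter because trace is only weakly lower semicontinuous — possible loss of mass is allowed precisely because we minimize over $\Tr\le N$, which is why this relaxed problem is the convenient one). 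Then I must check lower semicontinuity of each remaining term of $\mathcal{E}^{\rm HF}$: $\Tr[E(\bp)\gamma]$ and $\mathcal{D}(\gamma)$ are convex and lower semicontinuous, $-\Tr[V\gamma]$ needs $\int\rho_{\gamma_n}/|\bx|\to\int\rho_\gamma/|\bx|$ (using the uniform $H^{1/2}$-bound on orbitals, which via \eqref{eq:Daub} and \eqref{eq:prop_u'i} gives compactness in the relevant local norms), and the exchange term $-\mathcal{E}x(\gamma)$ is the delicate one — I would handle $\mathcal{D}-\mathcal{E}x$ together, using its manifestly nonnegative form displayed before \eqref{def:eHFconvex} together with weak convergence of the kernels, or alternatively show $\mathcal{E}x$ is weakly continuous using that $K_{\gamma_n}$ is Hilbert–Schmidt (Lemma~\ref{spectrum}) with uniformly bounded Hilbert–Schmidt norm. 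Combining: $\mathcal{E}^{\rm HF}(\gamma)\le\liminf\mathcal{E}^{\rm HF}(\gamma_n)=E^{\rm HF}_{\le}(N,Z,\alpha)$, and since $\gamma$ is admissible, equality holds and the infimum is attained.

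The main obstacle I anticipate is twofold. The more serious issue is the lower semicontinuity of the potential and exchange terms under only weak-$*$ convergence: the Coulomb singularities mean one cannot simply pass to the limit, and one must exploit the uniform kinetic energy bound to upgrade weak convergence to something strong enough locally (e.g. $\rho_{\gamma_n}\to\rho_\gamma$ in $L^{4/3}_{\rm loc}$ or $L^{6/5}$, and convergence of the kernels $\gamma_n(\bx;\by)$ in $L^2_{\rm loc}$) while carefully controlling the contributions near $\bx=0$ and near infinity. The second, more structural, point — which is really the whole reason the relaxed problem appears — is that the limit $\gamma$ may satisfy $\Tr[\gamma]<N$ or fail to be a projection; Lemma~\ref{existence} deliberately only claims attainment over $\{\Tr\le N\}$, so I do not need to rule this out here. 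Promoting the minimizer to an $N$-dimensional projection and deriving the Hartree–Fock equations \eqref{eq:HF-equations} is the content of the second lemma referred to in the text (using $N<Z+1$ and the binding/no-loss-of-mass argument together with the fact that $h_\gamma$ has infinitely many negative eigenvalues when $N<Z$), and is outside the scope of this particular statement.
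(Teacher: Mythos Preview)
Your minimizing-sequence argument and the derivation of the uniform bound on $\Tr[E(\bp)\gamma_n]$ are exactly as in the paper, and your observation that this is where $Z\alpha<2/\pi$ enters is correct. The convergence framework is also essentially the same, though the paper packages it slightly differently: it sets $\tilde\gamma_n:=E(\bp)^{1/2}\gamma_n E(\bp)^{1/2}$, which are Hilbert--Schmidt with uniformly bounded Hilbert--Schmidt norm, and applies Banach--Alaoglu in the Hilbert--Schmidt ideal rather than working in trace class directly.

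The genuine gap is in your treatment of $-\Tr[V\gamma]$. You propose to show $\int\rho_{\gamma_n}/|\bx|\to\int\rho_\gamma/|\bx|$ by upgrading weak convergence to local strong convergence via the kinetic bound, controlling the contributions near $\bx=0$ and near infinity separately. But this cannot work here: in the pseudorelativistic setting the Coulomb potential has the \emph{same} scaling as the kinetic energy $\sqrt{-\Delta}$, so $V$ is merely relatively form-bounded and \emph{not} relatively compact with respect to $E(\bp)$. There is no smallness at high frequency/near the origin to exploit, and local Sobolev compactness of $H^{1/2}$ (only into $L^p_{\rm loc}$, $p<3$) is not enough to control the critical singularity of $|\bx|^{-1}$. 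The paper explicitly flags this issue and sidesteps it by a different decomposition, following \cite{BFHS}: instead of separating $E(\bp)$ from $V$, one keeps them together in $h_0$ and splits spectrally,
\[
\alpha\,\Tr[h_0\gamma_n]=\Tr[\Lambda_+(\alpha)h_0\Lambda_+(\alpha)\gamma_n]+\Tr[\Lambda_-(\alpha)h_0\Lambda_-(\alpha)\gamma_n]\,,
\]
where $\Lambda_\pm(\alpha)$ are the spectral projections of $h_0$ onto $[0,\infty)$ and $[-\alpha^{-1},0)$. The first term is nonnegative and hence lower semicontinuous by Fatou. The second term is weakly \emph{continuous}, because $\Lambda_-(\alpha)h_0\Lambda_-(\alpha)$ is Hilbert--Schmidt (this requires a separate lemma, proved by comparison with a Schr\"odinger operator). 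The combination $\mathcal{D}-\mathcal{E}x$ is then handled in one piece using its manifestly nonnegative form, essentially as you suggest, following the argument in \cite{BFHS}. So your overall strategy is right, but the specific mechanism you propose for the potential term fails; the spectral splitting of $h_0$ is the missing idea.
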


\begin{lemma}\label{spectrum}
Let \(\gamma\in\mathcal{A}\).
Then the operator \(K_\gamma\), defined by \eqref{def:K-gamma}, is
Hilbert-Schmidt. If  \(Z\alpha<2/\pi\) then the operator $h_{\gamma}$,
defined in \eqref{hgamma}, satisfies \(\sigma_{\rm ess}(h_{\gamma})=[0,\infty)\).
If furthermore \(\Tr[\gamma]<Z\), 
then \(h_{\gamma}\) has infinitely many eigenvalues in $[-\alpha
^{-1},0)$.
\end{lemma}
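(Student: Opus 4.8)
The plan is to establish the three assertions in turn, exploiting throughout the fact that $V$ is subcritical, i.e.\ \eqref{Kato's ineq}. First, for the Hilbert--Schmidt property of $K_\gamma$: writing $\gamma$ as in \eqref{ker} with $\{u_j\}$ orthonormal and $\sum_j\lambda_j\mathfrak{e}[u_j,u_j]<\infty$, I would compute
\[
  \|K_\gamma\|_{\rm HS}^2=\sum_{\sigma,\tau=1}^q\int_{\rz^3}\int_{\rz^3}
  \frac{|\gamma(\bx,\sigma;\by,\tau)|^2}{|\bx-\by|^2}\,d\bx\,d\by,
\]
bound $|\gamma(\bx,\sigma;\by,\tau)|^2\le\big(\sum_j\lambda_j|u_j(\bx,\sigma)|^2\big)\big(\sum_k\lambda_k|u_k(\by,\tau)|^2\big)$ by Cauchy--Schwarz (as already used in the text to compare $R_\gamma$ and $K_\gamma$), and recognize the result as a constant times $\sum_{\sigma}\int|u(\bx,\sigma)|^2\,(|\cdot|^{-2}\ast\rho_\gamma^{\sigma})(\bx)$-type integral; more directly, Hardy's inequality $\int|f(\by)|^2/|\bx-\by|^2\,d\by\le 4\,\||\nabla|^{1/2}f\|^2$ applied in $\by$ for each fixed $\bx$, together with $|\bp|\le E(\bp)$, gives $\|K_\gamma\|_{\rm HS}^2\le C\sum_j\lambda_j\mathfrak{e}[u_j,u_j]<\infty$. (One may also invoke the Daubechies bound \eqref{eq:Daub} on $\rho_\gamma$ together with a Hardy--Littlewood--Sobolev estimate.)

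Second, for $\sigma_{\rm ess}(h_\gamma)=[0,\infty)$: since $h_\gamma=h_0+\mathfrak{b}_\gamma$ in the form sense, and $\sigma_{\rm ess}(h_0)=[0,\infty)$ with a.c.\ spectrum there (cited from \cite{Herbst}), it suffices to show $\mathfrak{b}_\gamma$ is a relatively form-compact perturbation of $h_0$, i.e.\ that $(h_0+\alpha^{-1})^{-1/2}(\alpha R_\gamma-\alpha K_\gamma)(h_0+\alpha^{-1})^{-1/2}$ is compact. I would treat the two pieces separately: $K_\gamma$ is Hilbert--Schmidt by the first part, hence compact, and sandwiching by the bounded operator $(h_0+\alpha^{-1})^{-1/2}$ keeps it compact; for the multiplication operator $R_\gamma$, use $R_\gamma\in L^\infty\cap L^3$ from \eqref{eq:propR-gamma} and $R_\gamma(\bx)\to0$ as $|\bx|\to\infty$ (from $\rho_\gamma\in L^1\cap L^{4/3}$), and combine this decay with the local compactness coming from $(h_0+\alpha^{-1})^{-1/2}$ mapping into $H^{1/2}_{\rm loc}$ which embeds compactly into $L^2_{\rm loc}$ — a standard splitting into $R_\gamma\mathbf 1_{|\bx|\le L}$ (compact) plus $R_\gamma\mathbf 1_{|\bx|>L}$ (small in operator norm uniformly). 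Stability of essential spectrum under relatively form-compact perturbations (Weyl's theorem, form version) then yields the claim.

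Third, for the existence of infinitely many eigenvalues in $[-\alpha^{-1},0)$ when $\Tr[\gamma]=:M<Z$: the idea is that at infinity $h_\gamma$ looks like $T(-\mathrm i\nabla)-(Z-M)\alpha/|\bx|+o(1/|\bx|)$, an attractive Coulomb tail, which forces an infinite discrete spectrum accumulating at $0$. Concretely, I would produce, for each $n$, an $n$-dimensional subspace $\mathcal V_n\subset H^{1/2}(\rz^3;\cz^q)$ of functions supported in a far annulus $\{|\bx|\sim \ell_n\}$ with $\ell_n\to\infty$, on which the quadratic form $\mathfrak t-\mathfrak v+\mathfrak b_\gamma$ is strictly negative; by the min-max principle this gives at least $n$ eigenvalues below $0$, hence infinitely many (since $\sigma_{\rm ess}$ starts at $0$, these are genuine eigenvalues of finite multiplicity in $[-\alpha^{-1},0)$). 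To build $\mathcal V_n$ one expands $E(\bp)=\sqrt{-\Delta+\alpha^{-2}}$ for functions concentrated at large scales — where $T(-\mathrm i\nabla)\approx-\tfrac12\alpha^{-1}\Delta$ behaves nonrelativistically — and uses that $-V(\bx)+\mathfrak b_\gamma$-density contributes $\approx-(Z-M)\alpha/|\bx|$, with the exchange term $-\alpha K_\gamma$ only helping (it lowers the energy) and the error from $R_\gamma-\rho_\gamma$-tail being $o(1/|\bx|)$ by the decay of $\rho_\gamma$; the standard construction of trial functions for $-\Delta-c/|\bx|$ with $c>0$ then applies. The main obstacle is this last step: making the nonrelativistic approximation of $E(\bp)$ rigorous on far-concentrated trial functions while simultaneously controlling the nonlocal operator $K_\gamma$ and the non-Coulomb remainder of the mean field, i.e.\ carefully estimating $\langle u, (E(\bp)-\alpha^{-1})u\rangle$ against $\tfrac12\alpha^{-1}\|\nabla u\|^2$ and showing it is dominated by the Coulomb gain $(Z-M)\alpha\langle u,|\bx|^{-1}u\rangle$ for a suitable sequence of scales.
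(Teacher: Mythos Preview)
Your plan for $\sigma_{\rm ess}(h_\gamma)=[0,\infty)$ is essentially the paper's: compactness of $K_\gamma$ plus relative (form-)compactness of $R_\gamma$, then Weyl. The paper does this via the resolvent identity and Tiktopoulos's formula, reducing to compactness of $R_\gamma(T(\bp)+\eta)^{-1/2}$, but the content is the same.

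There are, however, two genuine issues in the other parts.

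\textbf{Hilbert--Schmidt estimate.} The inequality you invoke,
\[
  \int_{\rz^3}\frac{|f(\by)|^2}{|\bx-\by|^2}\,d\by\;\le\;4\,\big\||\nabla|^{1/2}f\big\|^2,
\]
is false by scaling: in $\rz^3$ the left side scales like $\lambda^{-1}$ and the right side like $\lambda^{-2}$ under $f\mapsto f(\lambda\,\cdot)$. The correct Hardy inequality needs a full gradient, which you do not control for $u_j\in H^{1/2}$. Your fallback via Daubechies also falls short: \eqref{eq:Daub} gives $\rho_\gamma\in L^1\cap L^{4/3}$, whereas Hardy--Littlewood--Sobolev with kernel $|\bx-\by|^{-2}$ in $\rz^3$ requires $\rho_\gamma\in L^{3/2}$. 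The paper avoids the Cauchy--Schwarz step altogether and applies HLS directly to the bilinear expression
\[
  I_{j,k}=\int_{\rz^6}\frac{\overline{u_k(\bx)}u_j(\bx)\,u_k(\by)\overline{u_j(\by)}}{|\bx-\by|^2}\,d\bx\,d\by
  \;\le\;C\,\|u_ju_k\|_{3/2}^2\;\le\;C\,\|u_j\|_3^2\|u_k\|_3^2
  \;\le\;C\,\|u_j\|_{H^{1/2}}^2\|u_k\|_{H^{1/2}}^2,
\]
using H\"older and the Sobolev embedding $H^{1/2}\hookrightarrow L^3$; summing against $\lambda_j\lambda_k$ gives $\|K_\gamma\|_{\rm HS}^2\le C\,(\Tr[E(\bp)\gamma])^2$. (Alternatively you can repair your route by noting $\|\rho_\gamma\|_{3/2}\le\sum_j\lambda_j\|u_j\|_3^2<\infty$ from the same Sobolev embedding, and then apply HLS.)

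\textbf{Infinitely many negative eigenvalues.} Your plan---trial functions in far annuli and control of the asymptotic Coulomb tail---can be made to work but is unnecessarily delicate, and the ``main obstacle'' you flag is real. The paper sidesteps it with two exact (not asymptotic) inequalities. First, $T(\bp)\le\tfrac{\alpha}{2}(-\Delta)$ as operators, so with $\delta:=1-\Tr[\gamma]/Z>0$ the comparison operator $h_{0,\delta}$ (with potential $\delta V$) is bounded above by a genuine hydrogen Hamiltonian and hence has infinitely many negative eigenvalues; pick orthonormal, \emph{spherically symmetric} $u_1,\ldots,u_n$ with $(u_i,h_{0,\delta}u_i)<0$. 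Second, for spherically symmetric $u$, Newton's theorem gives the \emph{exact} form bound
\[
  \alpha\,(u,R_\gamma u)\;\le\;\frac{\Tr[\gamma]}{Z}\,\mathfrak v[u,u],
\]
so, dropping the nonnegative $-\alpha K_\gamma$ term,
\[
  (u_i,h_\gamma u_i)\;\le\;\mathfrak t[u_i,u_i]-\Big(1-\tfrac{\Tr[\gamma]}{Z}\Big)\mathfrak v[u_i,u_i]
  =(u_i,h_{0,\delta}u_i)<0.
\]
No decay of $\rho_\gamma$, no large-$|\bx|$ asymptotics, and no nonrelativistic approximation of $E(\bp)$ on special trial functions are needed.
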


Before proving these two lemmas, we use them to prove the parts of
Theorem~\ref{HF2} mentioned above. 
\begin{proof}
For computational reasons we first state and prove a lemma in the
spirit of \cite[Lemma 1]{B}.
\begin{lemma}\label{bach00}
Let $\gamma\in \mathcal{A}$, $u_1,u_2\in H^{1/2}(\R^3)$, and 
let $\epsilon_1,\epsilon_2\in \R$ be such that \(\tilde\gamma\)
given by
\begin{align}\label{eq:rank2}
  \tilde\gamma(\bx,\by)&:=\gamma(\bx,\by)
  +\gamma_{u}(\bx,\by)\,,
  \\
  \label{eq:rank2bis}
  \gamma_{u}(\bx,\by)&:=\gamma_{u_1,u_2}(\bx,\by)=\epsilon_1
  u_1(\bx)\overline{u_1(\by)}
  +\epsilon_2 u_2(\bx)\overline{u_2(\by)}
\end{align}
is again an element of $\mathcal{A}$.

Then we have that
\begin{equation}\label{bach11}
  \mathcal{E}^{\rm HF}(\tilde\gamma)=\mathcal{E}^{\rm HF}(\gamma)+
  \alpha^{-1}\epsilon_1(u_1,h_\gamma
  u_1)+\alpha^{-1}\epsilon_2(u_2,h_\gamma u_2)+ 
  \epsilon_1\epsilon_2 R_{u}\,,
\end{equation}
where \(h_{\gamma}\) is given in \eqref{hgamma}, and 
\begin{equation}\label{bach11.1}
  R_{u}:=R_{u_1,u_2}=\frac12\int_{\R^3}\int_{\R^3}\frac{|u_1(\bx)u_2(\by)
  -u_2(\bx)u_1(\by)|^2}{|\bx-\by|}\,d\bx d\by\,. 
\end{equation}
\end{lemma}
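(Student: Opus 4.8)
The plan is a brute-force expansion: substitute $\tilde\gamma=\gamma+\gamma_u$ into the three pieces of $\mathcal{E}^{\rm HF}$ as written in \eqref{def:eHFconvex} and sort the outcome by the powers of $\epsilon_1,\epsilon_2$ that appear.

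First I would treat the one-body term. Each of $\Tr[E(\bp)\,\cdot\,]$, $\Tr[\,\cdot\,]$, $\Tr[V\,\cdot\,]$ is additive on $\mathcal{A}$: writing for instance $\Tr[E(\bp)\gamma]=\Tr[E(\bp)^{1/2}\gamma E(\bp)^{1/2}]$ and noting that, although $\gamma_u$ need not be a positive operator, it is self-adjoint of finite rank, so $E(\bp)^{1/2}\gamma_uE(\bp)^{1/2}$ is trace class with trace $\epsilon_1\mathfrak{e}[u_1,u_1]+\epsilon_2\mathfrak{e}[u_2,u_2]$; likewise $\Tr[\gamma_u]=\epsilon_1\|u_1\|^2+\epsilon_2\|u_2\|^2$ and $\Tr[V\gamma_u]=\epsilon_1\mathfrak{v}[u_1,u_1]+\epsilon_2\mathfrak{v}[u_2,u_2]$. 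Hence $\alpha^{-1}\Tr[h_0\tilde\gamma]=\alpha^{-1}\Tr[h_0\gamma]+\alpha^{-1}\epsilon_1(u_1,h_0u_1)+\alpha^{-1}\epsilon_2(u_2,h_0u_2)$ with $(u_i,h_0u_i)=\mathfrak{t}[u_i,u_i]-\mathfrak{v}[u_i,u_i]$, and no $O(\epsilon^2)$ term is produced. Next the direct term: since $\rho_{\tilde\gamma}=\rho_\gamma+\epsilon_1|u_1|^2+\epsilon_2|u_2|^2$, expanding the bilinear form underlying \eqref{def:DirCoul} (finite on the relevant densities by \eqref{eq:Daub} and Hardy--Littlewood--Sobolev) gives $\mathcal{D}(\tilde\gamma)=\mathcal{D}(\gamma)+\epsilon_1(u_1,R_\gamma u_1)+\epsilon_2(u_2,R_\gamma u_2)+\tfrac12\sum_{i,k=1}^{2}\epsilon_i\epsilon_k\iint_{\rz^3\times\rz^3}\frac{|u_i(\bx)|^2|u_k(\by)|^2}{|\bx-\by|}\,d\bx\,d\by$. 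Finally the exchange term: writing $\tilde\gamma(\bx,\by)=\gamma(\bx,\by)+\gamma_u(\bx,\by)$ and expanding $|\tilde\gamma(\bx,\by)|^2$, the $\gamma$--$\gamma_u$ cross term integrates, using the self-adjointness of $\gamma$ and of $K_\gamma$ and a relabelling $\bx\leftrightarrow\by$, to $\epsilon_1(u_1,K_\gamma u_1)+\epsilon_2(u_2,K_\gamma u_2)$ (finite since $K_\gamma$ is Hilbert--Schmidt by Lemma~\ref{spectrum}), while the $|\gamma_u(\bx,\by)|^2$ piece contributes $\epsilon_1^2\iint\frac{|u_1(\bx)|^2|u_1(\by)|^2}{|\bx-\by|}+\epsilon_2^2\iint\frac{|u_2(\bx)|^2|u_2(\by)|^2}{|\bx-\by|}+2\epsilon_1\epsilon_2\,\mathrm{Re}\iint\frac{u_1(\bx)\overline{u_1(\by)}\,\overline{u_2(\bx)}u_2(\by)}{|\bx-\by|}$ (integrals over $\rz^3\times\rz^3$).

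It then remains to collect. The terms linear in $\epsilon_i$ sum to $\alpha^{-1}\epsilon_i(u_i,h_0u_i)+\epsilon_i(u_i,R_\gamma u_i)-\epsilon_i(u_i,K_\gamma u_i)=\alpha^{-1}\epsilon_i(u_i,h_\gamma u_i)$ directly from the definition \eqref{hgamma} of $h_\gamma$ (recall $\mathfrak{b}_\gamma[u,u]=\alpha(u,R_\gamma u)-\alpha(u,K_\gamma u)$). Among the quadratic terms, the $\epsilon_1^2$ and $\epsilon_2^2$ contributions from $\mathcal{D}$ and from $-\mathcal{E}x$ cancel exactly, and (using the $\bx\leftrightarrow\by$ symmetry of $\mathcal{D}$) the remaining mixed term equals $\epsilon_1\epsilon_2\big[\iint\frac{|u_1(\bx)|^2|u_2(\by)|^2}{|\bx-\by|}-\mathrm{Re}\iint\frac{u_1(\bx)\overline{u_1(\by)}\,\overline{u_2(\bx)}u_2(\by)}{|\bx-\by|}\big]$; expanding $|u_1(\bx)u_2(\by)-u_2(\bx)u_1(\by)|^2$ in \eqref{bach11.1} and symmetrizing the same way identifies the bracket with $R_u$, which gives \eqref{bach11}. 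The only steps beyond bookkeeping are the additivity of the one-body trace on $\mathcal{A}$ for the (non-positive) rank-two perturbation $\gamma_u$ — handled by the finite-rank argument above — and the absolute convergence of every integral rearranged, which is guaranteed by $\gamma,\tilde\gamma\in\mathcal{A}$, $u_i\in H^{1/2}(\rz^3)$, \eqref{Kato's ineq}, \eqref{eq:Daub}, \eqref{eq:propR-gamma}, and the Hilbert--Schmidt bound on $K_\gamma$ combined with Cauchy--Schwarz for the cross exchange term.
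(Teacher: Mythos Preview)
Your proof is correct and follows essentially the same approach as the paper's: both expand $\mathcal{E}^{\rm HF}(\gamma+\gamma_u)$ directly, recognize that the terms linear in $\epsilon_i$ assemble into $\alpha^{-1}\epsilon_i(u_i,h_\gamma u_i)$, and check that the quadratic direct and exchange pieces combine to $\epsilon_1\epsilon_2 R_u$. You supply more explicit justification for the additivity of the one-body trace and the finiteness of the integrals than the paper does, but the underlying computation is the same.
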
 
\begin{pf*}{\it Proof of Lemma~\ref{bach00}}
We have that
\begin{align}\label{bach12}\nonumber
  \mathcal{E}^{\rm HF}(\tilde\gamma) =&\; \mathcal{E}^{\rm
    HF}(\gamma)+ \alpha^{-1}{\rm Tr}[h_0\gamma_{u}] +\int_{\R^3}\int_{\R^3}
  \frac{\rho_\gamma(\bx)\rho_{\gamma_u}(\by)}{|\bx-\by|}\,d\bx d\by\\\nonumber
  &-\int_{\R^3}\int_{\R^3}\frac{\gamma(\bx,\by)
  \overline{\gamma_{u}(\bx,\by)}}{|\bx-\by|}\,d\bx 
  d\by
  +\frac12\int_{\R^3}\int_{\R^3}\frac{\rho_{\gamma_{u}}(\bx)
  \rho_{\gamma_{u}}(\by)}{|\bx-\by|}\,d\bx
  d\by\\\nonumber
  &-\frac12\int_{\R^3}\int_{\R^3}\frac{\gamma_{u}(\bx,\by)
  \overline{\gamma_{u}(\bx,\by)}}
  {|\bx-\by|}\,d\bx d\by\\
  =& \;\mathcal{E}^{\rm HF}(\gamma)+\alpha^{-1}\epsilon_1(u_1,h_\gamma
  u_1)+\alpha^{-1}\epsilon_2(u_2,h_\gamma
  u_2)\\
  &+\frac12\int_{\R^3}\int_{\R^3}\frac{\rho_{\gamma_{u}}(\bx)
  \rho_{\gamma_{u}}(\by)}{|\bx-\by|}\,d\bx
  d\by-\frac12\int_{\R^3}\int_{\R^3}\frac{\gamma_{u}(\bx,\by)
  \overline{\gamma_{u}(\bx,\by)}}
  {|\bx-\by|}\,d\bx d\by\,.\nonumber
\end{align}
Using \eqref{eq:rank2bis},
that
$\rho_{\gamma_u}(\bx)=\epsilon_1|u_1(\bx)|^2+\epsilon_2|u_2(\bx)|^2$,
and \eqref{bach11.1}, we obtain \eqref{bach11}.
\end{pf*}
By Lemma \ref{existence} a minimizer $\gamma ^{\rm HF}\in\mathcal{A}$, with
$\Tr[\gamma^{\rm HF}]\leq N,$ exists. We may 
write  
\begin{align}\label{eq:expGamma}
  \gamma ^{\rm HF}(\mathbf{x},\by)
  =\sum_{k}\lambda_{k}\varphi_{k}(\mathbf{x})\overline{\varphi_{k}(\by)}\,,  
\end{align}
with $1\geq \lambda _{1}\geq \dots \geq 0$ and
\(\{\varphi_k\}_{k}\subset H^{1/2}(\rz^3)\) an orthonormal (in
\(L^2(\rz^3)\)) system (it might be finite). Extend
\(\{\varphi_k\}_{k}\) to an orthonormal basis
\(\{\varphi_k\}_{k}\cup\{u_{\ell}\}_{\ell\in\nz}\) for
\(L^2(\rz^3)\), with  \(u_\ell\in
H^{1/2}(\rz^3)\).

Let $K+1$ be the first index
such that $\lambda_{K+1}<1$. 
Fix \(j\in\{1,\ldots,K\}\), choose $u\in\{\varphi_k\}_{k\ge
  K+1}\cup\{u_\ell\}_{\ell\in\nz}\), and consider, for  
\(\epsilon\) to be chosen, 
\begin{align*}
  \gamma _{\epsilon }^{(j)}(\bx,\by):=\sum_{k\neq j}\lambda_{k}
  \varphi_{k}(\bx)\varphi_{k}^{\ast}(\by)+\frac{1}{1+m\epsilon^{2}}
  \big(\varphi_{j}(\bx)+\epsilon u(\bx)\big)
  \big(\,\overline{\varphi_{j}(\by)}+\epsilon \overline{u(\by)}\,\big)\,. 
\end{align*}
Choosing \(m\geq1\) assures that $\Tr[\gamma_{\epsilon}^{(j)}]\leq
N$. Then $0\leq\gamma_{\epsilon}^{(j)}\leq
\mbox{\textrm{Id}}$ for $|\epsilon|$ small enough (depending on
\(u\)). Since \(\gamma^{\rm HF}\) minimizes \(\mathcal{E}^{\rm HF}\),
and \(\gamma_{0}^{(j)}=\gamma^{\rm HF}\), 
\begin{align*}
  0=\left.\frac{d}{d\epsilon}(\mathcal{E}^{\rm HF})
  \big(\gamma_{\epsilon}^{(j)}\big)
  \right\vert_{\epsilon=0}
  =\alpha^{-1}(\varphi_{j},h_{\gamma^{\rm HF}}u)
  +\alpha^{-1}(u,h_{\gamma^{\rm HF}}\varphi_{j})\,. 
\end{align*}
Repeating the computation for \(iu\) we get
that $(u,h_{\gamma^{\rm HF}}\varphi_{j})=0$, from which it
follows that $h_{\gamma^{\rm HF}}$ maps 
$\Span\{\varphi_{1},\dots,\varphi_{K}\}$ into itself. Diagonalising
the restriction of \(h_{\gamma^{\rm HF}}\) to
  $\Span\{\varphi_{1},\dots,\varphi_{K}\}$, 
we can choose 
  \(\varphi_1,\ldots,\varphi_K\) to be eigenfunctions of
  \(h_{\gamma^{\rm HF}}\) with eigenvalues
  \(\varepsilon_{n_1},\ldots,\varepsilon_{n_K}\), \(n_j\in\mathbb{N}\)
  (numbering the 
  eigenvalues of \(h_{\gamma^{\rm HF}}\) in increasing order,
  \(-\alpha^{-1}<\varepsilon_{1}\le\varepsilon_{2}\le\cdots\)). 
  Since \(\lambda_1=\cdots=\lambda_K=1\),
  this does not change \eqref{eq:expGamma}.

To show that, for
$j>K$, $\varphi_{j}$ is also an eigenfunction of $h_{\gamma^{\rm HF}}$
(corresponding to an eigenvalue \(\varepsilon_{n_j}\))
one repeats the argument above, with $u\in\{\varphi_k\}_{k\neq
  1,\ldots,K,j}\cup\{u_\ell\}_{\ell\in\nz}\), and
\begin{align*}
  \gamma _{\epsilon }^{(j)}(\bx,\by)=\sum_{k\neq j}\lambda_{k}
  \varphi_{k}(\bx)\overline{\varphi_{k}(\by)}+\frac{\lambda_j}{1+m\epsilon^{2}} 
  \big(\varphi_{j}(\bx)+\epsilon u(\bx)\big)
  \big(\,\overline{\varphi_{j}(\by)}+\epsilon \overline{u(\by)}\,\big)\,. 
\end{align*}

Moreover, the eigenvalues $\varepsilon_{n_k}$ (of  $h_{\gamma^{\rm HF}}$) 
corresponding to the eigenfunctions $\varphi_k$ are non-positive.
In fact, if $\varepsilon_{n_k}>0$, then we could lower the
energy: Define $\tilde\gamma(\bx,\by)=\gamma^{\rm
  HF}(\bx,\by)-\lambda_k 
\varphi_k(\bx)\overline{\varphi_k(\by)}$, then, using 
Lemma~\ref{bach00}, we get that $\mathcal{E}^{\rm
  HF}(\tilde\gamma)=\mathcal{E}^{\rm
  HF}(\gamma^{\rm HF})-\alpha^{-1}\lambda_k \varepsilon_{n_k}<\mathcal{E}^{\rm
  HF}(\gamma^{\rm HF}).$

It remains to show that $\Tr[\gamma^{\rm HF}]=N$, that 
$\gamma^{\rm HF}$ is a projection, and that the
$\{\varphi_j\}_{j=1}^{N}$ are eigenfunctions  corresponding
to the {\it lowest} (negative) eigenvalues of $h_{\gamma^{\rm HF}}$ (that is, 
to \(\varepsilon_{1}\le\varepsilon_{2}\le\cdots\le\varepsilon_{N}<0\)).

Consider first the case \(N<Z\).  Assume, for contradiction, that
$\Tr[\gamma^{\rm HF}]<N$.  Let
\(K\in\N\) be the multiplicity of the eigenvalue $1$ in
\eqref{eq:expGamma}. 
Since (by Lemma~\ref{spectrum}), for $N<Z$, $h_{\gamma^{\rm HF}}$ has infinitely many
eigenvalues in \([-\alpha^{-1},0)\) we can
find a (normalized) eigenfunction $u$, corresponding to a negative
eigenvalue of \(h_{\gamma^{\rm HF}}\), and orthogonal to $\varphi_1,
\dots, \varphi_K$. Let $\epsilon>0$ be sufficiently small that
$\gamma(\bx,\by):=\gamma^{\rm HF}(\bx,\by)
+\epsilon u(\bx)\overline{u(\by)}$ defines a density matrix
satisfying $\Tr[\gamma]\leq N$. 
By Lemma \ref{bach00} (with $u_1=u,\epsilon_1=\epsilon$ and
$\epsilon_2=0$) we get that \begin{equation}\label{eq:lowerE} 
\mathcal{E}^{\rm
  HF}(\gamma)=\mathcal{E}^{\rm HF}(\gamma^{\rm HF}) +\epsilon
\alpha^{-1}(u,h_{\gamma^{\rm HF}}u) <\mathcal{E}^{\rm HF}(\gamma^{\rm
  HF})\,,
  \end{equation}
leading to a contradiction. Hence,
$\Tr[\gamma^{\rm HF}]=N$. That $\gamma^{\rm HF}$ is a projection follows
from Lieb's Variational Principle (see \cite{Lieb}) which we prove for
completeness. If this is not the case, there exist indices $p,q$ such
that $0 < \lambda_p, \lambda_q<1$. Consider $\tilde{\gamma}(\bx,\by):=
\gamma^{\rm HF}(\bx,\by)+ \epsilon \varphi_q(\bx)
\overline{\varphi_q(\by)} - \epsilon \varphi_p(\bx)
\overline{\varphi_p(\by)}$ with $\epsilon$ such that $0 \leq
\tilde{\gamma} \leq \mbox{\textrm{Id}}$. Choose $\epsilon >0 $ if
$\varepsilon_{n_q} \leq \varepsilon_{n_p}$ and $\epsilon<0$
otherwise. By Lemma \ref{bach00}, we get that
$\mathcal{E}^{\rm HF}(\tilde{\gamma})< \mathcal{E}^{\rm
  HF}(\gamma^{\rm HF})$. 

Consider now the case \(Z\leq N<Z+1\) (and \(N\ge2\)), so that \(N-1<Z\). Let 
\(\gamma^{\rm HF}_{N-1}\) denote the density matrix where
\begin{equation*}
   \inf\big\{\mathcal{E}^{\rm HF}(\gamma)\,\big|\,\gamma \in
  \mathcal{A},~ \Tr[\gamma]\le N-1\}
\end{equation*}
is attained. By the above, $\Tr[\gamma^{\rm HF}_{N-1}]=N-1$ and
$\gamma^{\rm HF}_{N-1}$ is a projection, so its
integral kernel is given by
\begin{equation*}
  \gamma^{\rm HF}_{N-1}(\bx,\by)= \sum_{i=1}^{N-1} \phi_i(\bx)
  \overline{\phi_i(\by)}\,, 
\end{equation*}
where the $\phi_i$'s are eigenfunctions of $h_{\gamma^{\rm HF}_{N-1}}$.

We first prove that 
\begin{equation}\label{N+1}
   \inf\big\{\mathcal{E}^{\rm HF}(\gamma)\,\big|\,\gamma \in
  \mathcal{A},~ \Tr[\gamma]\le N\}
\end{equation}
is not attained at the density matrix $\gamma^{\rm HF}_{N-1} $ 
by constructing a density matrix $\tilde{\gamma}$ with
$\Tr[\tilde{\gamma}]\leq N$ such that
$\mathcal{E}^{\rm HF}(\tilde{\gamma})<
\mathcal{E}^{\rm HF}(\gamma^{\rm HF}_{N-1})$. 
Indeed, since $h_{\gamma^{\rm HF}_{N-1}}$
has infinitely many strictly negative eigenvalues (by Lemma~\ref{spectrum};
\(N-1<Z\)) there exists a 
(normalized) eigenfunction $u$ of $h_{\gamma^{\rm HF}_{N-1}}$ corresponding to
a negative eigenvalue, and orthogonal to $\Span\{\phi_1, \dots,
\phi_{N-1}\}$. Let $\tilde{\gamma}$ be defined by
\begin{equation*}
   \tilde{\gamma}(\bx,\by)=\gamma^{\rm HF}_{N-1}(\bx,\by) 
   + u(\bx) \overline{u(\by)}\,.
\end{equation*}
Then $\Tr[\tilde{\gamma}]=N$ and, by a computation like in
\eqref{eq:lowerE}, 
\begin{equation*}
   \mathcal{E}^{\rm HF}(\tilde{\gamma})= \mathcal{E}^{\rm
     HF}(\gamma^{\rm HF}_{N-1})+ \alpha^{-1}(u, h_{\gamma^{\rm HF}_{N-1}} u) <
   \mathcal{E}^{\rm HF}(\gamma^{\rm HF}_{N-1})\,.
\end{equation*}
Hence,
\begin{align}\label{inf-N-1&N}
  \inf\big\{\mathcal{E}^{\rm HF}(\gamma)\,\big|\,\gamma\in\mathcal{A},
  \Tr[\gamma]\le N\big\}
  <\inf\big\{\mathcal{E}^{\rm HF}(\gamma)\,\big|\,\gamma\in\mathcal{A},
  \Tr[\gamma]\le N-1\big\}\,.
\end{align}

Let $\gamma_{N}$ be a  density matrix where \eqref{N+1} is attained (the existence
of such a minimizer follows, as before, from Lemma~\ref{existence}). By the above
it follows that $N-1 < \Tr[\gamma_{N}]\leq N$. We now show that there
exists a minimizer \(\gamma^{\rm HF}\) with \(\Tr[\gamma^{\rm HF}]=N\).

The integral kernel of
$\gamma_{N}$ is given by 
\begin{equation*}
  \gamma_{N}(\bx,\by)= \sum_{j} \lambda_j \varphi_{j}(\bx)
  \overline{\varphi_j(\by)}\,, 
\end{equation*} 
where $1\ge\lambda_1\ge\dots\ge0$ and the $\varphi_j$'s are
(orthonormal) eigenfunctions of 
$h_{\gamma_{N}}$. If $\Tr[\gamma_{N}]<N$ we can define
a new density matrix $\tilde{\gamma}$ with $\Tr[\tilde{\gamma}]\leq
N$ and \(\mathcal{E}^{\rm HF}(\tilde\gamma)\le\mathcal{E}^{\rm
  HF}(\gamma_{N})\).  Indeed, if 
$\Tr[\gamma_{N}]<N$ (and 
bigger than $N-1$) then there exists a (first) $j_0$ such that
$0<\lambda_{j_0}<1$. We define $\tilde{\gamma}$ with integral kernel 
\begin{equation}\label{eq:trialDensity}
  \tilde{\gamma}(\bx,\by)=\gamma_{N}(\bx,\by)
  + r\varphi_{j_0}(\bx) \overline{\varphi_{j_0}(\by)}\,,
\end{equation}
with \(r=\min\{1-\lambda_{j_0},N-\Tr[\gamma_{N}]\}>0\).
Recall that \(h_{\gamma_{N}}\varphi_j=\varepsilon_{n_j}\varphi_j\),
\(\varepsilon_{n_j}\le0\),  for all
\(j\). By Lemma \ref{bach00} 
we have that
$$\mathcal{E}^{\rm HF}(\tilde{\gamma})= \mathcal{E}^{\rm
     HF}(\gamma_{N})+ \alpha^{-1}r\varepsilon_{n_{j_0}}\,.
$$
If \(\varepsilon_{n_{j_0}}<0\), 
it follows that 
$\mathcal{E}^{\rm HF}(\tilde{\gamma})<\mathcal{E}^{\rm HF}(\gamma_N)$. 
On the other hand, if \(\varepsilon_{n_{j_0}}=0\), then 
\(\mathcal{E}^{\rm HF}(\tilde{\gamma})= \mathcal{E}^{\rm
  HF}(\gamma_{N})\), and \(\Tr[\gamma_{N}]<\Tr[\tilde{\gamma}]\le N\). Either
\(\Tr[\tilde{\gamma}]=N\), in which case we let
\(\gamma^{\rm HF}:=\tilde{\gamma}\), and, as above, we are done.  Or,
we repeat all of the above argument on
\begin{equation*}
   \tilde{\gamma}(\bx,\by)= \sum_{j=1}^{j_0}\varphi_{j}(\bx)
  \overline{\varphi_j(\by)}
  +\sum_{j>j_0} \lambda_j \varphi_{j}(\bx)
  \overline{\varphi_j(\by)}\,.
\end{equation*} 
Since the trace stays bounded by \(N\), this procedure has to stop
eventually. 
Hence, with \(\gamma^{\rm HF}\) the resulting density matrix,
$\Tr[\gamma^{\rm HF}]=N$ and by Lieb's Variational Principle it 
follows (as above) that $\gamma^{\rm HF}$ is a projection. 

Finally, let $\{\varphi_j\}$ be the eigenfunctions
of $h_{\gamma^{\rm HF}}$, now numbered
corresponding to the eigenvalues
$\varepsilon_1\le\varepsilon_2\le\cdots$, where $\varepsilon_1$ is the
lowest eigenvalue of $h_{\gamma^{\rm HF}}$. We know that, for some
\(j_1,\ldots,j_N\in\mathbb{N}\),  
$$
  \gamma^{\rm HF}(\bx,\by)=\sum_{k=1}^N
  \varphi_{j_k}(\bx)\overline{\varphi_{j_k}(\by)}\,.
$$ 
Suppose for contradiction that $\{\varepsilon_{j_1},\dots,
\varepsilon_{j_N}\}\not=\{\varepsilon_1, \dots,
\varepsilon_N\}$. Then there exists a $k\in \{1,\dots,N\}$ with
$\varepsilon_{j_k}>\varepsilon_k$. For $\delta\in(0,1)$ define
$$
  \tilde\gamma(\bx,\by)=\gamma^{\rm HF}(\bx,\by) 
  +\delta\varphi_k(\bx)\overline{\varphi_k(\by)}
  -\delta\varphi_{j_k}(\bx)\overline{\varphi_{j_k}(\by)}\,.
$$
By Lemma \ref{bach00}, 
$$ 
 \mathcal{E}^{\rm HF}(\tilde{\gamma})= \mathcal{E}^{\rm
  HF}(\gamma^{\rm HF})
 +\delta\alpha^{-1}(\varepsilon_k-\varepsilon_{j_k})
 -\delta^2R_{\varphi_j,\varphi_{j_k}}<\mathcal{E}^{\rm HF}(\gamma^{\rm
   HF})\,, 
$$
where the last inequality follows by choosing $\delta$ small
enough. 

It remains to prove that \(\varepsilon_1,\ldots,\varepsilon_{N}\) are
strictly negative. For \(N<Z\) this follows directly from
Lemma~\ref{spectrum}. In the case \(Z\le N<Z+1\), assume, for
contradiction, that \(\varepsilon_{N}=0\); then the density matrix
\begin{align*}
  \tilde\gamma(\bx,\by):=\gamma^{\rm
    HF}(\bx,\by)-\varphi_{N}(\bx) \overline{\varphi_{N}(\by)}
\end{align*}
satisfies \(\mathcal{E}^{\rm HF}(\tilde\gamma)=\mathcal{E}^{\rm
  HF}(\gamma^{\rm HF})\) (by Lemma~\ref{bach00}) and
\(\Tr[\tilde\gamma]=N-1\). This is a contradiction to
\eqref{inf-N-1&N}. 

This finishes the  
proof of the first part of Theorem~\ref{HF2}.
\end{proof}

It remains to prove Lemma~\ref{existence} and Lemma~\ref{spectrum}.
\begin{pf*}{\it Proof of Lemma~\ref{existence}}
We minimize on
density matrices following the method in \cite{ReSol}. In the
pseudorelativistic context one faces the problem that the Coulomb
potential is not relatively compact with respect to the kinetic
energy. This problem has been adressed in \cite{BFHS} and we follow
the idea therein. 

The quantity $E^{\rm HF}_{\leq}(N,Z,\alpha)$ is finite since for any
density matrix $\gamma $, with $\Tr[\gamma]\leq N$,  
\begin{align*}
  \mathcal{E}^{\rm HF}(\gamma) \geq \alpha^{-1}\big\{\Tr[E
  (\bp)\gamma]- \alpha^{-1}N
  -\Tr[V\gamma] \big\}
  \geq {}-\alpha ^{-2}N\,.
\end{align*}
Here we used that $\mathcal{D}(\gamma )-\mathcal{E}x(\gamma)\geq 0$,
and \eqref{Kato's ineq} (see also \eqref{kintr} and
\eqref{eq:potntialQuad}). 

Let $\{\gamma _{n}\}_{n=1}^\infty$ be a minimizing sequence 
for $E^{\rm HF}_{\leq}(N,Z,\alpha)$, more precisely, $\gamma _{n}\in
\mathcal{A}$ (with $\mathcal{A}$ as defined in \eqref{A}),
$\Tr[\gamma_{n}] \leq N$, and
$\mathcal{E}^{\rm HF}(\gamma _{n})\leq E^{\rm HF}_{\leq}(N,Z,\alpha)+1/n$. 

The sequence $\Tr[E(\bp)\gamma_{n}]$
is uniformly bounded. Indeed, 
for every $n\in \mathbb{N}$, using \eqref{Kato's ineq}, 
\begin{align*}
  E^{\rm HF}(N,Z,\alpha)+1 &\geq
  \mathcal{E}^{\rm HF}(\gamma_{n})\geq\alpha^{-1}\big\{\Tr[E
  (\bp)\gamma_n]-\alpha^{-1}N
  -\Tr[V\gamma_n]\big\} \\ 
  &\geq \alpha^{-1}(1-Z\alpha\frac{\pi}{2})\Tr[E(\bp)\gamma_n]-\alpha^{-2}N\,. 
\end{align*}
The claim follows since \(Z\alpha<2/\pi\). It is this argument that
prevents us from proving Theorem~\ref{HF2} for the critical case
\(Z\alpha=2/\pi\). 

Define $\tilde{\gamma}_{n}:=E(\bp)^{1/2}\gamma_{n}E(\bp)^{1/2}$. Then,
by the above,
$\{\tilde{\gamma}_n\}_{n \in \mathbb{N}}$ 
is a sequence of Hilbert-Schmidt operators with uniformly bounded
Hilbert-Schmidt norm. 
Hence, by Banach-Alaoglu's theorem, there exist a
subsequence, which we denote again by $\tilde{\gamma}_{n}$, and a
Hilbert-Schmidt operator $\tilde{\gamma}_{\left(\infty \right) }$, such
that for every Hilbert-Schmidt operator $W$,
\begin{equation*}
  \Tr[W\tilde{\gamma}_{n}]\rightarrow 
  \Tr[W\tilde{\gamma}_{(\infty)}]\ , \ n\to\infty\,.
\end{equation*} 
Let $\gamma_{(\infty)}:=E(\textbf{p})^{-1/2}\tilde{\gamma}_{(\infty)}
E(\textbf{p})^{-1/2}$. We are going to show that $\gamma _{(\infty)}$
is a minimizer of \(\mathcal{E}^{\rm HF}\) (in fact, of
\(\alpha\mathcal{E}^{\rm HF}\), which is equivalent). We first prove that
$\gamma_{(\infty)}\in\mathcal{A}$, then that \(\mathcal{E}^{\rm HF}\)
is weak lower semicontinuous on \(\mathcal{A}\). 

Let $\{\psi_{k}\}_{k\in \mathbb{N}}$ be a basis of
$L^{2}(\mathbb{R}^{3})$ with $\psi_{k}\in
H^{1/2}(\mathbb{R}^{3})$. Then, for all \(k\in\nz\)\,, 
\begin{align*}
  \lim_{n\rightarrow \infty}(\psi_{k},\gamma_{n}\psi_{k}) 
  &=\lim_{n\rightarrow
    \infty}(\psi_{k},E(\textbf{p})^{-1/2}\tilde{\gamma}_{n}
  E(\textbf{p})^{-1/2}\psi_{k}) \\  
  &=(\psi_{k},\gamma _{(\infty)}\psi _{k})\,. 
\end{align*}
{F}rom this follows, by Fatou's lemma, that
\begin{equation*}
  \Tr[\gamma_{(\infty)}]=\sum_{k}(\psi_{k},\gamma
  _{(\infty)}\psi_{k})\leq 
  \liminf_{n\rightarrow \infty}\sum_{k}(\psi_{k},\gamma_{n}\psi_{k})
  =\liminf_{n\rightarrow\infty}\Tr[\gamma_{n}]\leq N\,, 
\end{equation*}
and
\begin{equation*}
  \Tr[E(\textbf{p})^{1/2}\gamma_{(\infty)}E(\textbf{p})^{1/2}]\leq
  \liminf_{n\rightarrow \infty}\Tr[E(\textbf{p})^{1/2}\gamma_{n}
  E(\textbf{p})^{1/2}]<\infty\,. 
\end{equation*}
Since also $0\leq \gamma_{(\infty)}\leq
\mbox{\textrm{Id}}$ we see that $\gamma_{(\infty)}\in\mathcal{A}$.

To reach the claim it remains to show the weak lower semicontinuity of the
functional \(\mathcal{E}^{\rm HF}\). As mentioned in the introduction,
the spectrum of the 
one-particle operator $h_0$, defined in \eqref{1p}, is discrete in
$[-\alpha^{-1},0)$ and purely absolutely continuous in
$[0,\infty)$. Let $ \Lambda_{-}(\alpha)$ denote the projection
on the 
pure point spectrum of $h_0$ and
${\Lambda}_{+}(\alpha):=\mbox{\textrm{Id}}-\Lambda_{-}(\alpha)$. We
write  
\begin{align}\label{eq:threeterms}
  \alpha \mathcal{E}^{\rm HF}(\gamma_{n})=T_{1}(\gamma_{n})
  +T_{2}(\gamma_{n})+\alpha T_{3}(\gamma_{n})\,, 
\end{align}
with
\begin{align*}
  T_{1}(\gamma_{n})&=\Tr[\Lambda_{+}(\alpha)
   h_0\Lambda_{+}(\alpha)\gamma_{n}]\ ,\quad
  T_{2}(\gamma_{n})=\Tr[\Lambda_{-}(\alpha)
  h_0\Lambda_{-}(\alpha)\gamma_{n}]\,,\\ 
  T_{3}(\gamma_{n})&=\mathcal{D}(\gamma_{n})
  -\mathcal{E}x(\gamma_{n})\,. 
\end{align*}
We consider these three terms separately.

For the first term in \eqref{eq:threeterms}, 
fix (as above) a basis $\{\psi_{k}\}_{k\in \mathbb{N}}$
  of $L^{2}(\mathbb{R}^{3})$, with   $\{\psi_{k}\}_{k\in
    \mathbb{N}}\subset H^{1/2}(\mathbb{R}^{3})$. Defining  
\begin{align*}
  f_{k}:=\big(\Lambda_{+}(\alpha)h_0
  \Lambda_{+}(\alpha)\big)^{1/2}\psi_{k}\,, 
\end{align*}
we have that
\begin{align*}
  T_{1}(\gamma_{n})
  &=\Tr\big[\big(\Lambda_{+}(\alpha)h_0 
  \Lambda_{+}(\alpha)\big)^{1/2}\gamma_{n}
  \big(\Lambda_{+}(\alpha)h_0
  \Lambda_{+}(\alpha)\big)^{1/2}\big] \\ 
   &=\sum_{k}(f_{k},\gamma_{n}f_{k}) 
   =\sum_{k}(E(\textbf{p})^{-1/2}f_{k},\tilde\gamma _{n}
   E(\bp)^{-1/2}f_{k})\,. 
\end{align*}
Since the projection
\begin{align*}
  H_{k}:=\big| E(\textbf{p})^{-1/2}f_{k}\big\rangle\,\big\langle
  E(\textbf{p})^{-1/2}f_{k}\big| 
\end{align*} is a non-negative Hilbert-Schmidt
operator, we find, by Fatou's lemma, that  
\begin{align*}
  \liminf_{n\rightarrow\infty}T_{1}(\gamma_{n})
  &=\liminf_{n\rightarrow\infty}\sum_{k}\Tr[
  H_{k}\tilde{\gamma}_{n}] 
  \geq\sum_{k}\Tr[
  H_{k}\tilde{\gamma}_{(\infty)}] =T_{1}(\gamma_{(\infty)})\,. 
\end{align*}

As for the second term in \eqref{eq:threeterms}, we have
$\lim_{n\rightarrow\infty
  }T_{2}(\gamma_{n})=T_{2}(\gamma_{(\infty)})$ since the
  operator $\Lambda_{-}(\alpha)h_0\Lambda_{-}(\alpha)$  
is Hilbert-Schmidt; see Lemma~\ref{lem:HS} in
Appendix~\ref{app:lemmata}. 

Finally, for the last term in \eqref{eq:threeterms}, 
following the reasoning in
  \cite[pp.142--143]{BFHS} (here we need that \(N\in\nz\)), we get that  
\begin{equation*}
  \liminf_{n\rightarrow \infty}T_3(\gamma_{n})
  \geq T_3(\gamma _{(\infty)})\,. 
\end{equation*}

This finishes the proof of Lemma~\ref{existence}.
\end{pf*}
\begin{pf*}{Proof of Lemma~\ref{spectrum}}
In order to prove that $K_\gamma$ is Hilbert-Schmidt it is enough to
prove that its integral kernel belongs to $L^2(\R^6)$. We have that
(see \eqref{def:K-gamma} and \eqref{ker})
\begin{align}\label{juni1}
  \int_{\R^6} |K_\gamma(\bx,\by)|^2\,d\bx d\by&=\int_{\R^6}
  \frac{|\gamma(\bx,\by)|^2}{|\bx-\by|^2}\,d\bx d\by\\
  &=\sum_{j,k}\lambda_j\lambda_k \int_{\R^6} \frac{\overline{u_k(\bx)} u_j(\bx)
    u_k(\by) \overline{u_j(\by)}}{|\bx-\by|^2}\,d\bx d\by 
  =:\sum_{j,k}\lambda_j\lambda_k I_{j,k}\,.\nonumber
\end{align}
The last integral can be estimated using the Hardy-Littlewood-Sobolev, 
H\"older, and Sobolev inequalities (in that order), to get 
\begin{equation}\label{juni2}
  I_{j,k}\le \|u_k u_j\|_{3/2}^2\le \|u_k\|_3^2 \|u_j\|_3^2\le C
  \|u_k\|_{H^{1/2}}^2\|u_j\|_{H^{1/2}}^2\,.
\end{equation}
Inserting \eqref{juni2} in \eqref{juni1} we obtain (since \(\gamma\in\mathcal{A}\))
\begin{align*}
  \int_{\R^6} |K_\gamma(\bx,\by)|^2\,d\bx d\by&\le
  C\sum_{j,k}\lambda_j\lambda_k\|u_k\|_{H^{1/2}}^2\|u_j\|_{H^{1/2}}^2= 
  C\Big(\sum_{j}\lambda_j\|u_j\|_{H^{1/2}}^2\Big)^2\\
  &=C\big(\!\Tr[E(\bp)\gamma]\big)^2<\infty\,.
\end{align*}

To prove the statement on the essential spectrum, 
define $\tilde{h}_\gamma:=h_\gamma+\alpha K_\gamma$. Since  $K_\gamma$ is
Hilbert-Schmidt, and $\sigma_{\rm ess}(h_0)=[0,\infty)$ (see the
introduction), it is 
enough to prove that 
$(\tilde{h}_\gamma+\eta)^{-1}-(h_0+\eta)^{-1}$ is compact
for some $\eta>0$ large enough \cite[Theorem XIII.14]{RS4}. Since
$\mathcal{D}(h_0)=\mathcal{D}(\tilde{h}_\gamma)
\subset\mathcal{D}(R_\gamma)$, 
we have that
\begin{equation}\label{15.06.2007-1}
  (\tilde{h}_\gamma+\eta)^{-1}-(h_0+\eta)^{-1}=
  {}-(\tilde{h}_\gamma+\eta)^{-1}\alpha R_\gamma(h_0+\eta)^{-1}\,. 
\end{equation}
From Tiktopoulos's formula (see \cite[(II.8), Section II.3]
{simon}), it follows that
\begin{align}\label{15.06.2007-2}
  &(h_0+\eta)^{-1}\nonumber\\
  &\quad=(T(\bp)+\eta)^{-1/2}
  [1-(T(\bp)+\eta)^{-1/2}V(T(\bp)+\eta)^{-1/2}]^{-1}(T(\bp)+\eta)^{-1/2}\,.
\end{align}
Since, by \eqref{eq:Kato}, $\|(T(\bp)+\eta)^{-1/2}V^{1/2}\|<1$ for
$Z\alpha<2/\pi$ and \(\eta>\alpha^{-1}\), the right side of \eqref{15.06.2007-2} 
is well defined. Inserting \eqref{15.06.2007-2} in
\eqref{15.06.2007-1} one sees that it suffices to prove that 
$R_\gamma(T(\bp)+\eta)^{-1/2}$ is compact. That this is indeed the
case follows by using \cite[Theorem XI.20]{RS3} together  
with the observation that, for $\varepsilon>0$ and
\(\eta>\alpha^{-1}\), $R_\gamma$ and $(T(\bp)+\eta)^{-1/2}$ (as a function of
$\bp$) belong to the space $L^{6+\varepsilon}(\R^3)$ (for
\(R_{\gamma}\), see \eqref{eq:propR-gamma}).  

Finally, we show that if $\Tr[\gamma]=N<Z$ then $h_{\gamma}$ has infinitely
many eigenvalues in $[-\alpha^{-1},0)$. By the min-max principle
\cite[Theorem XIII.1]{RS4} and since
$\sigma_{\rm ess}(h_{\gamma})=[0,\infty)$, it is sufficient to show that
for every $n \in \nz$ we can find $n$ orthogonal functions $u_1,
\dots, u_n$ in $L^{2}(\R^{3})$ such that $(u_i, h_{\gamma} u_i) < 0$
for $i=1, \dots,n$. 

Let $n \in \nz$. Fix $\delta:= 1-N/Z$ and let $h_{0,\delta}$ be the
unique self-adjoint operator whose quadratic form domain is
$H^{1/2}(\R^3)$ such that 
\begin{equation*}
  (u,h_{0,\delta} v) 
  =\mathfrak{t}[u,v]-\delta\,\mathfrak{v}[u,v] 
  \text{ for }u,v \in H^{1/2}(\R^3)\,. 
\end{equation*}
By \cite[Theorems 2.2 and 2.3]{Herbst},
$\sigma_{\rm ess}(h_{0,\delta})=[0,\infty)$. Moreover, $h_{0,\delta}$ has
infinitely many eigenvalues in $[-\alpha^{-1},0)$. This follows by the
min-max principle and the inequality $h_{0,\delta} \leq \alpha/2
(-\Delta)-\delta Z\alpha/|\bx|$. Hence, we can find $u_{1}, \dots,
u_{n}$ spherically symmetric and orthonormal such that $(u_{i},
h_{0,\delta} u_{i}) <0$ for $i=1, \dots, n$. Then, by the positivity of
$K_{\gamma}$, by Newton's Theorem \cite[p.\ 249]{LiebLoss}, and
since $\Tr[\gamma]=N$ we get, for $i=1,\dots,n$, that 
\begin{align*}
  (u_i , h_{\gamma} u_i) & \leq  {\mathfrak t}[u_i,u_i]-{\mathfrak
    v}[u_i,u_i]+ \alpha (u_i, R_{\gamma} u_i) \\ 
  &\leq   {\mathfrak t}[u_i,u_i]-{\mathfrak v}[u_i,u_i]
  +\frac{N}{Z}\,{\mathfrak v}[u_i,u_i] 
  = (u_i,h_{0,\delta} u_i )<0\,. 
\end{align*}
The claim follows.
\end{pf*}
\subsection{Regularity of the Hartree-Fock orbitals}
Here we prove that any eigenfunction of \(h_{\gamma^{\rm HF}}\) is in
\(C^{\infty}(\rz^3\setminus\{0\})\). 
\begin{proof}
Let $\varphi$ be a solution of $h_{\gamma^{\rm HF}}\varphi=
\varepsilon\varphi$ for some  $\varepsilon\in\rz$. Then $\varphi$
belongs to the domain of the operator and in particular to
$H^{1/2}(\rz^3;\cz^q)$. We are going to prove that
$\varphi \in H^k(\Omega)$ for all bounded smooth $\Omega \subset
\mathbb{R}^3\setminus \{0\}$ and all $k \in \mathbb{N}$. The claim
will then follow from the Sobolev imbedding theorem \cite[Theorem
4.12]{Sobolev}. 
We will use results on pseudodifferential operators; see
Appendix~\ref{app:pseudo}. We briefly summarize these here.
\begin{enumerate}
\item[1)] For all \(k,\ell\in\rz\), \(E(\bp)^{\ell}\) maps
  \(H^k(\rz^3)\) to \(H^{k-\ell}(\rz^3)\).
\item[2)] For all \(k,\ell\in\rz\), and any \(\chi\in
  C_0^\infty(\rz^3)\), the commutator \([\chi,E(\bp)^{\ell}]\)  maps
  \(H^k(\rz^3)\) to \(H^{k-\ell+1}(\rz^3)\).
\item[3)] For all \(k,\ell,m\in\rz\) and \(\chi_1, \chi_2\in
  C_{0}^{\infty}(\rz)\) with \(\supp\,\chi_1\cap\supp\,\chi_2=\emptyset\),
  \(\chi_1E(\bp)^{\ell}\chi_2\) maps
  \(H^k(\rz^3)\) to \(H^{m}(\rz^3)\). Such an operator is
  called `smoothing'.
\end{enumerate}

Fix $\Omega$ a bounded smooth subset of $\mathbb{R}^3 \setminus \{0\}$. We
proceed by induction on $k\in\nz$. Assume that $\varphi\in H^k(\Omega)$
for some $k \geq 0$, i.e.,  $\chi \varphi \in
H^k(\mathbb{R}^3)$ for all $\chi \in
C^{\infty}_0(\Omega)$. Notice that
$H^k(\mathbb{R}^3)=D(E(\textbf{p})^k)$. 

Since $\chi \varphi \in H^{k+1}(\mathbb{R}^3)$ is equivalent to
$\chi \varphi \in D(E(\textbf{p})^{k+1})$, and
$D(E(\textbf{p})^{k+1})=D((E(\textbf{p})^{k+1})^{*})$, it is sufficient
to prove that $\chi \varphi \in D((E(\textbf{p})^{k+1})^{*})$, or
equivalently, that there exists $v\in L^2(\mathbb{R}^3)$ such that
\begin{align*}
  (\chi \varphi, E(\textbf{p})^{k+1}f)=(v,f)\ \text{ for
all } f\in H^{k+1}(\mathbb{R}^3)\,. 
\end{align*}

Let $f \in H^{k+1}(\mathbb{R}^3)$. Then
\begin{align}
  \nonumber
  (\chi \varphi, E(\textbf{p})^{k+1}f)&= \mathfrak{e}(\varphi,
  E(\textbf{p})^{-1} \chi E(\textbf{p})^{k+1} f)\\ 
  \label{terms}
  &=(\varepsilon+\alpha^{-1}) (\varphi, E(\textbf{p})^{-1}\chi
  E(\textbf{p})^{k+1}f) + \mathfrak{v}(\varphi, E(\textbf{p})^{-1}\chi
  E(\textbf{p})^{k+1}f) 
  \nonumber\\
  &\quad-\mathfrak{b}_{\gamma^{\rm HF}}(\varphi, E(\textbf{p})^{-1}\chi
  E(\textbf{p})^{k+1}f)\,, 
\end{align}
where we use that $h_{\gamma^{\rm HF}} \varphi = \varepsilon
\varphi$. We study the terms in \eqref{terms} separately. In the
following, $\tilde{\chi}$ denotes a function in
$C^{\infty}_{0}(\Omega)$ with $\tilde{\chi}\equiv 1$
on $\supp\,\chi$. 

For the first term in \eqref{terms} we find that
\begin{align}\nonumber
  &(\varphi, E(\bp)^{-1}\chi E(\bp)^{k+1}f) = (\chi
  E(\bp)^{-1}\varphi, E(\bp)^{k+1}f)\\ 
  \label{S1} 
  &\quad=([\chi,E(\bp)^{-1}]\varphi,E(\bp)^{k+1}f)+
  (E(\bp)^{-1}\chi\varphi,E(\bp)^{k+1}f)\,. 
\end{align}
Since $\chi \varphi \in H^k(\mathbb{R}^3)$ by the induction
hypothesis, we have that
$E(\textbf{p})^{-1} \chi \varphi \in H^{k+1}(\mathbb{R}^3)$ and
hence there exists $w_1 \in L^2(\mathbb{R}^3)$ such that 
\begin{equation*}
  (E(\bp)^{-1}\chi \varphi, E(\bp)^{k+1}f)=(w_1,f)\,.
\end{equation*}
It remains to study the first term in \eqref{S1}. We have that
\begin{align*}
  & ([\chi,E(\bp)^{-1}]\varphi,E(\bp)^{k+1}f)\\
  &=([\chi,E(\bp)^{-1}]\tilde{\chi}\varphi,
  E(\bp)^{k+1}f)+([\chi,E(\bp)^{-1}](1-\tilde{\chi})
  \varphi,E(\bp)^{k+1}f)\,. 
\end{align*}
Since $\tilde{\chi}\varphi \in H^k(\mathbb{R}^3)$ by the induction
hypothesis, it follows from
Proposition~\ref{corcom}
that $[\chi,E(\textbf{p})^{-1}]\tilde{\chi}
\varphi$ belongs to $H^{k+2}(\mathbb{R}^3)$. On the other hand since
the supports of $\chi$ and $\tilde{\chi}$ are disjoint the operator
$[\chi,E(\textbf{p})^{-1}](1-\tilde{\chi})$ is a smoothing operator.
Hence there exists a
$w_2 \in L^2(\mathbb{R}^3)$ such that 
\begin{align*}
  ([\chi, E(\bp)^{-1}]\varphi,E(\bp)^{k+1}f)=(w_2,f)\,.
\end{align*}

As for the second term in \eqref{terms}, we find, with $\tilde{\chi}$
as before, 
\begin{align}\nonumber
   \mathfrak{v}(\varphi, E(\bp)^{-1}\chi E(\bp)^{k+1}f) 
  &= (\varphi, V E(\bp)^{-1}\chi E(\bp)^{k+1}f)\\
  &= (\tilde{\chi} \varphi, V E(\bp)^{-1}\chi
  E(\bp)^{k+1}f) \label{b2a}\\ 
  &\quad+ ((1-\tilde{\chi}) \varphi, V
  E(\bp)^{-1}\chi E(\bp)^{k+1}f)\,. 
    \nonumber 
\end{align} 
Since $\tilde{\chi}$ has support away from zero, $V \tilde{\chi}
\varphi \in H^k(\mathbb{R}^3)$ and hence there exists $w_3 \in
L^2(\mathbb{R}^3)$ such that 
\begin{equation*}
  (\tilde{\chi} \varphi, V E(\textbf{p})^{-1} \chi E(\textbf{p})^{k+1}
  f)= (w_3, f)\,. 
\end{equation*}
For the second term in \eqref{b2a} we proceed via an approximation. Let
$\{\varphi_{n}\}_{n=1}^\infty \subset C^{\infty}_0(\mathbb{R}^3)$ such that
$\varphi_{n} \rightarrow \varphi, n\to\infty$, in $L^2(\mathbb{R}^3)$. Since
$(1-\tilde{\chi})V E(\textbf{p})^{-1}\chi E(\textbf{p})^{k+1}f$
belongs to $L^2(\mathbb{R}^3)$, we have that
\begin{align*}
  (\varphi,(1-\tilde{\chi}) V E(\bp)^{-1}\chi
  E(\bp)^{k+1}f)= 
  \lim_{n \rightarrow +\infty} (\varphi_{n},(1-\tilde{\chi}) V
  E(\bp)^{-1}\chi E(\bp)^{k+1}f)\,. 
\end{align*}
For each $n\in \mathbb{N}$, \(V(1-\tilde{\chi})\varphi_{n}\in
H^{m}(\rz^3)\) for all \(m\), since \(\varphi_{n}\in
C_0^\infty(\rz^3)\), and \(V\) maps \(H^{k}(\rz^3)\) into
\(H^{k-1}(\rz^3)\) for all \(k\). Therefore, 
\(E(\bp)^{k+1}\chi E(\bp)^{-1}V(1-\tilde{\chi})\varphi_{n}\in
L^2(\rz^3)\), and so
\begin{align*}
  (\varphi_{n}&,(1-\tilde{\chi}) V E(\textbf{p})^{-1}\chi
  E(\textbf{p})^{k+1}f)\\ 
  &= ( E(\textbf{p})^{k+1} \chi E(\textbf{p})^{-1} V
  (1-\tilde{\chi})\varphi_{n},f)\\ 
  &=( E(\textbf{p})^{k+1} \chi E(\textbf{p})^{-1} (1-\tilde{\chi})
  E(\textbf{p})E(\textbf{p})^{-1}  V \varphi_{n},f)\,. 
\end{align*}
Here $E(\textbf{p})^{-1} V$ is bounded by \eqref{Kato's ineq}, and $\chi
E(\bp)^{-1}(1-\tilde{\chi})$ is a smoothing operator by the
choice of the supports of $\chi$ and $\tilde{\chi}$. It then follows
that $\{E(\textbf{p})^{k+1} \chi E(\textbf{p})^{-1} (1-\tilde{\chi})
E(\textbf{p})E(\textbf{p})^{-1}  V \varphi_{n}\}_{n\in\nz}$ is a uniformly
bounded sequence in $L^2(\mathbb{R}^3)$ and hence there exists $w_4
\in L^2(\mathbb{R}^3)$ such that 
\begin{equation*}
  \lim_{n \rightarrow +\infty} (\varphi_{n},((1-\tilde{\chi}) V
  E(\textbf{p})^{-1}\chi E(\textbf{p})^{k+1}f)=(w_4,f)\,. 
\end{equation*}

For the third term in \eqref{terms}, we have to separate the cases
$k=0$ and $k \geq 1$.

Let $k=0$. The terms $R_{\gamma^{\rm HF}} \varphi$ and $K_{\gamma^{\rm HF}}
\varphi$ belong to $L^2(\mathbb{R}^3)$, since \(R_{\gamma^{\rm
    HF}}\in L^{\infty}(\rz^3)\) (see \eqref{eq:propR-gamma}) and
\(K_{\gamma^{\rm HF}}\) is Hilbert-Schmidt (see Lemma~\ref{spectrum}), 
and therefore
\begin{equation*}
  \mathfrak{b}_{\gamma^{\rm HF}}(\varphi, E(\bp)^{-1}\chi E(\bp)f)=
  \alpha\,(E(\bp) \chi E(\bp)^{-1}(
  R_{\gamma^{\rm HF}}-K_{\gamma^{\rm HF}}) \varphi,f)\,. 
\end{equation*}
Assume now $k\geq 1$. With $\tilde{\chi}$ as before,
\begin{align}\label{S2} 
  \mathfrak{b}_{\gamma^{\rm HF}}(\varphi,E(\bp)^{-1}\chi
  E(\bp)^{k+1}f)
  &=\alpha\,(\tilde{\chi}(R_{\gamma^{\rm HF}}-K_{\gamma^{\rm HF}})\varphi,
  E(\bp)^{-1}\chi E(\bp)^{k+1}f)\\ 
  &\ +\alpha\,((1-\tilde{\chi})(R_{\gamma^{\rm HF}}-K_{\gamma^{\rm HF}}) \varphi,
  E(\bp)^{-1}\chi E(\bp)^{k+1}f)\,.
  \nonumber 
\end{align}
By the induction hypothesis and Lemma~\ref{regfj} (see
Appendix~\ref{app:lemmata})
we have that
$\tilde{\chi}R_{\gamma^{\rm HF}} \varphi$ and
$\tilde{\chi}K_{\gamma^{\rm HF}}\varphi$ belong to $H^k(\mathbb{R}^3)$. 
Therefore there exists $w_5 \in L^2(\mathbb{R}^3)$ such that
\begin{equation*}
  (\tilde{\chi}(R_{\gamma^{\rm HF}}-K_{\gamma^{\rm HF}}) \varphi,
  E(\bp)^{-1}\chi E(\bp)^{k+1}f) = (w_5,f)\,. 
\end{equation*}
For the second term in \eqref{S2} we find, since
$R_{\gamma^{\rm HF}}\varphi , \, K_{\gamma^{\rm HF}}\varphi\in
L^2(\mathbb{R}^3)$, that
\begin{align*}
  ((1-\tilde{\chi})(R_{\gamma^{\rm HF}}&-K_{\gamma^{\rm HF}}) \varphi,
  E(\bp)^{-1}\chi E(\bp)^{k+1}f)\\ 
  &=(\chi
  E(\bp)^{-1}(1-\tilde{\chi})(R_{\gamma^{\rm HF}}-K_{\gamma^{\rm HF}})
  \varphi, E(\bp)^{k+1}f)\,, 
\end{align*}
and the result follows since $\chi E(\mathbf{p})^{-1}
(1-\tilde{\chi})$ is a smoothing operator. 
\end{proof}

\subsection{Exponential decay of the Hartree-Fock orbitals}

The pointwise exponential decay \eqref{eq:decay} will be a consequence
of Proposition~\ref{lem:L2-bound} and Lemma~\ref{invt} below.
\begin{proposition}\label{lem:L2-bound}
 Let \(\gamma^{\rm HF}\) be a Hartree-Fock minimizer, let
 $h_{\gamma^{\rm HF}}$ be the corresponding Hartree-Fock operator as defined in  
 \eqref{hgamma}, and let \(\{\varphi_{i}\}_{i=1}^N\) be  the Hartree-Fock
orbitals, such that
 \begin{align*}
   h_{\gamma^{\rm HF}}\varphi_{i}=\varepsilon _{i}\varphi_{i}\ ,\ i=1,\ldots,N\,, 
 \end{align*}
 with $0>\varepsilon_N\geq\ldots\geq\varepsilon_{1}>{}-\alpha^{-1}$
the \(N\) lowest eigenvalues of \(h_{\gamma^{\rm HF}}\). 
\begin{enumerate}
\item[{\rm (i)}]Let
$\nu_{\varepsilon_N}:=\sqrt{-\varepsilon_N(2\alpha^{-1}+\varepsilon_N)}$.
Then $\varphi_i \in \mathcal{D}(e^{\beta|\,\cdot\,|})$ for every
$\beta<\nu_{\varepsilon_N}$ and \(i\in\{1,\ldots,N\}\).
\item[{\rm (ii)}] Assume \(h_{\gamma^{\rm HF}}\varphi=\varepsilon\varphi\)
  for some \(\varepsilon\in[\varepsilon_{N},0)\), and let
$\nu_{\varepsilon}:=\sqrt{-\varepsilon(2\alpha^{-1}+\varepsilon)}$.
Then $\varphi\in \mathcal{D}(e^{\beta|\,\cdot\,|})$ for every
$\beta<\nu_{\varepsilon}$.
\end{enumerate}
\end{proposition}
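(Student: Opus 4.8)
The plan is to prove the exponential $L^2$-decay via an exponential weight (Agmon-type) argument adapted to the pseudorelativistic kinetic energy $E(\bp)$. Since the kinetic operator is nonlocal, the classical commutator trick $e^{F}(-\Delta)e^{-F}$ must be replaced by a suitable estimate on $E(\bp)$ conjugated by $e^{F}$, where $F$ is a bounded approximation to $\beta|\bx|$, e.g.\ $F_{\delta}(\bx)=\beta|\bx|/(1+\delta|\bx|)$ (so that $F_{\delta}$ is bounded, Lipschitz with $|\nabla F_{\delta}|\le\beta$, and increases to $\beta|\bx|$ as $\delta\downarrow0$). The key analytic input is a bound of the form
\begin{align*}
  \mathrm{Re}\,\big(e^{F_{\delta}}u, E(\bp)\,e^{-F_{\delta}}u\big)
  \ge \big(u,\sqrt{-\Delta+\alpha^{-2}-\beta^{2}}\;u\big)
  = \big(u,\sqrt{E(\bp)^{2}-\beta^{2}}\;u\big)
\end{align*}
valid for $\beta<\alpha^{-1}$ and $u$ in a suitable dense set, which reflects the ``relativistic'' shift $|\bp|^{2}\mapsto|\bp|^{2}-\beta^{2}$ under the conjugation; such estimates are available in the literature on relativistic Schr\"odinger operators (for instance via the Kato-type representation of $E(\bp)$ and pointwise inequalities on its kernel). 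Given this, one tests the eigenvalue equation $h_{\gamma^{\rm HF}}\varphi=\varepsilon\varphi$ with $\varphi_{\delta}:=e^{2F_{\delta}}\varphi$ (or works with $u_{\delta}:=e^{F_{\delta}}\varphi$ after a suitable cutoff/regularization to make all pairings legitimate).

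Concretely, the steps I would carry out are: (1) fix $\beta<\nu_{\varepsilon}=\sqrt{-\varepsilon(2\alpha^{-1}+\varepsilon)}$ and note $\sqrt{\alpha^{-2}-\beta^{2}}>\alpha^{-1}+\varepsilon$, i.e.\ the would-be ``mass gap'' after conjugation strictly exceeds the energy distance $\alpha^{-1}+\varepsilon$ to the bottom of the form; (2) write the quadratic form identity obtained from $(e^{2F_{\delta}}\varphi,(h_{\gamma^{\rm HF}}-\varepsilon)\varphi)=0$, using that the potential terms $-V$, $R_{\gamma^{\rm HF}}$, $K_{\gamma^{\rm HF}}$ are relatively form-bounded and, crucially, that $V$, $R_{\gamma^{\rm HF}}$ and the Hilbert-Schmidt kernel of $K_{\gamma^{\rm HF}}$ all tend to $0$ at infinity (this uses \eqref{eq:propR-gamma}, \eqref{Kato's ineq}, and Lemma~\ref{spectrum}); (3) split $\rz^{3}=B_{R}\cup B_{R}^{c}$ and absorb the exterior potential contributions into the kinetic term for $R$ large, so that on $B_{R}^{c}$ one gets, schematically,
\begin{align*}
  \big(u_{\delta},\big(\sqrt{E(\bp)^{2}-\beta^{2}}-\alpha^{-1}-\varepsilon-o_{R}(1)\big)u_{\delta}\big)\le C_{R}\|\,\mathbf{1}_{B_{R}}u_{\delta}\|^{2},
\end{align*}
where the inner-ball term $\|\mathbf{1}_{B_{R}}u_{\delta}\|^{2}\le e^{2\beta R}\|\varphi\|^{2}$ is finite and $\delta$-independent; (4) since $\sqrt{E(\bp)^{2}-\beta^{2}}\ge\sqrt{\alpha^{-2}-\beta^{2}}>\alpha^{-1}+\varepsilon$, for $R$ large the operator on the left is bounded below by a positive constant, yielding $\|u_{\delta}\|^{2}=\|e^{F_{\delta}}\varphi\|^{2}\le C$ uniformly in $\delta$; (5) let $\delta\downarrow0$ and invoke monotone convergence to conclude $e^{\beta|\,\cdot\,|}\varphi\in L^{2}(\rz^{3};\cz^{q})$, i.e.\ $\varphi\in\mathcal{D}(e^{\beta|\,\cdot\,|})$. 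Part~(i) is the special case $\varepsilon=\varepsilon_{N}$ applied to each $\varphi_{i}$, using that $\nu_{\varepsilon}$ is decreasing in $|\varepsilon|$ on $(-\alpha^{-1},0)$ so $\nu_{\varepsilon_i}\ge\nu_{\varepsilon_N}$; part~(ii) is the general statement.

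The main obstacle I anticipate is making the formal conjugation rigorous: unlike $-\Delta$, the operator $E(\bp)$ is nonlocal, so $e^{F_{\delta}}E(\bp)e^{-F_{\delta}}$ is not a differential operator and the commutator $[E(\bp),e^{F_{\delta}}]$ must be controlled by a genuine operator inequality rather than a pointwise computation; establishing the lower bound $\mathrm{Re}\,(e^{F_{\delta}}u,E(\bp)e^{-F_{\delta}}u)\ge(u,\sqrt{E(\bp)^{2}-\beta^{2}}\,u)$ with constants uniform in $\delta$ is the technical heart, and I would do it either through the subordination formula $E(\bp)=c\int_{0}^{\infty}(1-e^{-tE(\bp)^2})t^{-3/2}dt+\text{const}$ combined with the heat-kernel Gaussian bound, or by a direct Fourier/convolution estimate exploiting that the integral kernel of $E(\bp)$ decays exponentially and the multiplier $e^{F_{\delta}(\bx)-F_{\delta}(\by)}\le e^{\beta|\bx-\by|}$ is dominated by that decay when $\beta<\alpha^{-1}$. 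A secondary technical point is justifying that $e^{2F_{\delta}}\varphi$ is an admissible test function in the form of $h_{\gamma^{\rm HF}}$; since $F_{\delta}$ is bounded this is routine once the conjugation estimate is in hand, but one should be careful to first establish the bound for $\varphi$ replaced by a truncation and then pass to the limit.
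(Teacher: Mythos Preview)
Your Agmon-weight approach is a genuinely different route from the paper's, which instead uses O'Connor's Lemma (analytic dilation): the paper conjugates the Hartree--Fock operator by $W(\kappa)=e^{i\kappa\cdot\bx}$, shows the resulting family $\mathtt{H}(\kappa)$ is holomorphic of type~(B) in the strip $\|\mathrm{Im}\,\kappa\|_{\rz^3}<\alpha^{-1}$, locates its essential spectrum, and deduces $L^2$-exponential decay by analytic continuation of the eigenprojection (Lemma~\ref{lem:O'Connor}). Your method is more hands-on and avoids complex analysis; conversely, the paper's route sidesteps your acknowledged ``technical heart'' since sectoriality and holomorphy of the conjugated kinetic form $\mathtt{t}(\kappa)$ are immediate from its explicit symbol \eqref{eq:sectQ}.

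There is, however, a genuine gap in your handling of the exchange term $K_{\gamma^{\rm HF}}$ and in the logical order of (i) and (ii). You propose to prove (ii) as the general statement and obtain (i) as a special case, but the exchange contribution $(e^{2F_\delta}\varphi,\,K_{\gamma^{\rm HF}}\varphi)=\sum_j\int c_j(\bx)\,e^{2F_\delta(\bx)}\,\overline{\varphi(\bx)}\,\varphi_j(\bx)\,d\bx$, with $c_j(\bx)=\int\overline{\varphi_j(\by)}\varphi(\by)|\bx-\by|^{-1}\,d\by\in L^\infty$, is controlled via Cauchy--Schwarz only by $\|c_j\|_\infty\|e^{F_\delta}\varphi\|\,\|e^{F_\delta}\varphi_j\|$, and a $\delta$-uniform bound on $\|e^{F_\delta}\varphi_j\|$ is precisely the content of (i). That the Hilbert--Schmidt kernel of $K_{\gamma^{\rm HF}}$ ``tends to $0$ at infinity'' does not help here: the conjugated kernel acquires a factor $e^{F_\delta(\bx)-F_\delta(\by)}\le e^{\beta|\bx-\by|}$, and uniform boundedness of the conjugated operator already presupposes exponential decay of the orbitals. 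The argument, as written, is circular.

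The remedy --- which the paper implements --- is the Lieb--Simon device: write the $N$ Hartree--Fock equations as a single system $\mathtt{H}\Phi=0$ on $[L^2(\rz^3)]^N$. In that formulation the exchange term becomes the \emph{matrix multiplication} operator $(\mathrm{K}_\gamma)_{ij}(\bx)=\int\varphi_i(\by)\overline{\varphi_j(\by)}|\bx-\by|^{-1}\,d\by$, which commutes with $e^{F_\delta}$ (respectively $W(\kappa)$) and whose entries lie in $L^3\cap L^\infty$ and vanish at infinity; it can then be absorbed exactly as you describe for $V$ and $R_{\gamma^{\rm HF}}$. Your Agmon scheme would go through for this system, yielding (i) for all orbitals simultaneously. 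Only afterwards can (ii) be proved for a general eigenfunction, now using (i) to guarantee that the conjugated exchange operator stays uniformly bounded; this is exactly the order the paper follows.
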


\begin{lemma}\label{invt}
Let \(E<0\)
and
$\nu_{E}:=\sqrt{|-E(2\alpha^{-1}+E)|}=\sqrt{|\alpha^{-2}-(E+\alpha^{-1})^2|}$. 

Then the operator $T(-{\rm i}\nabla)-E=\sqrt{-\Delta
  +\alpha^{-2}}-\alpha^{-1}-E$ is invertible and the integral kernel
of its inverse is given by  
\begin{align}\label{eq:kernel}
  (T-E)^{-1}(\bx,\by)&=G_{E}(\bx-\by)=
  \frac{(E+\alpha^{-1})
  e^{-\nu_{E}|\bx-\by|}}{4\pi
  |\bx-\by|} + \frac{\alpha^{-1}}{2
  \pi^2}\frac{K_1(\alpha^{-1}|\bx-\by|)}{|\bx-\by|}
  \nonumber
   \\ 
  &\quad+(\alpha^{-2}-\nu_{E}^2)\frac{\alpha^{-1}}{2 \pi^2}
  \Big[\frac{K_1(\alpha^{-1}|\cdot|)}{|\cdot|} *
  \frac{e^{-\nu_{E}|\,\cdot\,|}}{4\pi|\cdot|}\Big](\bx-\by)\,,
\end{align}
where $K_1$ is a modified Bessel function of the second
kind \cite{AbraSte}.

Moreover, 
\begin{align}\label{est1}
  0\leq G_E(\bx)&\leq
  C_{\alpha,E}\frac{
  e^{-\nu_{E}|\bx|}}{4\pi|\bx|} 
  + \frac{\alpha^{-1}}{2
  \pi^2}\frac{K_1(\alpha^{-1}|\bx|)}{|\bx|}\,, \\\label{est2}
  e^{\beta|\,\cdot\,|}G_E&\in L^q(\rz^3) \ \text{ for all }\ 
  \beta<\nu_{E} \ \text{ and }\ q\in [1,3/2)\,.
\end{align}
\end{lemma}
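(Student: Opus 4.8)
The plan is to compute the integral kernel of $(T-E)^{-1}$ explicitly via the Fourier transform and the algebraic identity that splits the resolvent of $\sqrt{-\Delta+\alpha^{-2}}$ into a massive Helmholtz part plus a Bessel correction. First I would write, in Fourier variables, $(T-E)^{-1}$ as multiplication by $\big(\sqrt{|\bp|^2+\alpha^{-2}}-\alpha^{-1}-E\big)^{-1}$; since $E<0$ and $\sqrt{|\bp|^2+\alpha^{-2}}\ge\alpha^{-1}$, the denominator is bounded below by $-E>0$, so $T-E$ is invertible with bounded inverse. To extract the kernel, rationalize: multiply numerator and denominator by $\sqrt{|\bp|^2+\alpha^{-2}}+\alpha^{-1}+E$, giving
\begin{align*}
  \frac{1}{\sqrt{|\bp|^2+\alpha^{-2}}-\alpha^{-1}-E}
  =\frac{\sqrt{|\bp|^2+\alpha^{-2}}+\alpha^{-1}+E}{|\bp|^2+\alpha^{-2}-(\alpha^{-1}+E)^2}
  =\frac{\sqrt{|\bp|^2+\alpha^{-2}}+\alpha^{-1}+E}{|\bp|^2+\nu_E^2}\,,
\end{align*}
using $\alpha^{-2}-(\alpha^{-1}+E)^2=-E(2\alpha^{-1}+E)=\nu_E^2$ (here $\nu_E^2>0$ precisely because $E\in(-2\alpha^{-1},0)$; for $E\le-2\alpha^{-1}$ one takes absolute values as in the statement, and the argument is analogous with $\nu_E^2$ replaced by $|\nu_E^2|$ — I would remark on this but focus on the relevant regime $\varepsilon_N\in(-\alpha^{-1},0)$). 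Splitting the numerator as $(\alpha^{-1}+E)+\sqrt{|\bp|^2+\alpha^{-2}}$ and writing the second piece as $(|\bp|^2+\alpha^{-2})/\sqrt{|\bp|^2+\alpha^{-2}}=(|\bp|^2+\nu_E^2)/\sqrt{|\bp|^2+\alpha^{-2}}+(\alpha^{-2}-\nu_E^2)/\sqrt{|\bp|^2+\alpha^{-2}}$ yields the three-term decomposition
\begin{align*}
  \frac{1}{T(\bp)-E}=\frac{\alpha^{-1}+E}{|\bp|^2+\nu_E^2}
  +\frac{1}{\sqrt{|\bp|^2+\alpha^{-2}}}
  +(\alpha^{-2}-\nu_E^2)\,\frac{1}{\sqrt{|\bp|^2+\alpha^{-2}}}\cdot\frac{1}{|\bp|^2+\nu_E^2}\,.
\end{align*}
Taking inverse Fourier transforms term by term: the first gives the Yukawa kernel $(\alpha^{-1}+E)e^{-\nu_E|\bx|}/(4\pi|\bx|)$; the second is the known kernel of $(-\Delta+\alpha^{-2})^{-1/2}$, namely $\frac{\alpha^{-1}}{2\pi^2}\,K_1(\alpha^{-1}|\bx|)/|\bx|$ (this is standard — see e.g. the Bessel-kernel formulas, and I would cite \cite{LiebLoss} or compute it via the subordination formula $\xi^{-1/2}=\frac{1}{\sqrt\pi}\int_0^\infty t^{-1/2}e^{-t\xi}\,dt$); the third is the product of two Fourier multipliers, hence the convolution of the two preceding kernels, which gives exactly the last term in \eqref{eq:kernel}. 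This establishes \eqref{eq:kernel}.

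Next, for \eqref{est1}: positivity $G_E\ge0$ follows because all three kernels in \eqref{eq:kernel} are non-negative — the Yukawa and Bessel kernels are manifestly positive, and so is their convolution, while the prefactor $\alpha^{-2}-\nu_E^2=(\alpha^{-1}+E)^2\ge0$; note also $\alpha^{-1}+E>0$ since $E>-\alpha^{-1}$. For the upper bound I need to control the convolution term by the Yukawa kernel. The estimate $\big[\frac{K_1(\alpha^{-1}|\cdot|)}{|\cdot|}*\frac{e^{-\nu_E|\cdot|}}{4\pi|\cdot|}\big](\bx)\le C\,e^{-\nu_E|\bx|}/(4\pi|\bx|)$ for $|\bx|$ bounded away from both $0$ and $\infty$ is routine; near $\bx=0$ one uses that $K_1(r)\sim 1/r$ so $K_1(\alpha^{-1}|\cdot|)/|\cdot|\in L^p_{\mathrm{loc}}$ for $p<3/2$ and the convolution is continuous there; for $|\bx|\to\infty$ one uses the exponential decay $K_1(r)\lesssim r^{-1/2}e^{-r}$ and $\nu_E<\alpha^{-1}$ (again because $E>-\alpha^{-1}$ forces $\nu_E^2=\alpha^{-2}-(\alpha^{-1}+E)^2<\alpha^{-2}$), so the convolution decays at the slower rate $e^{-\nu_E|\bx|}$; a standard convolution-of-exponentials estimate then gives a polynomial-times-$e^{-\nu_E|\bx|}$ bound, which is absorbed into $C_{\alpha,E}e^{-\nu_E|\bx|}/(4\pi|\bx|)$ after possibly enlarging the constant (the $1/|\bx|$ singularity on the right dominates near zero; for $|\bx|\ge1$ the polynomial factor is harmless as $e^{-\nu_E|\bx|}$ can absorb any power at the cost of a constant, or one simply keeps track that the bound is stated with the $1/|\bx|$ present). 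I would collect the Bessel term unchanged into the second summand of \eqref{est1}.

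Finally \eqref{est2}: write $e^{\beta|\bx|}G_E(\bx)$ using \eqref{est1}. For the Yukawa part, $e^{\beta|\bx|}e^{-\nu_E|\bx|}/|\bx|=e^{-(\nu_E-\beta)|\bx|}/|\bx|$ with $\nu_E-\beta>0$; this lies in $L^q(\rz^3)$ for $q\in[1,3/2)$ because near the origin $|\bx|^{-1}\in L^q_{\mathrm{loc}}(\rz^3)$ iff $q<3$, while the exponential factor gives integrability at infinity for every $q\ge1$. For the Bessel part, $e^{\beta|\bx|}K_1(\alpha^{-1}|\bx|)/|\bx|$ behaves like $|\bx|^{-2}$ near $0$ (since $K_1(r)\sim r^{-1}$), which is in $L^q_{\mathrm{loc}}(\rz^3)$ iff $q<3/2$ — this is exactly where the restriction $q<3/2$ comes from; at infinity $K_1$ decays like $e^{-\alpha^{-1}|\bx|}$ and $\beta<\nu_E<\alpha^{-1}$ gives a net exponential decay, so $L^q$-integrability holds there for all $q\ge1$. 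Hence $e^{\beta|\cdot|}G_E\in L^q(\rz^3)$ for $q\in[1,3/2)$, completing the proof.

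The main obstacle, and the only genuinely non-routine point, is the convolution estimate bounding the third kernel in \eqref{eq:kernel} by (a constant times) the Yukawa kernel uniformly — one must handle three regimes ($|\bx|$ near $0$, $|\bx|$ near $\infty$, $|\bx|$ of order one) and correctly track that the relevant decay rate is the smaller exponent $\nu_E$ (not $\alpha^{-1}$), which is guaranteed by $E>-\alpha^{-1}$; once this is in place, everything else is standard harmonic analysis.
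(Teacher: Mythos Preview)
Your derivation of the kernel formula \eqref{eq:kernel} via rationalization in Fourier space is correct and in fact more explicit than the paper, which simply cites the formula from \cite{SS}. Your treatment of \eqref{est2} is also essentially the same as the paper's.

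The gap is in your argument for the upper bound in \eqref{est1}. You write that a ``standard convolution-of-exponentials estimate'' gives a bound of the form $P(|\bx|)\,e^{-\nu_E|\bx|}$ with $P$ polynomial, and then claim this can be absorbed into $C\,e^{-\nu_E|\bx|}/|\bx|$ because ``$e^{-\nu_E|\bx|}$ can absorb any power at the cost of a constant''. This is false: you have no extra exponential decay to spend, since $\nu_E$ is exactly the rate appearing on the right-hand side of \eqref{est1}. A bound of the shape $(1+|\bx|^k)e^{-\nu_E|\bx|}$ is \emph{not} dominated by $C\,e^{-\nu_E|\bx|}/|\bx|$ for large $|\bx|$. (It would suffice for \eqref{est2}, but not for \eqref{est1} as stated.)

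The paper avoids this by a one-line application of Newton's theorem. Using $e^{-\nu_E|\by|}\le e^{-\nu_E|\bx|}e^{\nu_E|\bx-\by|}$, one pulls the factor $e^{-\nu_E|\bx|}$ outside the integral, leaving
\[
  \int_{\rz^3}\frac{K_1(\alpha^{-1}|\bx-\by|)\,e^{\nu_E|\bx-\by|}}{|\bx-\by|}\,\frac{1}{4\pi|\by|}\,d\by\,.
\]
The function $\bz\mapsto K_1(\alpha^{-1}|\bz|)e^{\nu_E|\bz|}/|\bz|$ is radial and (since $\nu_E<\alpha^{-1}$) integrable on $\rz^3$; Newton's theorem then bounds the convolution with $1/(4\pi|\cdot|)$ by its total mass times $1/(4\pi|\bx|)$. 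This produces exactly $C_{\alpha,E}\,e^{-\nu_E|\bx|}/(4\pi|\bx|)$, with the crucial $1/|\bx|$ factor for free. Your three-regime case analysis can be made to work, but only if you upgrade the large-$|\bx|$ estimate to track the $|\by|^{-1}$ in the Yukawa kernel rather than discarding it --- the Newton's-theorem argument does this automatically and is what you should use.
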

\begin{pf*}{\it Proof of Lemma~\ref{invt}}
The formula \eqref{eq:kernel} for the kernel of $(T-E)^{-1}$ can be
found in \cite[eq.\ (35)]{SS}.

The estimate \eqref{est1} is a consequence of the bound
\begin{align*}
  \frac{K_1(\alpha^{-1}|\cdot|)}{|\cdot|} *
  \frac{e^{-\nu_{E}|\,\cdot\,|}}{4\pi|\cdot|}(\bx) \leq
  C_{\alpha,E} \frac{
  e^{-\nu_{E}|\bx|}}{4 \pi |\bx|}\,. 
\end{align*}
This estimate, on the other hand, follows from Newton's theorem (see
e.\ g.\ \cite{LiebLoss}),
\begin{align*}
   &\int_{\rz^3}
   \frac{K_1(\alpha^{-1}|\bx-\by|)}{|\bx-\by|}
   \,\frac{e^{-\nu_{E}|\by|}}{4\pi|\by|}
   \,d\by\\ 
   & \leq  e^{-\nu_{E}|\bx|}\int_{\rz^3}
   \frac{K_1(\alpha^{-1}|\bx-\by|)}{|\bx-\by|}
   \,\frac{e^{\nu_{E}|\bx-\by|}}{4\pi|\by|}
   \,d\by
   \leq  \frac{e^{-\nu_{E}|\bx|}}{4\pi|\bx|}
  \int_{\mathbb{R}^3}
  \frac{K_1(\alpha^{-1}|\bz|)}{|\bz|}
  \,e^{\nu_{E}|\bz|}
  \,d \mathbf{z}\,. 
\end{align*}
The last integral is finite since $\nu_{E}<\alpha^{-1}$, 
using the following properties of $K_1$ (see
\cite[8.446, 8.451.6]{GradshteynRyzhik1980}): 
\begin{equation}\label{eq:app2}
  K_1(t) \leq \frac{1}{|t|} \ \text{ for all } \ t>0\,,
\end{equation}
and for every $r>0$ there exists $c_r$ such that
\begin{equation}\label{eq:app3}
  K_1(t) \leq c_r \frac{e^{-t}}{\sqrt{t}} \ \text{ for all } \ t\ge r\,.
\end{equation}
The estimate \eqref{est2} is a consequence of \eqref{est1},
\eqref{eq:app2}, and \eqref{eq:app3}.
\end{pf*}

Before proving Proposition~\ref{lem:L2-bound}, we apply it, and
Lemma~\ref{invt}, to prove the pointwise exponential decay, i.e., the
estimate in \eqref{eq:decay}. 

\begin{pf*}{Proof of Theorem~\ref{HF2} {\rm (iii)}} 
Fix $i \in \{1,\dots, N\}$. If $Z\alpha<1/2$ we can rewrite the
Hartree-Fock equation \eqref{eq:HF-equations} as
\begin{align}\label{point}
  \big(\sqrt{-\Delta +\alpha^{-2}}-\alpha^{-1}\big) \varphi_i=
  \varepsilon_i\varphi_i +\frac{Z \alpha}{|\bx|}\,
  \varphi_i - \alpha R_{\gamma^{\rm HF}} \varphi_i + \alpha
  K_{\gamma^{\rm HF}} \varphi_i\,.  
\end{align}
The idea of the proof is to study the elliptic regularity of the
corresponding parametrix. By Lemma \ref{invt} we find that
\begin{align*}
  \varphi_i(\bx) =\int_{\rz^3} 
  (T-\varepsilon_N)^{-1}(\bx,\by)\big[(\varepsilon_i
  -\varepsilon_N)\varphi_i+\frac{Z\alpha}{|\cdot|}\,\varphi_i -
  \alpha R_{\gamma^{\rm HF}}\varphi_i+\alpha K_{\gamma^{\rm HF}}
  \varphi_i\big](\by)
  \,d\by\,.
\end{align*}

In the case $1/2\leq Z \alpha<2/\pi$, on the other hand, the operator
of which we are studying the eigenfunctions cannot be written as a sum
of operators acting on $L^2(\rz^3)$ and hence we cannot write directly
the equation \eqref{eq:HF-equations} as in \eqref{point}. However,
since the eigenfunctions are smooth away from the origin we are able
to write a pointwise equation for a localized version of
$\varphi_i$. In fact,
let $\chi \in
C^{\infty}(\mathbb{R}^3)$ be such that $0 \leq \chi
\leq 1$ and  
\begin{equation*}
  \chi(\mathbf{x})=\left\{ \begin{array}{ll}
  1 & \mbox{ if } |\mathbf{x}|\geq 1,\\ 
  0 & \mbox{ if } |\mathbf{x}| \leq 1/2\,,
\end{array} \right.
\end{equation*}
and let, for \(R>0\), \(\chi_R(\bx)=\chi(\bx/R)\). 
We will derive an equation (similar to \eqref{point}) for
$T(-{\rm i}\nabla)(\chi_R \varphi_i)$. Indeed, for every $u \in 
H^{1/2}(\mathbb{R}^3)$ we have that
\begin{align*}
  (u,h_{\gamma^{\rm HF}}(\chi_R \varphi_i))&=\mathfrak{e}(u,\chi_R
  \varphi_i)-\alpha^{-1}(u,\chi_R\varphi_i)-\mathfrak{v}(u,\chi_R \varphi_i)+
  \mathfrak{b}_{\gamma^{\rm HF}}(u,\chi_R \varphi_i)\\ 
  &=(\chi_R u,h_{\gamma^{\rm HF}}\varphi_i)+\mathfrak{e}(u,\chi_R
  \varphi_i)-\mathfrak{e}(\chi_R u,\varphi_i) \\ 
  &\ \ +\mathfrak{b}_{\gamma^{\rm HF}}(u, \chi_R \varphi_i) 
  -\mathfrak{b}_{\gamma^{\rm HF}}(\chi_R u,\varphi_i)\,. 
\end{align*}
Note that
\begin{align*}
  \mathfrak{e}(u,\chi_R \varphi_i)-\mathfrak{e}(\chi_R u, \varphi_i)=(u,
  [E(\mathbf{p}),\chi_R] \varphi_i)\,, 
\end{align*}
where $[E(\mathbf{p}),\chi_R]$ is a bounded operator in
$L^2(\mathbb{R}^3)$ (see Appendix~\ref{app:pseudo}), and
\begin{equation*}
  \mathfrak{b}_{\gamma^{\rm HF}}(u,\chi_R\varphi_i)
  -\mathfrak{b}_{\gamma^{\rm HF}} 
  (\chi_R u,\varphi_i)=(u,\mathcal{K}\varphi_i)\,, 
\end{equation*}
with $\mathcal{K}$ the bounded operator on $L^2(\mathbb{R}^3)$ given
by the kernel
\begin{align}\label{eq:kernelSkriptK}
  \mathcal{K}(\bx,\by)=\alpha\sum_{j=1}^N \varphi_j
  (\bx)\overline{\varphi_j(\by)}\,\frac{\chi_R
  (\bx)-\chi_R(\by)}{|\bx-\by|}\,.  
\end{align} 
Therefore there exists $w \in L^2(\mathbb{R}^3)$ such that
\begin{align*}
  \mathfrak{e}(u,\chi_R \varphi_i)&=(\varepsilon_i+\alpha^{-1}) 
  (u,\chi_R\varphi_i)+\mathfrak{v}(u,\chi_R\varphi_i)
  -\mathfrak{b}_{\gamma^{\rm HF}}(u,\chi_R\varphi_i)\\ 
  &\ \ +(u,[E(\mathbf{p}),\chi_R]\varphi_i)+(u,\mathcal{K}\varphi_i)
  =(u,w)\,.
\end{align*}
Hence $\chi_R \varphi_i \in H^1(\mathbb{R}^3)$ and we can write the
pointwise equation  
\begin{align}\label{eq:locPoint}\nonumber
  (\sqrt{-\Delta+\alpha^{-2}}-\alpha^{-1})\chi_R\varphi_i
  &=\varepsilon_i\chi_R\varphi_i 
  +\frac{Z\alpha}{|\bx|}\chi_R\varphi_i-\alpha R_{\gamma^{\rm HF}} 
  \chi_R\varphi_i \\ 
  &\ \ +\alpha K_{\gamma^{\rm HF}}(\chi_R\varphi_i)+[E(\bp),\chi_R]
  \varphi_i+\mathcal{K}\varphi_i\,.
\end{align}
This is the substitute for \eqref{point} in the case $1/2\leq Z
\alpha<2/\pi$; if \(Z\alpha<1/2\), the proof below simplifies
somewhat, using \eqref{point} directly.

By Lemma~\ref{invt}, \eqref{eq:locPoint} implies that 
\begin{align}\label{eq:inverted}\nonumber
  \chi_R(\bx)\varphi_i(\bx)&=\int_{\rz^3}
  (T-\varepsilon_N)^{-1}(\bx,\by)\big[\frac{Z
  \alpha}{|\cdot|}\chi_R\varphi_i-\alpha R_{\gamma^{\rm HF}}\chi_R
  \varphi_i+\alpha K_{\gamma^{\rm HF}}(\chi_R \varphi_i) \\ 
  &\qquad\qquad\qquad
  +(\varepsilon_i- \varepsilon_N)\chi_R\varphi_i+
  [E(\bp),\chi_R]\varphi_i+\mathcal{K}\varphi_i \big]
  (\by)\,d\by\,. 
\end{align}

We will first show that, for all \(R>0\) and
\(\beta<\nu_{\varepsilon_N}\),  
\begin{align}\label{eq:firstBoot}
  \chi_R\varphi_ie^{\beta|\,\cdot\,|}\in
  L^p(\rz^3)+L^{\infty}(\rz^3) \ \text{ for }p\in[2,6)\,,
\end{align}
and then, by a
bootstrap argument, that \(\chi_R\varphi_ie^{\beta|\,\cdot\,|}\in
L^{\infty}(\rz^3)\), which is the claim of
Theorem~\ref{HF2} {\rm (iii)}.

We multiply \eqref{eq:inverted} by
\(\chi_{R/2}(\bx)e^{\beta|\bx|}\). 
Using that \(|(Z\alpha/|\by|)\chi_R(\by)|\le (Z\alpha)/R\) for all \(\by\in\rz^3\),
\eqref{eq:propR-gamma}, \eqref{def:K-gamma}, and \eqref{eq:kernelSkriptK}
(recall \eqref{eq:formMinimizer}, that \(\varphi_j\in H^{1/2}(\rz^3)\), and
\eqref{eq:Kato}) we get, for some constant $C=C_{R,\alpha}>0$, that  
\begin{align}\notag
  \big|\chi_R(\bx)\varphi_i(\bx)&e^{\beta|\bx|}\big|\leq
  C \chi_{R/2}(\bx)e^{\beta|\bx|}\int_{\rz^3}(T-\varepsilon_N)^{-1}
  (\bx,\by)\big[|\varphi_i(\by)|
  +\sum_{j=1}^N|\varphi_j(\by)|\big]\,d\by \\ 
  &+ \chi_{R/2}(\bx)e^{\beta|\bx|}
  \Big|\int_{\rz^3}
  (T-\varepsilon_N)^{-1}(\bx,\by)\big([E(\bp),\chi_R]
  \varphi_i\big)(\by)\,d\by \Big|\,. 
  \label{smile} 
\end{align}
We will show that the first term on the right side of \eqref{smile}
belongs to \(L^{p}(\mathbb{R}^3)\) for
$p\in[2,6)$, and that the second belongs to
\(L^{\infty}(\mathbb{R}^3)\). This will prove \eqref{eq:firstBoot}. 

The first term
on the right side of \eqref{smile} is a sum of terms of the form 
\begin{equation}\label{hf}
  h_{f}(\mathbf{x}):=\chi_{R/2}(\bx) e^{\beta|\bx|}
  \int_{\rz^3}
  (T-\varepsilon_{N})^{-1}(\bx,\by)\,|f(\by)|  
  \,d\by\,,
\end{equation} 
with $f$ such that, by Proposition~\ref{lem:L2-bound}, $f e^{\beta
  |\,\cdot\,|} \in L^2 (\mathbb{R}^3)$. By Lemma
\ref{invt} we have,  using $e^{|\mathbf{x}|-|\by|} \leq
e^{|\mathbf{x}-\by|}$, that
\begin{align*}
  |h_f(\bx)| & \leq C\int_{\rz^3} 
  e^{\beta|\bx-\by|} G_{\varepsilon_N}(\bx-\by)
  e^{\beta |\by|}|f(\by)|\,d\by\,.
\end{align*} 
From Young's inequality it follows
that $h_{f} \in L^p(\mathbb{R}^3)$ for all $p \in [2,6)$, since
\(\beta<\nu_{\varepsilon_N}\), so (by
Proposition~\ref{lem:L2-bound}) $f e^{\beta
  |\,\cdot\,|}\in L^2(\mathbb{R}^3)$ and (by Lemma
\ref{invt})
$e^{\beta|\,\cdot\,|}G_{\varepsilon_N}\in L^q(\mathbb{R}^3)$ for
all $q \in [1,3/2)$.  

We now prove that the second term on the right side of \eqref{smile} is in
$L^{\infty}(\mathbb{R}^3)$. 
This  
follows from Young's inequality once we have proved that
\begin{align}\label{eq:claim2}
  e^{\beta|\,\cdot\,|}[E(\mathbf{p}),\chi_R] \varphi_i\in
  L^p(\mathbb{R}^3)
  \ \text{ for } p\in[2,\infty)\,,
\end{align}
 since 
\begin{align*} 
   e^{\beta|\bx|}\Big|\int_{\rz^3}&(T-\varepsilon_N)^{-1}
  (\bx,\by)\big([E(\bp),\chi_R]\varphi_i\big)
  (\by)\,d\by\Big|\\  
  \leq \int_{\rz^3}&e^{\beta|\bx-\by|}G_{\varepsilon_N}(\bx-\by)
  e^{\beta|\by|}\big|[E(\bp),\chi_R]\varphi_i\big|
  (\by)\,d\by\,,
\end{align*}
and $e^{\beta|\,\cdot\,|}G_{\varepsilon_N}\in
L^q(\mathbb{R}^3)$ for $q\in[1,3/2)$. 

To prove \eqref{eq:firstBoot} it therefore remains to prove
\eqref{eq:claim2}. 
To do so, we consider
a new localization function. Let $\eta \in C^{\infty}_0(\mathbb{R}^3)$
be 
such that $0 \leq \eta \leq 1$ and 
\begin{align*}
  \eta(\bx)= 
  \begin{cases}
  \ 1 & \text{ if } R/4\leq |\bx|\leq 3R/2 \\
  \ 0 & \text{ if } |\bx|\leq R/8 \text{ or }|\bx|\geq 2R\,, 
  \end{cases}
\end{align*}
and consider the following splitting
\begin{align}\notag 
  & e^{\beta|\,\cdot\,|} [E(\mathbf{p}),\chi_R] \varphi_i 
  = e^{\beta|\,\cdot\,|} \eta
  [E(\mathbf{p}),\chi_R] (\eta \varphi_i)+ e^{\beta|\,\cdot\,|}\eta
  [E(\mathbf{p}),\chi_R] ((1-\eta)\varphi_i) \\ \label{4terms} 
  &\hspace{1cm} +e^{\beta|\,\cdot\,|} (1-\eta)[E(\mathbf{p}),\chi_R]
  (\eta\varphi_i)+e^{\beta|\,\cdot\,|} (1-\eta) [E(\mathbf{p}),\chi_R] (1-\eta)
  \varphi_i\,.  
\end{align}
Since $\eta \varphi_i \in H^k(\mathbb{R}^3)$ for all
\(k\in\mathbb{N}\) (as proved earlier), $[E(\mathbf{p}),\chi_R]
(\eta \varphi_i)$ belongs to $H^k(\mathbb{R}^3)$ for all
\(k\in\mathbb{N}\). Hence, since
$\eta$ has compact support away from \(\bx=0\), the first term on the
right side of 
\eqref{4terms} is in $L^p(\mathbb{R}^3)$ for $p\in[1,\infty]$ 
by Sobolev's imbedding theorem (the term is smooth).

For the second term in \eqref{4terms} we proceed by duality: We will
prove that
\begin{align*}
  \psi(\bx):=\big(e^{\beta |\,\cdot\,|}\eta
  [E(\bp),\chi_R]((1-\eta)\varphi_i)\big)(\bx)
\end{align*}
defines a bounded linear functional on \(L^{q}(\rz^3)\) for any
\(q\in(1,2]\). It then follows that \(\psi\in L^p(\rz^3)\) for all
\(p\in[2,\infty)\).  

Note that \cite[7.12 Theorem (iv)]{LiebLoss}
\begin{align}\label{eq:formula T}\nonumber
  (g,&[\sqrt{-\Delta+\alpha^{-2}}-\alpha^{-1}]g) \\
  &=\frac{\alpha^{-2}}{4\pi^2}\int_{\rz^3}\int_{\rz^3}\frac{|g(\bx)-g(\by)|^2}{|\bx-\by|^2}
  K_2(\alpha^{-1}|\bx-\by|)\,d\bx d\by \ \text{ for }
  g\in\mathcal{S}(\rz^3)\,, 
\end{align}
where \(K_2\) is a modified Bessel function of the second kind (in
fact, \(K_2(t)=-t \frac{d}{dt}[t^{-1}K_1(t)]\)), satisfying \cite{AbraSte}
\begin{align}\label{eq:est-K2}
  K_2(t)\le Ct^{-1}e^{-t}\ \text{ for }\ t\ge 1\,.
\end{align}
Let $f \in C_0^\infty(\rz^3)$. Using \eqref{eq:formula T} and
polarization, we have that 
\begin{align*}
  &\int_{\rz^3}\overline{f(\bx)}\psi(\bx)\,d\bx=
  (f,e^{\beta |\,\cdot\,|}\eta
  [E(\bp),\chi_R]((1-\eta)\varphi_i))
  \\&=\frac{\alpha^{-2}}{4\pi^2}\iint_{|\bx-\by|\geq R/4}
  \!\!\!\!
  \frac{\chi_R(\bx)-\chi_R(\by)}{|\bx-\by|^2}
  \,K_2(\alpha^{-1}|\bx-\by|)
  \\&\qquad\quad\times\big[\,\overline{f(\bx)}e^{\beta|\bx|}
  \eta(\mathbf{x})(1-\eta(\by))\varphi_i(\by)
  -\overline{f(\by)}e^{\beta |\by|}
  \eta(\by)(1-\eta(\bx))\varphi_i(\bx)\big]\,d\bx d\by\,, 
\end{align*}
by the properties of \(\chi\) and $\eta$. Hence,
\begin{align}\label{one-bound}\nonumber
  \Big|\int_{\rz^3}&\overline{f(\bx)}\psi(\bx)\,d\bx\Big|
  \\
  & \leq  C_R \iint_{|\bx-\by|\geq R/4}
  |f(\bx)| e^{\beta |\bx-\by|}
  K_2(\alpha^{-1}|\bx-\by|)  e^{\beta
  |\by|}|\varphi_i(\by)| \,d\bx d\by\,,
 \nonumber \\
 & \leq  C_R \iint
  |f(\bx)| e^{\beta |\bx-\by|}
  K_2(\alpha^{-1}|\bx-\by|)\chi_{R/4}(|\bx-\by|)  e^{\beta
  |\by|}|\varphi_i(\by)| \,d\bx d\by\,.
\end{align}
Note that, since \(\beta<\nu_{\varepsilon_N}<\alpha^{-1}\), \eqref{eq:est-K2}
implies that
$e^{\beta|\,\cdot\,|} K_2(\alpha^{-1}|\cdot|)\chi_{R/4}$ is in
$L^r(\rz^3)$ for all $r \geq 1$. Since (by
Proposition~\ref{lem:L2-bound}) \(e^{\beta|\,\cdot\,|}\varphi_i\in
L^2(\rz^3)\), Young's inequality therefore gives that
\begin{align*}
  (e^{\beta|\,\cdot\,|}K_2(\alpha^{-1}|\cdot|)
  \chi_{R/4})*(e^{\beta|\,\cdot\,|}|\varphi_i|)
  \in L^{s}(\rz^3)\ \text{ for all } s\in[2,\infty)\,.
\end{align*}
This, \eqref{one-bound}, and H\"older's inequality (with $1/q+1/s=1$)
imply that, for all \(f\in C_0^\infty(\rz^3)\) and all \(q\in(1,2]\)
\begin{align*} 
   \Big|\int_{\rz^3}&\overline{f(\bx)}\psi(\bx)\,d\bx\Big|
  \leq C_R \big\|(e^{\beta|\,\cdot\,|}K_2(\alpha^{-1}|\cdot|)
  \chi_{R/4})*(e^{\beta|\,\cdot\,|}|\varphi_i|)\big\|_{s} \,\|f\|_{q}\,.
\end{align*}
By density of \(C_0^\infty(\rz^3)\) in \(L^{q}(\rz^3)\), it follows
that \(\psi\) defines a  bounded linear 
functional on \(L^{q}(\rz^3)\) for any 
\(q\in(1,2]\), and therefore, that \(\psi\in L^p(\rz^3)\) for all
\(p\in[2,\infty)\).  

Proceeding similarly one shows that the two remaining terms in
\eqref{4terms} are also in $L^p(\mathbb{R}^3)$ for  all
\(p\in[2,\infty)\). 

This finishes the proof of \eqref{eq:claim2}, and therefore of
\eqref{eq:firstBoot}.

Finally we prove that $\chi_R \varphi_i e^{\beta|\,\cdot\,|} \in
L^{\infty}(\rz^3)$. We start again from \eqref{smile}.
We already know that the second
term is in $L^{\infty}(\rz^3)$. The first term is a sum of terms
of the form (see also \eqref{hf})
\begin{equation*}
  h_f(\bx)= \chi_{R/2}(\bx) e^{\beta|\bx|}
  \int_{\rz^3} (T-\varepsilon_{N})^{-1} (\bx,\by)
  |f(\by)|\,d\by\,,  
  \end{equation*} 
with \(f\in L^2(\rz^3)\) and $\chi_{R/4} e^{\beta|\,\cdot\,|}f \in
L^p(\mathbb{R}^3)+L^{\infty}(\mathbb{R}^3)$ for $p\in[2,6)$ by what just
proved, replacing $R$ by $R/4$ in \eqref{eq:firstBoot}. We find that 
\begin{align*}
  h_f(\bx)&\leq\chi_{R/2}(\bx) \int_{\rz^3}
  e^{\beta|\bx-\by|} (T-\varepsilon_{N})^{-1}
  (\bx,\by)  e^{\beta|\by|} \chi_{R/4}(\by)
  |f(\by)|\,d\by \\ 
  & + \chi_{R/2}(\bx) \int_{\rz^3}
  e^{\beta|\bx-\by|} (T-\varepsilon_{N})^{-1}
  (\bx,\by)  e^{\beta|\by|} (1-\chi_{R/4})(\by)
  |f(\by)|\,d\by\,, 
\end{align*}
and, again by Young's inequality, we see that both terms are in
$L^{\infty}(\mathbb{R}^3)$. Notice that in the second integrand
$|\bx-\by|>R/4$.

This finishes the proof of Theorem~\ref{HF2} {\rm (iii)}.
\end{pf*}
It therefore remains to prove Proposition~\ref{lem:L2-bound}.
\begin{pf*}{Proof of Proposition~\ref{lem:L2-bound}}
We start by proving (i).
It will be convenient to write the Hartree-Fock equations
\(h_{\gamma^{\rm HF}}\varphi_i=\varepsilon_i\varphi_i\),
  \(i=1,\ldots,N\), (see \eqref{eq:HF-equations}) as a system. 

Let $\mathtt{t}$ be the quadratic form with domain
$[H^{1/2}(\mathbb{R})]^N$ defined by
\begin{align*}
  \mathtt{t}(\mathrm{u},\mathrm{v})=\sum_{i=1}^N
  \mathfrak{t}(\mathrm{u}_i,\mathrm{v}_i) \mbox{ for all }\mathrm{u},
  \mathrm{v}\in 
  [H^{1/2}(\mathbb{R}^3)]^N\,, 
\end{align*}
where $\mathrm{u}_i$ denotes the $i$-th component of $\mathrm{u} \in
[H^{1/2}(\mathbb{R}^3)]^N$ and $\mathfrak{t}$ is the quadratic
form defined in \eqref{a}. Similarly we define the quadratic forms
$\mathtt{v}$, $\mathtt{r}_{\gamma}$ and $\mathtt{k}_{\gamma}$, all with
domain $[H^{1/2}(\mathbb{R}^3)]^N$, by
\begin{align*}
  \mathtt{v}(\mathrm{u},\mathrm{v})&=\sum_{i=1}^N
  \mathfrak{v}(\mathrm{u}_i,\mathrm{v}_i)\,, \ \  
  \mathtt{r}_{\gamma}(\mathrm{u},\mathrm{v})=\alpha\sum_{i=1}^N
  (\mathrm{u}_i,R_{\gamma}\mathrm{v}_i)\,, \ \ 
  \mathtt{k}_{\gamma}(\mathrm{u},\mathrm{v})=\alpha\langle
  \mathrm{u},\mathrm{K}_{\gamma} \mathrm{v} \rangle\,, 
\end{align*}
with $\mathfrak{v}$ defined in \eqref{b}, $R_{\gamma}$ defined in \eqref{Rgamma},
and $\mathrm{K}_{\gamma}$ the $N\times N$-matrix given by
\begin{align*}
  (\mathrm{K}_{\gamma})_{i,j}=\int_{\mathbb{R}^3}
  \frac{\varphi_i(\by)\overline{\varphi_j(\by)}}{|\bx-\by|} 
  \,d\by\,. 
\end{align*}
The effect of writing the Hartree-Fock equations as a system is that
\(\mathrm{K}_\gamma\) is a (non-diagonal) multiplication
operator. This idea was already used in \cite{LiebSimonHF}. Note that 
\((\mathrm{K}_{\gamma})_{i,j}\in L^3(\rz^3)\cap L^\infty(\rz^3)\); the
argument is the same as for \eqref{Rgamma}. 
 
Let finally $\mathtt{E}$ be the $N \times N$ matrix defined by
$(\mathtt{E})_{i,j}=-\varepsilon_i\delta_{i,j}$.  

We then define the quadratic form $\mathtt{q}$ by
\begin{align}\label{mathttq}
  \mathtt{q}(\mathrm{u},\mathrm{v})=
  \mathtt{t}(\mathrm{u},\mathrm{v})
  -\mathtt{v}(\mathrm{u},\mathrm{v})
  +\mathtt{r}_{\gamma}(\mathrm{u},\mathrm{v})
  -\mathtt{k}_{\gamma}(\mathrm{u},\mathrm{v})+\langle
  \mathrm{u},\mathtt{E} \mathrm{v} \rangle\,. 
\end{align}
One sees that the quadratic form domain of $\mathtt{q}$ is
$[H^{1/2}(\mathbb{R}^3)]^N$, that $\mathtt{q}$ is closed
(since $\mathtt{t}$ is closed), and that there exists a unique
selfadjoint operator $\mathtt{H}$ with 
$\mathcal{D}(\mathtt{H})\subset[H^{1/2}(\mathbb{R}^3)]^N$ such that 
\begin{equation*}
  \langle \mathrm{u},\mathtt{H}\mathrm{v} \rangle
  =\mathtt{q}(\mathrm{u},\mathrm{v})\ \text{ for all }\ \mathrm{u}\in
[H^{1/2}(\mathbb{R}^3)]^N\,,\, \mathrm{v}\in\mathcal{D}(\mathtt{H})\,.
\end{equation*} 
Notice that the vector
$\Phi=(\varphi_1, \dots, \varphi_N)$ satisfies $\mathtt{H} \Phi=0$. 

Let $W(\kappa)$, $\kappa \in \mathbb{C}^3$, denote the multiplication
operator from a subset of $[L^2(\mathbb{R}^3)]^N$ to
$[L^2(\mathbb{R}^3)]^N$ given by $f(\bx) \mapsto  e^{i \kappa
  \cdot \bx}f(\bx)$. Instead of proving directly the
claim of the proposition, we are going to prove the following
statement, which implies the proposition: 
\begin{equation}\label{eq:eqL2-decay}
  \Phi \in \mathcal{D}(W(\kappa))\ \text{ for }\ \|\mbox{Im}(\kappa)\|_{\rz^3} <
  \nu_{\varepsilon_N}\,, 
\end{equation}
where $\Phi=(\varphi_1, \dots, \varphi_N)$. Here, $\kappa= \mbox{Re}
(\kappa)+i \mbox{Im}(\kappa)$ with $\mbox{Re}(\kappa), \mbox{Im}(\kappa) \in
\mathbb{R}^3$.  

We know that $W(\kappa) \Phi$ is well defined on
$[L^2(\mathbb{R}^3)]^N$ for $\kappa \in \mathbb{R}^3$ and we need to
show that it has a continuation into the `strip'
\(\Sigma_{\nu_{\varepsilon_N}}\), where
$$
  \Sigma_{t}:=\{\kappa \in \mathbb{C}^3\,|\,
  \|\mbox{Im}(\kappa)\|_{\rz^3}< t \}\,.
$$
We shall also need \(\Sigma_{\alpha^{-1}}\); note that
  \(\Sigma_{\alpha^{-1}} \supset \Sigma_{\nu_{\varepsilon_N}}\).   
The idea is to use O'Connor's Lemma (see Lemma \ref{lem:O'Connor} below).

Starting from the quadratic form $\mathtt{q}$ defined
in (\ref{mathttq}) we define the following family of quadratic forms
on $[H^{1/2}(\mathbb{R}^3)]^N$: 
\begin{equation*}
  \mathtt{q}(\kappa)(\mathrm{u},\mathrm{u})
  :=\mathtt{q}(W(-\kappa) \mathrm{u}, W(-\kappa) \mathrm{u})\,,
\end{equation*}
depending on the {\it real} parameter $\kappa \in\rz^3$.
From the
definition, 
\begin{equation*}
  \mathtt{q}(\kappa)(\mathrm{u},\mathrm{u})=\mathtt{t}(\kappa)
  (\mathrm{u},\mathrm{u})-\mathtt{v}(\mathrm{u},\mathrm{u})
  +\mathtt{r}_{\gamma}(\mathrm{u},\mathrm{u})
  -\mathtt{k}_{\gamma}(\mathrm{u},\mathrm{u})+\langle
  \mathrm{u},\mathtt{E} \mathrm{u}\rangle\,, 
\end{equation*}
where
\begin{align}\label{eq:sectQ} 
  \mathtt{t}(\kappa)(\mathrm{u},\mathrm{u})=\sum_{i=1}^N \int_{\rz^3}
  \big(
  \alpha^{-2}+\sum_{j=1}^3 (p_j -\kappa_j )^2\big)^{1/2}
  |{\hat{\mathrm{u}}}_i(\mathbf{p})|^2\,d\bp
  -\alpha^{-1}\langle\mathrm{u},\mathrm{u}\rangle\,.
\end{align}
One sees that $\mathtt{q}(\kappa)$ extends to a family of sectorial forms with angle
$\theta < \frac{\pi}{4}$, and that $\mathtt{q}(\kappa)$ is holomorphic in the
strip $\Sigma_{\alpha^{-1}}$ (indeed, $\|\mbox{Im}(\kappa)\|_{\rz^3}<\alpha^{-1}$ is
needed to assure that the complex number under the square root in
\eqref{eq:sectQ} has
non-negative real part for all \(\bp\in\rz^3\)). Moreover,
$\mathtt{q}(\kappa)$ is 
closed. Indeed, it is sufficient to prove that the real part of
$\mathtt{q}(\kappa)$ is closed, which will follow from
\begin{eqnarray}\label{1}
  \mathtt{v}(\mathrm{u},\mathrm{u})+\mathtt{r}_{\gamma}
  (\mathrm{u},\mathrm{u})+\mathtt{k}_{\gamma}(\mathrm{u},\mathrm{u})+
  \langle \mathrm{u},\mathtt{E} \mathrm{u} \rangle \leq b\,
  \mbox{Re}(\mathtt{t}(\kappa))(\mathrm{u},\mathrm{u}) + K \langle\mathrm{u},
  \mathrm{u} \rangle\,, 
\end{eqnarray}
with $b<1$, $K>0$ and $\mbox{Re}(\mathtt{t}(\kappa))$ closed. We now
prove \eqref{1}. We already know that 
\begin{equation}\label{2}
  \mathtt{r}_{\gamma}(\mathrm{u},\mathrm{u})
  +\mathtt{k}_{\gamma}(\mathrm{u},\mathrm{u})+ \langle
  \mathrm{u},\mathtt{E} \mathrm{u} \rangle  \leq K' \langle \mathrm{u},
  \mathrm{u} \rangle \mbox{ for }K'>0\,. 
\end{equation} 
By \eqref{Kato's ineq} we find
\begin{align}\notag
  \mathtt{v}(\mathrm{u},\mathrm{u})&\leq (Z\alpha)\frac{\pi}{2}
  \sum_{i=1}^N \int_{\rz^3} |\mathbf{p}|\,|{\hat{\mathrm
      u}}_i(\mathbf{p})|^2\,d\bp
  \\ \label{3} 
  & \leq (Z\alpha) \frac{\pi}{2} R \sum_{i=1}^N \big[
  \int_{|\mathbf{p}|\leq R} |{\hat{\mathrm u}}_i(\mathbf{p})|^2\,d \mathbf{p} +
  \int_{|\mathbf{p}|\geq R} |\mathbf{p}|\,|{\hat{\mathrm
      u}}_i(\mathbf{p})|^2\,d\bp\big]\,.   
\end{align}
Let $\delta>0$ be such that $Z \alpha\frac{\pi}{2} (1-\delta)^{-1}<1$. Since
\begin{align*}
  \mbox{Re} (\mathtt{t}(\kappa)) (\mathrm{u},\mathrm{u})&=
  \sum_{i=1}^N \int_{\rz^3}
  \big|\alpha^{-2}+\sum_{j=1}^3 (p_j - \kappa_j)^2\big|^{1/2}
  \cos(\theta(\mathbf{p},\kappa))\,
  |{\hat{\mathrm{u}}}_i(\mathbf{p})|^2  
  \,d\bp
  \\&{}\
  -\alpha^{-1}\langle\mathrm{u},\mathrm{u}\rangle\,, 
\end{align*}
with 
\begin{equation*}
  2 \cos^2(\theta(\mathbf{p},\kappa))-1=
  \frac{\alpha^{-2}+\sum_{j=1}^3(p_j-\mbox{Re}(\kappa_j))^2-(\mbox{Im}(\kappa_j))^2)
  }{|\alpha^{-2}+\sum_{j=1}^3(p_j-\kappa_j)^2|}\,, 
\end{equation*}
there exists $R>0$ such that $\cos(\theta(\mathbf{p},\kappa)) \geq
(1-\delta)$ for $|\mathbf{p}|>R$. Hence we find that
\begin{align} \notag
  \mbox{Re}(\mathtt{t}(\kappa))(\mathrm{u},\mathrm{u})& 
  \geq  (1-\delta)\sum_{i=1}^N 
   \int_{|\bp|>R} \big|\alpha^{-2}+\sum_{j=1}^3 
  (p_j-\kappa_j)^2\big|^{1/2}\, |{\hat{\mathrm u}}_i(\bp)|^2 
  \,d\bp
  \\&{}\quad -\alpha^{-1}\langle\mathrm{u},\mathrm{u}\rangle\ \nonumber
\\ \label{4} 
  &\geq(1-\delta)\sum_{i=1}^N \int_{|\bp|>R}
  (|\mathbf{p}|-C)|\,{\hat{\mathrm u}}_i(\bp) |^2 \,d\bp
 -\alpha^{-1}\langle\mathrm{u},\mathrm{u}\rangle\,, 
\end{align}
with $C>\|\mbox{Re}(\kappa)\|_{\rz^3} $. The estimate in \eqref{1} follows
combining \eqref{2} with \eqref{3} and \eqref{4}. 

The fact that $\mbox{Re}(\mathtt{t}(\kappa))$ is closed follows from
\begin{equation*}
  \frac{1}{\sqrt{2}} \sum_{i=1}^N \int (|\mathbf{p}|-C)\,
  |{\hat{\mathrm u}}_i(\mathbf{p}) |^2\,d\bp \leq
  \mbox{Re}(\mathtt{t}(\kappa))(\mathrm{u},\mathrm{u}) \leq
  \sum_{i=1}^N \int(|\mathbf{p}|+C) 
  |\hat{\mathrm{u}}_i(\mathbf{p})|^2\,d\bp\,, 
\end{equation*} 
with $C\geq2\alpha^{-1}+\mbox{Re}(\kappa)$.

Hence, $\mathtt{q}(\kappa)$ is an analytic family of forms of type (a)
(\cite[p. 395]{Kato}). The associated family $\mathtt{H}(\kappa)$ of 
sectorial operators is a holomorphic family of operators of type (B)
and has domain in a subset of
$[H^{1/2}(\mathbb{R}^3)]^N$. 

We are interested now in locating the essential spectrum of
$\mathtt{H}(\kappa)$. Since \(K_{\gamma}\) is a Hilbert-Schmidt
operator, the essential spectrum of \(\mathtt{H}(\kappa)\) coincides
with the essential spectrum of the operator associated to
\begin{align*}
   {\mathtt t}(\kappa)({\mathrm u},{\mathrm u})-{\mathtt v}({\mathrm u},
  {\mathrm u})+\alpha\,{\mathtt r}_{\gamma}({\mathrm u},{\mathrm u})+
  \langle {\mathrm u},\mathtt{E}{\mathrm u}\rangle\,.
\end{align*}
Notice that the operator associated to this quadratic form is
diagonal. Proceeding as in the proof of \(\sigma_{\rm
  ess}(h_{\gamma})=[0,\infty)\) (Lemma~\ref{spectrum}), one sees that
\(\sigma_{\rm ess}(\mathtt{H}(\kappa))\subset \sigma_{\rm
  ess}(T(\kappa)-\varepsilon_{N})\) with 
\(T(\kappa):=\sqrt{\alpha^{-2}+\sum_{j=1}^{3}(p_j-\kappa_j)^2}-\alpha^{-1}\).
Hence we find that
\begin{equation*}
  \sigma_{\rm ess}(\mathtt{H}(\kappa)) \subset \big\{z \in \mathbb{C}\,\big|\,
  \mbox{Re}(z) \geq \sqrt{\alpha^{-2}
    -\|\mbox{Im}(\kappa)\|_{\rz^3}^2}-\alpha^{-1}-\varepsilon_N
  \big\}\,. 
\end{equation*}
Hence $0$, eigenvalue of $\mathtt{H}(0)$, remains disjoint from the
essential spectrum of $\mathtt{H}(\kappa)$ for all $\kappa \in
\Sigma_{\nu _{\varepsilon_N}}$ (recall that \(\Sigma_{\nu
  _{\varepsilon_N}}\subset\Sigma_{\alpha^{-1}}\)) . 

Since $\mathtt{H}(\kappa)$ is an analytic family of type (B)
\cite[p.20]{RS4} in 
$\Sigma_{\nu_{\varepsilon}}$, $0$ is an eigenvalue of $\mathtt{H}(0)$
and moreover, $0$ remains disjoint from the essential spectrum of
$\mathtt{H}(\kappa)$, it follows that $0$ is an eigenvalue in the pure
point spectrum of $\mathtt{H}(\kappa)$ for all \(\kappa\in \Sigma_{\nu
  _{\varepsilon_N}}\) (reasoning as in
\cite[page 187]{RS4}). Let $\mathtt{P}(\kappa)$ be the projection onto
the eigenspace corresponding to the eigenvalue $0$ of the operator
$\mathtt{H}(\kappa)$. Then $\mathtt{P}(\kappa)$ is an analytic function
in $\Sigma_{\nu_{\varepsilon_N}}$ and for $\kappa \in
\Sigma_{\nu_{\varepsilon_N}}$ and $\kappa_0 \in \mathbb{R}$ we have 
\begin{equation*}
  \mathtt{P}(\kappa+ \kappa_0)= W(\kappa_0) \mathtt{P}(\kappa) W(-\kappa_0)\,.
\end{equation*}
Here we used that $W(-\kappa_0)$ is a unitary operator. The result of
the lemma
follows by applying Lemma~\ref{lem:O'Connor} below to
$\tilde{W}(\theta):=e^{i 
  \theta \kappa \cdot \bx}$ with $\kappa \in \mathbb{R}^3$, 
$\|\kappa\|_{\rz^3}= \nu_{\varepsilon_{N}}$, and $\theta\in\{z \in \mathbb{C}\,|\,
|\mbox{Im}(z)|<1\}$. 
Notice that $\tilde{W}(\theta)=W(\theta
\kappa)$ and that the projection
$\mathtt{\tilde{P}}(\theta):=\mathtt{P} (\theta \kappa)$ is analytic
and satisfies 
$\mathtt{\tilde{P}}(\theta+ \theta_0
)=\tilde{W}(\theta_0)\mathtt{\tilde{P}}(\theta)\tilde{W}(-\theta_0)$
for $\theta_0 \in \mathbb{R}$.

This finishes the proof of (i). 

To prove (ii), we can work directly with the Hartree-Fock equation,
since, from (i), the function \(K_{\gamma^{\rm HF}}\varphi\) is
exponentially decaying. Therefore, let
\begin{align}
  \label{eq:q-non-pert}
   \mathfrak{q}[u,v]=(u,h_{\gamma^{\rm HF}}v)-\varepsilon(u,v)
\ \text{ for }\ u,v \in
  H^{1/2}(\mathbb{R}^3)\,,
\end{align}
and note that, by assumption, \(0\) is an eigenvalue for the corresponding operator
(\(\varphi\) is an eigenfunction).
Define, for \(\kappa\in\rz^3\), 
\begin{align}
  \label{eq:holo-general}\nonumber
  \mathfrak{q}(\kappa)[u,v]&=\mathfrak{q}[W(-\kappa)u,W(-\kappa)v]\\
  &=\mathfrak{t}(\kappa)[u,v]-\mathfrak{v}[u,v]+\mathfrak{b}_{\gamma^{\rm
      HF}}(\kappa)[u,v]-\varepsilon(u,v)\,,
\end{align}
with \(W(\kappa)\) and \(\mathfrak{t}(\kappa)\) as before (but now on \(H^{1/2}(\rz^3)\)),
see \eqref{eq:sectQ}, 
and 
\begin{align}
  \label{eq:K-kappa}
  \mathfrak{b}_{\gamma^{\rm
      HF}}(\kappa)[u,v]=\alpha(u,R_{\gamma^{\rm HF}}v)
  -\alpha(u,K_{\gamma^{\rm HF}}(\kappa)v)\,,
\end{align}
where
\begin{align}
  \label{eq:K-kappa-kernel}
  K_{\gamma^{\rm HF}}(\kappa)(\bx,\by)
  =\sum_{j=1}^{N}\frac{\varphi_j(\bx)e^{i\kappa\bx}
  e^{-i\kappa\by}\overline{\varphi_j(\by)}}{|\bx-\by|}\,.
\end{align}
Using (i) of the proposition (exponential decay of the Hartree-Fock orbitals
\(\{\varphi_{j}\}_{j=1}^{N}\)) one now proves that
\eqref{eq:K-kappa-kernel} extends to a holomorphic family of
Hilberts-Schmidt operators in \(\Sigma_{\nu_{\varepsilon_N}}\). One
can now repeat the reasoning in the 
proof of (i) to obtain the stated exponential decay of \(\varphi\).

\end{pf*}
\begin{lemma}{\rm (\cite[p.\ 196]{RS4})}\label{lem:O'Connor}
Let $W(\kappa)=e^{i \kappa A}$ be a one-parameter unitary group (in
particular, $A$ is self-adjoint) and let $D$ be a connected region in
$\mathbb{C}$ with $0 \in D$. Suppose that a projection-valued analytic
function $P(\kappa)$ is given on $D$ with $P(0)$ of finite rank and so
that 
\begin{align*}
  W(\kappa_0) P(\kappa) W(\kappa_0)^{-1}=P(\kappa + \kappa_0) 
  \text{ for }\kappa_0 \in \mathbb{R} \text{ and }\kappa,
  \kappa+\kappa_0 \in D\,.  
\end{align*}
Let $\psi \in \mbox{\rm Ran}(P(0))$. Then the function $\psi(\kappa)=
W(\kappa) \psi$ has an analytic continuation from $D \cap \mathbb{R}$
to $D$. 
\end{lemma}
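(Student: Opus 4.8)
The plan is to reduce the whole statement to a finite-dimensional linear ODE with holomorphic coefficients, using the interplay between the analyticity of $\{P(\kappa)\}$ and the finite rank of $P(0)$. Some harmless reductions first: we may assume $\psi\neq0$; since $\{P(\kappa)\}$ is an analytic family of bounded projections with $P(0)$ of rank $n$, the rank of $P(\kappa)$ is locally constant, hence (as $D$ is connected) equal to $n$ throughout $D$, so $\kappa\mapsto\mathrm{Ran}(P(\kappa))$ is a holomorphic rank-$n$ subbundle; and it suffices to construct, near each real point of $D$, a holomorphic $\mathcal H$-valued function agreeing with $W(\kappa)\psi$ on $\mathbb R$, and then continue along paths. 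The local pieces fit together by uniqueness of analytic continuation, and single-valuedness on $D$ is automatic in the applications, where $D$ is a strip (in general one passes to the universal cover).

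The key preliminary step — and the one I expect to be the main obstacle, since it is where the unbounded self-adjoint generator $A$ has to be reconciled with the finite-rank analytic structure — is to show $\mathrm{Ran}(P(0))\subset\mathcal D(A)$. I would do this by differentiating the covariance relation $P(a)=W(a)P(0)W(-a)$ at $a=0$: writing $W(a)P(0)W(-a)-P(0)=W(a)P(0)[W(-a)-1]+[W(a)-1]P(0)$ and dividing by $a$, norm-differentiability of $P$ at $0$ and $x\in\mathcal D(A)$ show that $a^{-1}[W(a)-1]P(0)x$ converges in norm to $P'(0)x+iP(0)Ax$ as $a\to0$; since $W(a)=e^{iaA}$, the spectral theorem for $A$ together with Fatou's lemma then forces $P(0)x\in\mathcal D(A)$ for every $x\in\mathcal D(A)$, and by density of $\mathcal D(A)$ and finite-dimensionality of $\mathrm{Ran}(P(0))$ one gets $\mathrm{Ran}(P(0))\subset\mathcal D(A)$ together with the commutator identity $AP(0)=P(0)A-iP'(0)$ on $\mathcal D(A)$. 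Conjugating by $W(\kappa)$ (which commutes with $A$) gives, for all real $\kappa$, that $\mathrm{Ran}(P(\kappa))\subset\mathcal D(A)$ and $AP(\kappa)=P(\kappa)A-iP'(\kappa)$ on $\mathcal D(A)$; in particular $\psi\in\mathcal D(A)$, so $\kappa\mapsto W(\kappa)\psi$ is norm-differentiable on $\mathbb R$ with derivative $iAW(\kappa)\psi$. It is precisely here that the hypothesis that $\{P(\kappa)\}$ is genuinely \emph{analytic} (not merely strongly continuous) is used.

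The rest is then routine. Fix a real $\kappa_0\in D$, choose $\xi_1,\dots,\xi_n\in\mathcal D(A)$ with $\{P(\kappa_0)\xi_j\}$ a basis of $\mathrm{Ran}(P(\kappa_0))$, and set $e_j(\kappa):=P(\kappa)\xi_j$; on a complex neighbourhood $N_0$ of $\kappa_0$ these form a holomorphic frame of $\mathrm{Ran}(P(\kappa))$, with a holomorphic dual frame $\ell_1(\kappa),\dots,\ell_n(\kappa)$ of continuous functionals vanishing on $\ker P(\kappa)$ (got by inverting a holomorphic, invertible $n\times n$ matrix). For real $\kappa$ near $\kappa_0$ write $W(\kappa)\psi=\sum_j c_j(\kappa)e_j(\kappa)$ with $c_j(\kappa)=\ell_j(\kappa)(W(\kappa)\psi)$, a differentiable function of $\kappa$. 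Differentiating, and using $\tfrac{d}{d\kappa}W(\kappa)\psi=iAW(\kappa)\psi$, $e_j'(\kappa)=P'(\kappa)\xi_j$, and $Ae_j(\kappa)=P(\kappa)A\xi_j-iP'(\kappa)\xi_j$ (the commutator identity, at real $\kappa$), the combination $iAe_j-e_j'$ collapses to $iP(\kappa)A\xi_j$, so applying $\ell_k(\kappa)$ gives
\[
  c_k'(\kappa)=\sum_{j=1}^n M_{kj}(\kappa)\,c_j(\kappa),\qquad
  M_{kj}(\kappa):=i\,\ell_k(\kappa)(A\xi_j).
\]
The point is that $M(\kappa)$ is holomorphic on the full complex neighbourhood $N_0$, since it only involves the holomorphic functionals $\ell_k(\kappa)$ and the fixed vectors $A\xi_j\in\mathcal H$. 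Hence the fundamental matrix of this linear ODE is holomorphic on $N_0$, the real solution $(c_1,\dots,c_n)$ extends holomorphically to $N_0$, and $\psi(\kappa):=\sum_j c_j(\kappa)e_j(\kappa)$ is a holomorphic continuation of $W(\cdot)\psi$ near $\kappa_0$ (independent of the choice of $\xi_j$, by uniqueness of analytic continuation). Carrying this out at every point of $D$ and continuing along paths finishes the proof.
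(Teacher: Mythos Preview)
The paper does not supply its own proof of this lemma; it is stated with a bare citation to Reed--Simon \cite[p.~196]{RS4}, so there is nothing on the paper's side to compare against. Your argument is correct and is in the spirit of the standard proof: the decisive analytic step is showing $\operatorname{Ran}P(0)\subset\mathcal D(A)$ by differentiating the covariance relation (this is where finite rank and genuine analyticity of $P(\cdot)$ are both used), after which the commutator identity $AP(\kappa)=P(\kappa)A-iP'(\kappa)$ on $\mathcal D(A)$ for real $\kappa$ lets you collapse $iAe_j-e_j'$ to $iP(\kappa)A\xi_j$ and obtain a finite-dimensional linear ODE whose coefficient matrix $M_{kj}(\kappa)=i\,\ell_k(\kappa)(A\xi_j)$ is manifestly holomorphic. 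The only place you are slightly informal is the final global continuation: your ODE is derived at \emph{real} base points $\kappa_0$, so you should say explicitly that since $M(\kappa)$ is built from the holomorphic dual frame and the fixed vectors $A\xi_j$, it is holomorphic wherever the frame is defined, hence the ODE solution continues along any path in $D$; in the paper's application $D$ is a simply connected strip with $D\cap\mathbb R=\mathbb R$ connected, so no monodromy or matching-across-components issue arises, exactly as you note parenthetically.
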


\appendix
\section{Some useful lemmata}\label{app:lemmata}

\begin{lemma}\label{regfj} 
Let $\Omega$ be an open subset of $\R^3\setminus \{0 \}$ with
smooth boundary
and let $f_1, f_2\in H^k(\Omega)$ for some $k
\geq 1$.  

Then the function 
\begin{equation*}
  F(\bx):=\int_{\R^{3}}\frac{f_1(\by)f_2(\by)}{|\bx-\by|}d\by 
\end{equation*}
belongs to $C^k(\Omega)$ if $k \geq 2$, while if $k=1$, it belongs to
$W^{1,p}(\Omega)$ for all $p \geq 1$, and hence to $C(\Omega)$.  
\end{lemma}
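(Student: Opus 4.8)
The plan is to recognise $F$ as a constant multiple of the Newtonian potential of $g:=f_1f_2$ and to read off its regularity on $\Omega$ from interior elliptic regularity for $-\Delta$. I may and will assume $\Omega$ bounded: this is the only case used in the paper, the pointwise conclusions $C^k(\Omega)$, $C(\Omega)$ are local, and boundedness and smoothness of $\partial\Omega$ are needed for the global statement $F\in W^{1,p}(\Omega)$ anyway. Since $k\ge1$, Sobolev's imbedding $H^1(\Omega)\hookrightarrow L^6(\Omega)$ \cite[Theorem 8.4]{LiebLoss} together with H\"older's inequality gives $g=f_1f_2\in L^3(\Omega)$, hence $g\in L^1(\rz^3)\cap L^3(\rz^3)$ after extension by $0$. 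As $3>3/2$, the potential $F=|\cdot|^{-1}*g$ is then bounded and continuous (the splitting of $|\cdot|^{-1}$ into its parts on $\{|\bx|<1\}$ and $\{|\bx|\ge1\}$ already used for $R_\gamma$, see \eqref{eq:propR-gamma} and \cite[p.\ 107]{LiebLoss}). Moreover, since $-\Delta\big(\tfrac{1}{4\pi|\cdot|}\big)=\delta_0$,
\begin{equation*}
  -\Delta F=4\pi\,f_1f_2\quad\text{in }\mathcal{D}'(\rz^3)\,,
\end{equation*}
so in particular $-\Delta F=4\pi f_1f_2$ on $\Omega$, and interior elliptic regularity for the Laplacian applies there.

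If $k\ge2$, then $k>3/2=n/2$, so $H^k(\Omega)$ is a Banach algebra and $g=f_1f_2\in H^k(\Omega)$. Interior elliptic regularity for $-\Delta$ then yields $F\in H^{k+2}_{\rm loc}(\Omega)$ (localise with $\psi\in C_0^\infty(\Omega)$: the contribution $|\cdot|^{-1}*(\psi g)$ lies in $H^{k+2}$ up to a smooth low-frequency remainder, while $|\cdot|^{-1}*((1-\psi)g)$ is harmonic, hence $C^\infty$, on $\{\psi=1\}$). Since $(k+2)-3/2>k$, Sobolev's imbedding theorem \cite[Theorem 4.12]{Sobolev} gives $F\in C^k(\Omega)$ (in fact $F\in C^{k,\alpha}_{\rm loc}(\Omega)$ for every $\alpha<1/2$).

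If $k=1$, the product $f_1f_2$ is in general only in $W^{1,3/2}(\Omega)\cap L^3(\Omega)$, not in $H^1(\Omega)$, so one must leave the $L^2$-scale and argue with the $L^p$ theory. From $g\in L^3(\Omega)\subset L^3_{\rm loc}(\rz^3)$ and $-\Delta F=4\pi g$, the Calder\'on--Zygmund estimates for $-\Delta$ give $F\in W^{2,3}_{\rm loc}(\Omega)$. As $W^{1,3}(\rz^3)\hookrightarrow L^p(\rz^3)$ for every $p\in[3,\infty)$, this gives $F\in W^{1,p}_{\rm loc}(\Omega)$ for all $p\ge1$ and $F\in C^{0,\alpha}_{\rm loc}(\Omega)$ for every $\alpha<1$; using boundedness and smoothness of $\partial\Omega$ one upgrades this to $F\in W^{1,p}(\Omega)$ for all $p\ge1$, whence $F\in C(\Omega)$.

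The step I expect to be the main obstacle is the case $k=1$: there the naive reasoning ``$H^k\cdot H^k\subset H^k$, then gain two derivatives'' fails because $H^1$ is not an algebra in three dimensions, and one has to pass to the $L^p$-based elliptic estimates and verify that $W^{2,3}$ — the sharp regularity extractable from $g\in L^3$ — already yields the asserted $W^{1,p}(\Omega)$ and $C(\Omega)$. The remaining ingredients (the localisation separating the near-diagonal part of the potential from its harmonic far part, and the Sobolev bookkeeping) are routine.
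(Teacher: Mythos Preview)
Your proof is correct, but takes a genuinely different route from the paper's. You recognise $F$ as the Newton potential of $g=f_1f_2$, write $-\Delta F=4\pi g$, and invoke elliptic regularity: for $k\ge2$ you use that $H^k(\Omega)$ is a Banach algebra (since $k>3/2$), so $g\in H^k$, hence $F\in H^{k+2}_{\rm loc}\hookrightarrow C^k$; for $k=1$ you pass to the $L^p$ scale and use Calder\'on--Zygmund to get $F\in W^{2,3}_{\rm loc}$, then Sobolev. The paper instead argues entirely by hand with Young's inequality: after localising with nested cutoffs $\chi,\tilde\chi\in C_0^\infty(\Omega)$, the far-field piece $\chi\big(|\cdot|^{-1}*(1-\tilde\chi)f_1f_2\big)$ is smooth by separation of supports, and for the near-field piece one moves the derivatives onto $\tilde\chi f_1f_2$ and checks via H\"older and Sobolev that $D^\beta(\tilde\chi f_1f_2)\in L^{5/3}$ for $k\ge2$ (respectively $\partial(\tilde\chi f_1f_2)\in L^{3/2}$ for $k=1$), then convolves against $\tilde\chi/|\cdot|\in L^{5/2}$. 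Your approach is conceptually cleaner and even yields slightly more (e.g.\ $F\in W^{2,3}_{\rm loc}$ when $k=1$), at the price of importing Calder\'on--Zygmund; the paper's is entirely elementary, needing only H\"older and Young, which fits better with the toolkit already used elsewhere in the paper.
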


\begin{proof} We are going to prove the following equivalent
  statement. If $k\geq 2$, $\chi F \in C^{k}(\R^3)$ for all
  $\chi \in C^{\infty}_{0}(\Omega)$, while if $k=1$, $\chi F \in
  W^{1,p}(\R^3)$ for all $p \geq 1$ and $\chi \in
  C^{\infty}_{0}(\Omega)$. 

Fix $\chi \in C^{\infty}_{0}(\Omega)$ and take $\tilde{\chi} \in
C^{\infty}_{0}(\Omega)$ verifying $\tilde{\chi} \equiv 1$ on
$\supp\,\chi$ and such that there is a strictly positive distance
between $\supp\,\chi$ and $\supp\,(1-\tilde{\chi})$. We write $\chi
F(\bx) = \chi F_1(\bx)+\chi F_2(\bx)$ with  
\begin{equation*}
  F_1(\bx)
  =\int_{\R^{3}}\frac{\tilde{\chi}(\by)f_1(\by)f_2(\by)}{|\bx-\by|}\,d\by 
 \; \mbox{ and } \; 
 F_2(\bx)
  =\int_{\R^{3}}(1-\tilde{\chi}(\by))\frac{f_1(\by)f_2(\by)}{|\bx-\by|}\,d\by\,.  
\end{equation*} 
The term $\chi F_2$ is clearly in $C^{\infty}(\R^3)$. For the
other term we use Young's inequality: if $f \in L^{p}(\R^3)$ and $g
\in L^q(\R^3)$ then 
\begin{equation}\label{St1}
  \| f*g\|_{r} \leq C \| f\|_{p} \|g\|_{q} \; \; \mbox{ with
  }1+\frac{1}{r}=\frac{1}{p}+\frac{1}{q}\,. 
\end{equation}
Moreover, if $1/p+1/q=1$ then $f*g$ is continuous (see \cite[Lemma
2.1]{Tartar}). Let $\alpha \in \nz_{0}^3$ with $|\alpha|\leq k$. Then   
\begin{equation}\label{aa}
  |D^{\alpha} (\chi F_1) (\bx)| \leq
  \sum_{\substack{\beta_1+\beta_2=\alpha,\\ \beta_1,\beta_2 \in
    \nz_{0}^3}} |D^{\beta_1} \chi(\bx)| \Big|\int_{\R^3} \frac{1}{|\bx
  -\by|} D^{\beta_2} (\tilde{\chi} f_1 f_2)(\by) d\by
  \Big|\,. 
\end{equation}
If $f_1,f_2 \in H^{k}(\Omega)$, $k \geq 2$, then $D^{\beta_2}
(\tilde{\chi} f_1 f_2) \in L^{5/3}(\R^3)$ for all
$\beta_2$ as in \eqref{aa}. From \eqref{St1}, \eqref{aa} and
$\tilde{\chi}/|\cdot| \in L^{5/2}(\R^3)$ it follows that $D^{\alpha}
(\chi F_1)$ is continuous and, since $\alpha$ is arbitrary,
that $\chi F \in C^{k}(\R^3)$.  

If $f_1, f_2 \in H^{1}(\Omega)$ then $\partial
(\tilde{\chi} f_1 f_2) \in L^{3/2}(\R^3)$ and from
\eqref{St1} we get (only) that $\partial (\chi F) \in L^{p}(\R^3)$
for all $p\geq 1$. It then follows that $F \in W^{1,p}(\Omega)$ for
all $p\geq 1$ and therefore (by the Sobolev imbedding theorem) $F \in
C(\Omega)$.  
\end{proof}

\begin{lemma}\label{lem:HS}
Let, for \(Z\alpha<2/\pi\),  $h_{0}$ be the self-adjoint operator
defined in (9), and let $\Lambda_{-}(\alpha)$ be the projection onto
the pure point spectrum of $h_{0}$.  

Then the operator $\Lambda_{-}(\alpha) h_{0} \Lambda_{-}(\alpha)$ is
Hilbert-Schmidt. 
\end{lemma}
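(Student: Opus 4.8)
The plan is to reduce the lemma to a Lieb–Thirring-type bound on the negative eigenvalues of $h_0$ and then to establish that bound by comparison. First, recall (see \cite{Herbst} and the Introduction) that for $Z\alpha<2/\pi$ the operator $h_0$ has purely discrete spectrum in $[-\alpha^{-1},0)$ and purely absolutely continuous spectrum in $[0,\infty)$; hence $\Lambda_-(\alpha)$ is exactly the orthogonal projection onto the closed span of an orthonormal system $\{e_j\}$ of eigenfunctions of $h_0$ with eigenvalues $\{\lambda_j\}\subset[-\alpha^{-1},0)$ (counted with multiplicity, possibly infinitely many). Since $\Lambda_-(\alpha)$ commutes with $h_0$, the operator $\Lambda_-(\alpha)h_0\Lambda_-(\alpha)=h_0\Lambda_-(\alpha)$ is diagonal in the basis $\{e_j\}$ with eigenvalues $\lambda_j$, so
\begin{equation*}
  \|\Lambda_-(\alpha)h_0\Lambda_-(\alpha)\|_{\rm HS}^2
  =\Tr\!\big[\Lambda_-(\alpha)h_0^2\Lambda_-(\alpha)\big]=\sum_j\lambda_j^2,
\end{equation*}
and the lemma is equivalent to $\sum_j\lambda_j^2<\infty$. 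Writing $n(E):=\#\{j:\lambda_j<-E\}$ (finite for every $E>0$ by discreteness, and $=0$ for $E\ge\alpha^{-1}$), the layer–cake formula gives $\sum_j\lambda_j^2=2\int_0^{\alpha^{-1}}E\,n(E)\,dE$; thus it suffices to show $n(E)\le C(1+E^{-\sigma})$ on $(0,\alpha^{-1})$ for some $\sigma\in(3/2,2)$, since then the integral converges.

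To estimate $n(E)$ the plan is to localise the Coulomb singularity. I would fix $R>0$ and split $V=V_{\rm in}+V_{\rm out}$, with $V_{\rm in}=V\mathbf 1_{\{|\bx|\le R\}}$ (compactly supported, with a $1/|\bx|$ singularity) and $V_{\rm out}=V\mathbf 1_{\{|\bx|>R\}}$ (bounded, and in $L^p(\rz^3)$ for every $p>3$). Choosing $\theta\in(0,1)$ close enough to $1$ that $Z\alpha/\theta<2/\pi$, I write $h_0=\theta A+(1-\theta)B$ with $A=T(-\mathrm i\nabla)-\theta^{-1}V_{\rm in}$ and $B=T(-\mathrm i\nabla)-(1-\theta)^{-1}V_{\rm out}$. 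By the variational (Ky Fan) bound for counting functions of sums of semibounded operators, $n(h_0,-E)\le n(A,-E/(2\theta))+n(B,-E/(2(1-\theta)))$, where $n(X,-s)$ denotes the number of eigenvalues of $X$ below $-s$.

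For the inner part: since $Z\alpha/\theta<2/\pi$, inequality \eqref{eq:Kato} makes $\theta^{-1}V_{\rm in}$ a small form perturbation of $T$, so $A$ is bounded below with $\sigma_{\rm ess}(A)=[0,\infty)$; and because $V_{\rm in}$ has compact support, the Birman–Schwinger operator $(\theta^{-1}V_{\rm in})^{1/2}T(-\mathrm i\nabla)^{-1}(\theta^{-1}V_{\rm in})^{1/2}$ is compact (it is a finite sum of operators of the form $g(\bx)f(\bp)f(\bp)^{*}g(\bx)$ with $g\in L^p(\rz^3)$, $2\le p<6$, and $f$ an $L^q$-cut-off of $|\bp|^{-1/2}$ or $|\bp|^{-1}$), so $A$ has only finitely many negative eigenvalues; hence $n(A,-E/(2\theta))\le N_{\rm in}$ uniformly in $E$. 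For the outer part: $(1-\theta)^{-1}V_{\rm out}\in L^{\sigma+3}(\rz^3)\cap L^{\sigma+3/2}(\rz^3)$ for every $\sigma>0$ (both exponents exceed $3$, so no divergence at infinity, and $V_{\rm out}$ is bounded), so the relativistic Lieb–Thirring inequality of Daubechies \cite{D} gives, for any fixed $\sigma\in(3/2,2)$, $\sum_k|\lambda_k(B)|^{\sigma}\le C_{R,\theta,\sigma}<\infty$, whence $n(B,-s)\le s^{-\sigma}\sum_k|\lambda_k(B)|^{\sigma}\le C\,s^{-\sigma}$. Combining, $n(E)\le N_{\rm in}+C\,E^{-\sigma}$ with $\sigma\in(3/2,2)$, which is the estimate needed above.

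The hard part is genuinely the accumulation of eigenvalues of $h_0$ at the threshold $0$: this forces a Lieb–Thirring bound at the \emph{critical} power (power $2$ works, power $3/2$ does not), so $\Lambda_-(\alpha)h_0\Lambda_-(\alpha)$ is Hilbert–Schmidt but not trace class, and one cannot afford the crude estimate $n(E)=O(E^{-2})$. Two further points require care: (a) $|\bx|^{-1}$ is not relatively compact with respect to $T$ at short distances, which is precisely why the Coulomb potential must be cut into $V_{\rm in}$ and $V_{\rm out}$ before any Lieb–Thirring or compactness statement applies; and (b) one must keep the coupling $\theta^{-1}Z\alpha$ subcritical in the inner operator $A$, so that $A$ is bounded below and contributes only the finitely many ``deep'' eigenvalues, which are harmless for the sum $\sum_j\lambda_j^2$.
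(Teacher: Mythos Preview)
Your approach is sound and genuinely different from the paper's, with one technical slip. The Birman--Schwinger operator $(\theta^{-1}V_{\rm in})^{1/2}T^{-1}(\theta^{-1}V_{\rm in})^{1/2}$ is \emph{not} compact: the Coulomb singularity at $\bx=0$ combined with the ultrarelativistic behaviour $T(\bp)\sim|\bp|$ for large $|\bp|$ produces a locally scale-invariant piece $|\bx|^{-1/2}|\bp|^{-1}|\bx|^{-1/2}$, and your proposed $g(\bx)f(\bp)$ factorisation fails because $V_{\rm in}^{1/2}\in L^p$ only for $p<6$ while the high-momentum part of $T^{-1/2}$ lies in $L^q$ only for $q>6$. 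What \emph{does} hold, and suffices, is that the essential spectrum of this operator lies in $[0,(\pi/2)Z\alpha/\theta]\subset[0,1)$: split off the low-momentum piece $V_{\rm in}^{1/2}T^{-1}\chi_{\{|\bp|\le M\}}V_{\rm in}^{1/2}$, which \emph{is} compact (since there $T^{-1}\lesssim|\bp|^{-2}$ and $V_{\rm in}\in L^{3/2}$), and note that the remainder has norm below $1$ by \eqref{eq:Kato} once $M$ is large. So $A$ has finitely many negative eigenvalues, as you want, but for this reason rather than by full compactness. (A smaller issue: \cite{D} only gives the $\gamma=1$ moment; for $\gamma\in(3/2,2)$ you need the general relativistic Lieb--Thirring inequality, or the same low-momentum comparison reducing to the Schr\"odinger case.)

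The paper's argument is much more direct and bypasses both Birman--Schwinger and Lieb--Thirring. It cuts in \emph{momentum} rather than position: with $P=\chi_{[0,M]}(T(\bp))$ it proves the operator inequality $h_0\ge(M+2\alpha^{-1})^{-1}P(-\Delta-C|\bx|^{-1})P$ via the Cauchy--Schwarz splitting $h_0\ge P(h_0-\epsilon^{-1}V)P+P^\perp(h_0-\epsilon V)P^\perp$; on $\mathrm{Ran}\,P^\perp$ one has $T\ge(1-\epsilon)|\bp|$, so that term is nonnegative by \eqref{eq:Kato}, while on $\mathrm{Ran}\,P$ one has $T\ge(M+2\alpha^{-1})^{-1}|\bp|^2$. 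The explicit hydrogen spectrum (eigenvalues $-C^2/4n^2$ with multiplicity $n^2$) then gives $\Tr([h_0]_-^2)\le c\sum_{n\ge1} n^2\cdot n^{-4}<\infty$. Your route is more robust (it would handle non-Coulombic $V$ where no explicit comparison operator is available), but the paper's is shorter and uses only elementary tools.
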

\begin{proof} Let $\epsilon>0$ be such that $Z\alpha (1+\epsilon)\leq
  2/\pi (1-\epsilon)$. We are going to prove that there exists a
  constant $M = M(\epsilon)$ such that 
\begin{equation}\label{uu}
  h_0\geq\frac{1}{M+2\alpha^{-1}}P( -\Delta-\frac{C}{|\cdot|})P\,, 
\end{equation}
with $C=Z\alpha (M+2\alpha^{-1})(1+1/\epsilon)$ and $P=
\chi_{[0,M]}(T(\bp))$. The claim will then follow from \eqref{uu}
since 
\begin{equation*}
  \Tr\big([h_0]_{-}\big)^2 \leq \frac{1}{(M+2\alpha^{-1})^2} \Tr\big([-\Delta-
  \frac{C}{|\cdot |} ]_{-}\big)^{2} < \infty\,. 
\end{equation*}
The last inequality follows since the eigenvalues of \(-\Delta-C/|\cdot|\)
are \({}-C^2/4n^2, n\in\nz\), with multiblicity \(n^2\).

We now prove \eqref{uu}. For \(\epsilon>0\) and any projection \(P\) (with
\(P^{\perp}={\mathbf 1}-P)\), we have that
\begin{align}\label{cauchy-s}\nonumber
  h_0&=P h_0 P + P^{\perp} h_0 P^{\perp} - P \frac{Z \alpha}{|\cdot
  |} P^{\perp}- P^{\perp} \frac{Z \alpha}{|\cdot |} P\\ 
  &\geq  P (h_0-\frac{1}{\epsilon} \frac{Z \alpha}{|\cdot |}) P +
  P^{\perp} (h_0-\epsilon \frac{Z\alpha}{|\cdot |}) P^{\perp}\,.
\end{align}
By a direct computation one sees that there
exists a constant $M=M(\epsilon)$ such that $T(\bp) \geq M$ implies
$T(\bp)\geq (1-\epsilon) |\bp|$ and $T(\bp) \leq M$ implies
$T(\bp)\geq \frac{1}{M+2\alpha^{-1}} (-\Delta)$. Hence, with this choice
of \(M\) and 
$P=\chi_{[0,M]}(T(\bp))$, \eqref{cauchy-s} implies that
\begin{equation*}
  h_{0} \geq P\big[\frac{1}{M+2\alpha^{-1}} (-\Delta)-(1+\epsilon^{-1})
  \frac{Z \alpha}{|\cdot |}\big]P + P^{\perp}\big[(1-\epsilon) \sqrt{-\Delta}
  -(1+\epsilon)\frac{Z\alpha}{|\cdot|}\big]P^{\perp}\,. 
\end{equation*}
The inequality \eqref{uu} follows directly by the choice of
$\epsilon$. 
\end{proof}

\section{Pseudodifferential operators}
\label{app:pseudo}

In this appendix we collect facts needed from the calculus of
pseudodifferential operators (\(\psi\)do's) (for references, see
e.g. \cite{Ho2Classics} or \cite{SR}).

Define the standard (H\"{o}rmander) symbol class \(S^{\mu}(\R^n)\),
\(\mu \in\R\),
to be the set of 
functions \(a\in C^{\infty}(\R^n_x\times\R_\xi^n)\) satisfying
\begin{align}
  \label{eq:symbEst}
  \big|\partial_x^\alpha\partial_\xi^\beta a(x,\xi)\big|
  \le C_{\alpha,\beta}(1+|\xi|^2)^{(\mu-|\beta|)/2}
  \ \text{ for all } (x,\xi)\in \R^n_x\times\R_\xi^n\,.
\end{align}
Here, \(\alpha, \beta\in\N^n\) and
\(|\alpha|=\alpha_1+\dots+\alpha_n\). 
Furthermore,
\(S^{\mu}(\R^n)\subset S^{{\mu}'}(\R^n)\) for \(\mu\le \mu'\). We denote
\(S^{\infty}(\R^n)=\cup_{\mu\in\R}S^{\mu}(\R^n)\)
and \(S^{-\infty}(\R^n)=\cap_{\mu \in\R}S^{\mu}(\R^n)\). Finally, note that
\(ab\in S^{\mu_1+\mu_2}(\R^n)\), \(\partial_x^\alpha\partial^\beta_\xi a
\in S^{\mu_1-|\beta|}(\R^n)\) when \(a\in S^{\mu_1}(\R^n), b\in S^{\mu_2}(\R^n)\). 

A symbol \(a\in S^{\mu}(\R^n)\) defines a linear operator \(A={\rm
  Op}(a)\in:\Psi^{\mu}\) (`pseudodifferential operator of order \(\mu\)') by
\begin{align}
  \label{eq:defOp}
  [{\rm Op}(a)u](x)=(2\pi)^{-n}\int_{\R^n}e^{ix\cdot\xi}a(x,\xi)\hat{u}(\xi)\,d\xi\,,
\end{align}
where \(\hat{u}\) is the Fourier-transform of \(u\). The operator
\(A\) is well-defined on the space \(\mathcal{S}(\R^n)\) of
Schwartz-functions; it extends 
by duality to \(\mathcal{S}'(\R^n)\), the space of tempered
distributions.  
Note that for
\begin{align}
  \label{eq:PDO}
  a(x,\xi)=\sum_{0\leq|\alpha|\le \mu}a_\alpha(x)\xi^\alpha
\end{align}
(with \(a_\alpha\) smooth and with all derivatives bounded, i.e.,
\(a_{\alpha}\in\mathcal {B}(\R^n)\)),
\(A={\rm Op}(a)\in\Psi^{\mu}\) is the partial differential operator given by
\begin{align}
  \label{eq:actionPDO}
  [{\rm Op}(a)u](x)=\sum_{0\leq |\alpha|\le \mu}a_\alpha(x)D^\alpha u(x)\,.
\end{align}
Note also that, with \(a=a(x)\) and \(b=b(\xi)\), 
\begin{align*}
 [{\rm
  Op}(a)u](x)=a(x)u(x)\  \text{ and }\ \widehat{[{\rm
    Op}(b)u]}(\xi)=b(\xi)\hat{u}(\xi)\,. 
\end{align*}
If \(a\in S^{\mu}(\R^n)\), then \({\rm Op}(a)\), defined this way, maps
\(H^k(\R^n)\) continuously into \(H^{k-\mu}(\R^n)\) for all
\(k\in\R\). Here, \(H^k(\R^n)\) 
is the Sobolev-space of order \(k\), consisting of
\(u\in\mathcal{S}'(\R^n)\) for which
\begin{align}
  \label{eq:defSob}
  \|u\|_{H^k(\R^n)}^2:=\int_{\R^n}|\hat{u}(\xi)|^2(1+|\xi|^2)^k\,d\xi
\end{align}
is finite; this defines the norm on \(H^k(\R^n)\). We denote 
\begin{align*}
   H^\infty(\R^n)=\bigcap_{k\in\R}H^k(\R^n)\,, \quad
   H^{-\infty}(\R^n)=\bigcup_{k\in\R}H^k(\R^n)\,.
\end{align*}
In particular, symbols in \(S^0(\R^n)\) define bounded
operators on \(L^2(\R^n)=H^0(\R^n)\). Furthermore, operators defined
by symbols in \(S^{-\infty}(\R^n)\) maps any \(H^k(\R^n)\) into
\(H^{\infty}(\R^n)\); such operators are called `smoothing'. 

We need to compose \(\psi\)do's. There exists a composition \(\#\) of
symbols,
\begin{align}
  \label{eq:hash}
  \# : S^{\mu_1}(\R^n)\times S^{\mu_2}(\R^n)&\to S^{\mu_1+\mu_2}(\R^n)\\
   (a,b)&\mapsto a\#b\,,
\end{align}
such that \({\rm Op}(a){\rm Op}(b)={\rm Op}(a\#b)\). It is given by
\begin{align}  \label{eq:actionSharp}
  (a\#b)(x,\xi)=\frac{1}{(2 \pi)^n}\int_{\R^n\times \R^n} e^{- i y
    \cdot \xi}a(x, \xi - \eta) b(x-y,\eta) \, dy d\eta\,.
\end{align}
Here, the integral is to be understood as an oscillating integral.

The symbol \(a\#b\) has the expansion
\begin{align}
  \label{eq:expansionSymb}
  a\#b\sim
  \sum_{\alpha}\frac{i^{-|\alpha|}}{\alpha!}(\partial^\alpha_xa)(\partial^\alpha_\xi b)\,.  
\end{align}
Here, `\(\sim\)' means that for all \(j\in\N\), 
\begin{align}
  \label{eq:asymp}
  a\#b - \sum_{|\alpha|<
    j}\frac{i^{-|\alpha|}}{\alpha!}(\partial^\alpha_xa)(\partial^\alpha_\xi b) 
  \in S^{\mu_1+\mu_2-j}(\R^n)
\end{align}
(recall that \((\partial^\alpha_xa)(\partial^\alpha_\xi b )\in
S^{\mu_1+\mu_2-|\alpha|}\)). 
One easily sees that the
composition is associative.
\begin{proposition}\label{corcom}
If $a \in S^{m_1}(\rz^n)$, $b \in S^{m_2}(\rz^n)$ then the symbol associated to
$[{\rm Op}(a), {\rm Op}(b)]$ belongs to $S^{m_1+m_2-1}(\rz^n)$. 
\end{proposition}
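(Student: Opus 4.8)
The plan is to read off the result directly from the symbol calculus recalled above, in particular from the composition formula \({\rm Op}(a){\rm Op}(b)={\rm Op}(a\#b)\) and the asymptotic expansion \eqref{eq:expansionSymb}--\eqref{eq:asymp} of \(a\#b\). First I would note that, since the composition \(\#\) represents operator composition, the commutator \([{\rm Op}(a),{\rm Op}(b)]={\rm Op}(a){\rm Op}(b)-{\rm Op}(b){\rm Op}(a)\) equals \({\rm Op}(a\#b-b\#a)\); thus the symbol associated to \([{\rm Op}(a),{\rm Op}(b)]\) is precisely \(a\#b-b\#a\), and it remains to show \(a\#b-b\#a\in S^{m_1+m_2-1}(\rz^n)\).

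Next I would apply \eqref{eq:asymp} with \(j=1\): the term \(|\alpha|=0\) in the expansion of \(a\#b\) is \(ab\), so
\[
  a\#b - ab \in S^{m_1+m_2-1}(\rz^n)\,,
\]
and, exchanging the roles of \(a\) and \(b\),
\[
  b\#a - ba \in S^{m_1+m_2-1}(\rz^n)\,.
\]
Since \(a\) and \(b\) are functions, pointwise multiplication commutes, \(ab=ba\), so subtracting the two displays gives
\[
  a\#b - b\#a = (a\#b-ab) - (b\#a-ba) \in S^{m_1+m_2-1}(\rz^n)\,,
\]
using that \(S^{m_1+m_2-1}(\rz^n)\) is a vector space. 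This proves the claim.

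There is no real obstacle here: the statement is an immediate corollary of the facts about \(\#\) already collected in this appendix, the only point worth spelling out being the cancellation of the leading (order \(m_1+m_2\)) terms \(ab\) and \(ba\) of \(a\#b\) and \(b\#a\). For completeness one could remark that, more precisely, the principal symbol of the commutator is \({}-i\{a,b\}=\sum_{|\alpha|=1} i^{-1}(\partial_x^\alpha a)(\partial_\xi^\alpha b)-i^{-1}(\partial_x^\alpha b)(\partial_\xi^\alpha a)\) (the Poisson bracket, up to the factor \(-i\)), which lies in \(S^{m_1+m_2-1}(\rz^n)\) as expected, but this refinement is not needed for Proposition~\ref{corcom}.
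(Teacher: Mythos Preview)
Your argument is correct. Note, however, that the paper does not actually supply a proof of Proposition~\ref{corcom}: it is stated as a standard fact from the calculus of pseudodifferential operators (the appendix refers to \cite{Ho2Classics} and \cite{SR}), and is used without further justification. Your derivation from the composition formula and the asymptotic expansion \eqref{eq:asymp} with \(j=1\) is precisely the standard argument, and it relies only on material already recalled in the appendix, so nothing is missing.
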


In particular, if \(\phi_1, \phi_2\in {\mathcal B}^{\infty}(\R^{n})\) (the smooth functions with
bounded derivatives) with 
\(\supp\,\phi_1\cap\supp\,\phi_2=\emptyset\) and \(a\in S^{\mu}(\R^n)\),
\(a(x,\xi)=a(\xi)\), then \(\phi_1\#a\#\phi_2\sim0\), and so, with $A := {\rm Op}(a)$,
$$
\phi_1 A \phi_2 = {\rm Op}(\phi_1){\rm Op}(a){\rm Op}(\phi_2)
$$
is smoothing. 

\begin{acknowledgement}
The authors wish to thank Heinz Siedentop for useful
discussions.  Support from the EU IHP network 
{\it Postdoctoral Training Program in Mathematical Analysis of
Large Quantum Systems},
contract no.\
HPRN-CT-2002-00277, and from the Danish Natural Science Research
Council, under the grant Mathematical Physics and Partial Differential
Equations, is gratefully acknowledged. 
T\O S wishes to thank the Department of Mathematics, LMU Munich, 
for its hospitality in the spring of 2007.
\end{acknowledgement}


\begin{thebibliography}{10}

\bibitem{AbraSte}
Milton Abramowitz and Irene~A. Stegun, \emph{Handbook of {M}athematical
  {F}unctions with {F}ormulas, {G}raphs, and {M}athematical {T}ables, {\S}9.6:
  {M}odified {B}essel {F}unctions ${I}$ and ${K}$}, pp. 374--377. 9th
printing. New York: Dover, 1972. 

\bibitem{Sobolev}
Robert~A. Adams and John~J.~F. Fournier, \emph{Sobolev {S}paces}, Academic
  Press, New York-London, 2003, Pure and Applied Mathematics, Vol. 140.

\bibitem{B}
Volker Bach, \emph{Error {B}ound for the {H}artree-{F}ock {E}ergy of {A}toms
  and {M}olecules}, Comm. Math. Phys. \textbf{147} (1992), no.~3, 527--548.

\bibitem{BFHS}
Jean-Marie Barbaroux, Walter Farkas, Bernard Helffer, and Heinz Siedentop,
  \emph{On the {H}artree-{F}ock {E}quations of the {E}lectron-{P}ositron
  {F}ield}, Comm. Math. Phys. \textbf{255} (2005), no.~1, 131--159.

\bibitem{D}
Ingrid Daubechies, \emph{An {U}ncertainty {P}rinciple for {F}ermions with
  {G}eneralized {K}inetic {E}nergy}, Comm. Math. Phys. \textbf{90} (1983),
  no.~4, 511--520. 

\bibitem{GradshteynRyzhik1980}
I.~S. Gradshteyn and I.~M. Ryzhik, \emph{Table of {I}ntegrals, {S}eries, and
  {P}roducts}, Academic Press [Harcourt Brace Jovanovich Publishers], New York,
  1980, Corrected and enlarged edition edited by Alan Jeffrey, Incorporating
  the fourth edition edited by Yu. V. Geronimus [Yu. V. Geronimus]\ and M. Yu.
  Tseytlin [M. Yu. Tse\u\i tlin], Translated from the Russian. 

\bibitem{Herbst}
Ira~W. Herbst, \emph{Spectral {T}heory of the {O}perator
  {$(p\sp{2}+m\sp{2})\sp{1/2}-Ze\sp{2}/r$}}, Comm. Math. Phys. \textbf{53}
  (1977), no.~3, 285--294. 

\bibitem{Ho2Classics}
Lars H{\"o}rmander, \emph{The {A}nalysis of {L}inear {P}artial {D}ifferential
  {O}perators {III}, pseudo-differential operators}, Classics in Mathematics,
  Springer, Berlin, 2007, Reprint of the 1994 edition. 

\bibitem{Kato}
Tosio Kato, \emph{Perturbation theory for linear operators}, Classics in
  Mathematics, Springer-Verlag, Berlin, 1995, Reprint of the 1980 edition.

\bibitem{LeBrisLions}
Claude Le~Bris and Pierre-Louis Lions, \emph{From atoms to crystals: a
  mathematical journey}, Bull. Amer. Math. Soc. (N.S.) \textbf{42} (2005),
  no.~3, 291--363 (electronic). 

\bibitem{Lieb}
Elliott~H. Lieb, \emph{Variational {P}rinciple for {M}any-{F}ermion {S}ystems},
  Phys. Rev. Lett. \textbf{46} (1981), no.~7, 457--459. 

\bibitem{LiebLoss}
Elliott~H. Lieb and Michael Loss, \emph{Analysis}, second ed., Graduate Studies
  in Mathematics, vol.~14, American Mathematical Society, Providence, RI, 2001.

\bibitem{LiebSimonHF}
Elliott~H. Lieb and Barry Simon, \emph{The {H}artree-{F}ock {T}heory for
  {C}oulomb {S}ystems}, Comm. Math. Phys. \textbf{53} (1977), no.~3, 185--194.

\bibitem{LiYau88}
Elliott~H. Lieb and Horng-Tzer Yau, \emph{The {S}tability and {I}nstability of
  {R}elativistic {M}atter}, Comm. Math. Phys. \textbf{118} (1988), no.~2,
  177--213. 

\bibitem{Lions}
P.-L. Lions, \emph{Solutions of {H}artree-{F}ock {E}quations for {C}oulomb
  {S}ystems}, Comm. Math. Phys. \textbf{109} (1987), no.~1, 33--97.

\bibitem{SS}
Thomas \O~stergaard S\o~rensen and Edgardo Stockmeyer, \emph{On the convergence
  of eigenfunctions to threshold energy states}, {\tt arXiv:math-ph/0604015};
  Proc. Roy. Soc. Edinburgh Sect. A (to appear) (2007).

\bibitem{RS1}
Michael Reed and Barry Simon,  \emph{Methods of {M}odern
  {M}athematical {P}hysics {I}. functional {A}nalysis}, second ed.,
  Academic Press Inc. [Harcourt Brace Jovanovich 
  Publishers], New York, 1980.

\bibitem{RS2}
\bysame, \emph{Methods of {M}odern {M}athematical
  {P}hysics {II}. {F}ourier {A}nalysis, {S}elf-adjointness}, Academic Press
  [Harcourt Brace Jovanovich Publishers], New York, 1975. 

\bibitem{RS3}
\bysame, \emph{Methods of {M}odern {M}athematical {P}hysics {III}. {S}cattering
  {T}heory}, Academic Press [Harcourt Brace Jovanovich Publishers], New York,
  1979. 

\bibitem{RS4}
\bysame, \emph{Methods of {M}odern {M}athematical {P}hysics {IV}. {A}nalysis of
  {O}perators}, Academic Press [Harcourt Brace Jovanovich Publishers], New
  York, 1978. 

\bibitem{SR}
Xavier Saint~Raymond, \emph{Elementary introduction to the theory of
  pseudodifferential operators}, Studies in Advanced Mathematics, CRC Press,
  Boca Raton, FL, 1991. 

\bibitem{simon}
Barry Simon, \emph{Quantum {M}echanics for {H}amiltonians {D}efined as
  {Q}uadratic {F}orms}, Princeton University Press, Princeton, N. J., 1971,
  Princeton Series in Physics. 

\bibitem{ReSol}
Jan~Philip Solovej, \emph{Proof of the ionization conjecture in a reduced
  {H}artree-{F}ock model}, Invent. Math. \textbf{104} (1991), no.~2, 291--311.

\bibitem{Tartar}
Luc Tartar, \emph{An {I}ntroduction to {S}obolev {S}paces and {I}nterpolation
  {S}paces}, Lecture Notes of the Unione Matematica Italiana, Springer-Verlag,
  Berlin Heidelberg, 2007.

\bibitem{Weder}
R.~A. Weder, \emph{Spectral {A}nalysis of {P}seudodifferential {O}perators}, J.
  Functional Analysis \textbf{20} (1975), no.~4, 319--337. 

\end{thebibliography}

\providecommand{\bysame}{\leavevmode\hbox to3em{\hrulefill}\thinspace}
\providecommand{\MR}{\relax\ifhmode\unskip\space\fi MR }
\providecommand{\MRhref}[2]{%
  \href{http://www.ams.org/mathscinet-getitem?mr=#1}{#2}
}
\providecommand{\href}[2]{#2}

\end{document}